\documentclass[12pt, draftclsnofoot, onecolumn]{IEEEtran}
\usepackage{amsfonts,amsmath,amssymb,algorithm,algorithmic,bm}
\usepackage{graphicx,epsfig,cite}
\usepackage{color}
\usepackage{subfigure}
\usepackage[percent]{overpic}
\usepackage{multirow}
\usepackage{autobreak}
\usepackage{hyperref}
\usepackage{amsthm}
\hypersetup{hidelinks}
\setlength{\abovecaptionskip}{-4pt}
\setlength{\belowcaptionskip}{0pt}
\setlength{\abovedisplayskip}{3pt}
\setlength{\belowdisplayskip}{3pt}

\newtheorem{theorem}{Theorem}

\newtheorem{prop}{Proposition}
\newtheorem{lemma}{Lemma}
\newtheorem{remark}{Remark}

\def \beqi{\begin{IEEEeqnarray}{rcl}\IEEEyesnumber}
\def \eeqi{\end{IEEEeqnarray}}

\def \bmat{\begin{bmatrix}}
\def \emat{\end{bmatrix}}

\newcommand{\dif}{\mathop{}\!\mathrm{d}}
\DeclareMathOperator{\Tr}{Tr}
\setlength{\textfloatsep}{5pt}
\begin{document}

\title{RIS-Enhanced Spectrum Sensing: How Many Reflecting Elements Are Required to Achieve a Detection Probability Close to $1$?}

\author{Jungang Ge, and Ying-Chang Liang,~\IEEEmembership{Fellow,~IEEE}
\thanks{This work has been submitted to the IEEE for possible publication. Copyright may be transferred without notice, after which this version may no longer be accessible.}
\thanks{J. Ge and Y.-C. Liang are with the Center for Intelligent Networking and Communications (CINC), University of Electronic Science and Technology of China (UESTC), Chengdu 611731, China (e-mail: {gejungang@std.uestc.edu.cn; liangyc@ieee.org}).}
}
\maketitle
\IEEEpubidadjcol
\vspace{-1cm}
\begin{abstract}
In this paper, we propose an reconfigurable intelligent surface (RIS) enhanced spectrum sensing system, in which the primary transmitter is equipped with single antenna, the secondary transmitter is equipped with multiple antennas, and the RIS is employed to improve the detection performance. Without loss of generality, we adopt the maximum eigenvalue detection approach, and propose a corresponding analytical framework based on large dimensional random matrix theory, to evaluate the detection probability in the asymptotic regime. Besides, the phase shift matrix of the RIS is designed with only the statistical channel state information (CSI), which is shown to be quite effective when the RIS-related channels are of Rician fading or line-of-sight (LoS). With the designed phase shift matrix, the asymptotic distributions of the equivalent channel gains are derived. Then, we provide the theoretical predictions about the number of reflecting elements (REs) required to achieve a detection probability close to $1$. Finally, we present the Monte-Carlo simulation results to evaluate the accuracy of the proposed asymptotic analytical framework for the detection probability and the validity of the theoretical predictions about the number of REs required to achieve a detection probability close to $1$. Moreover, the simulation results show that the proposed RIS-enhanced spectrum sensing system can substantially improve the detection performance.
\end{abstract}

\begin{IEEEkeywords}
Reconfigurable intelligent surface, spectrum sensing, statistical phase shift design.
\end{IEEEkeywords}

\section{Introduction} \label{sec:intro}



The ever-increasing spectrum demand to support ubiquitous communications will last in current and future wireless communication systems. Cognitive radio (CR), which is regarded as a key technology enabling efficient spectrum management \cite{liang2020dynamic}, can substantially improve the spectrum efficiency \cite{liang2008sensing}. In opportunistic CR, the secondary users are allowed to reuse the licensed spectrum allocated to the primary users when the primary users are inactive. Hence, the spectrum sensing technique, which helps the secondary users detect the activities of the primary users, becomes a fundamental technology of CR. As a consequence, spectrum sensing has attracted a lot of attention from both academia and industry in the past two decades.

To realize spectrum sensing of high quality under various practical constraints, many spectrum sensing algorithms have been proposed, such as energy detection \cite{sonnenschein1992radiometric, sahai2005maximum}, matched filtering (MF)-based methods \cite{sahai2005maximum, chen2007signature}, cyclostationary detection \cite{gardner1991exploitation, han2006spectral}, eigenvalue-based detection methods \cite{zeng2008maximum, zeng2009eigenvalue, zhang2010multi, bouallegue2017blind, yucek2009survey, zeng2010review, awin2019blind}. The detection performance can be improved by designing complex test statistics, and a recent line of works try to find optimal test statistics by exploiting the deep learning technique \cite{liu2019deep, gao2019deep}.
On the other hand, there are also many results for analyzing the detection performance of some spectrum sensing algorithms \cite{bianchi2011performance, jin2012performance, wei2012spectrum, wei2014multi, sedighi2015performance, sedighi2016eigenvalue}. In particular, \cite{bianchi2011performance} provides the analytical results for the error exponent of the generalized likelihood ratio test (GLRT) with the knowledge of large dimensional random matrix theory. The mathematical expressions of the false alarm probability and the detection probability are derived for the covariance-based detector \cite{jin2012performance}, the Hadamard ratio detector \cite{sedighi2015performance}, and Wilks' detector \cite{wei2014multi}.
Moreover, \cite{sedighi2016eigenvalue} proposes two eigenvalue-based detectors with invariant constant false alarm rate, using higher order moments of the eigenvalues of the sample covariance matrix. Besides, the closed-form formulas of the false alarm and detection probabilities of these two detectors are obtained with moment-based approximations. From the aforementioned results, it can be observed that the number of signal samples required to achieve a high detection probability is much larger than the number of sensing antennas when the sensing signal-to-noise ratio (SNR) is extremely low. Taking \cite{zeng2008maximum} as an example, to achieve a detection probability close to $1$ at a SNR as low as $-18dB$, the number of signal samples should be more than $1.6\times10^{5}$ when the number of sensing antennas is $8$ (see Fig. $5$ in \cite{zeng2008maximum}). In addition, as pointed out in \cite{liang2008sensing}, using more signal samples for sensing means better detection performance but less time resource for the secondary data transmission. Therefore, how to reduce the number of signal samples required to achieve a high detection probability is quite significant to improve the data transmission rate of the secondary users.

In conventional spectrum sensing approaches, many complex test statistics are designed to improve the detection performance since the sensing algorithms have to be adapted to the uncontrollable wireless environment. Recently, the emerged reconfigurable intelligent surface (RIS) provides us an efficient way to make the wireless environment programmable \cite{liang2019large, wu2019towards, gong2020toward}. Besides, the recent advances in micro-materials show that the phase shifts of RIS can adapt to the changes of the wireless environment in real-time, thereby realize the reconfiguration of the wireless environment \cite{cui2014coding}. With the instantaneous channel state information (CSI), the RIS can be employed in various communication scenarios \cite{chen2019intelligent, guo2020weighted, huang2019reconfigurable, yuan2020intelligent}. In particular, \cite{chen2019intelligent} proposes to exploit the RIS to realize physical layer security via solving a minimum-secrecy-rate maximization problem. \cite{huang2019reconfigurable} considers a resource allocation problem to maximize the energy efficiency in an RIS-assisted multiple-input multiple-output (MIMO) downlink communication system. In \cite{guo2020weighted}, it is shown that the RIS can be utilized to enhance the weighted sum-rate of the multiuser multiple-input single-output (MISO) downlink communication system. Besides, \cite{yuan2020intelligent} proposes an RIS-assisted cognitive radio system, in which RIS is employed to improve the achievable rates of the secondary users. However, it is quite hard to acquire accurate instantaneous CSI of the RIS-related channels separately in practice because of the difficulty of the channel estimation \cite{chen2019channel}. Therefore, many robust approaches are proposed to deal with the imperfect CSI due to the channel estimation errors \cite{guo2020weighted, yuan2020intelligent, zhou2020framework, zhou2020robust}. Besides, the RIS is usually installed on the exterior walls of buildings, high towers, etc., to effectively reflect the electromagnetic (EM) wave in the air. Consequently, the RIS-related channels are likely to have constant line-of-sight (LoS) components. Hence, there are also many works studying how to design the RIS-assisted communication systems with only the statistical CSI \cite{kammoun2020asymptotic, zhang2020transmitter, zhang2021large}.


In this paper, we propose an RIS-enhanced spectrum sensing system, in which the primary transmitter (PT) is equipped with signal antenna and the secondary transmitter (ST) performs spectrum sensing with multiple antennas. The RIS with multiple REs, whose phase shifts are designed with only the statistical CSI, is employed to improve the detection performance by enhancing the average sensing SNR. Our main contributions are summarized as follows.

\begin{itemize}
  \item To the best of our knowledge, the proposed RIS-enhanced spectrum sensing system is one of the first works \cite{li2020irs} that employ RIS in the spectrum sensing systems. This work is particularly applicable to IEEE 802.22 scenarios in which primary TV tower and spectrum sensor are static \cite{chen2007signature}. By introducing the RIS with multiple REs into the conventional spectrum sensing systems, the detection probability can be substantially improved with the same number of signal samples.
  \item Without loss of generality, we adopt the maximum eigenvalue detection (MED) approach \cite{zeng2008maximum} to demonstrate the positive impact of the RIS on the detection performance. Correspondingly, with the knowledge of the spiked model from random matrix theory, we intuitively propose an analytical framework for evaluating the detection probability in the asymptotic regime, where both the number of antennas at ST and that of the signal samples go to infinity with a limit ratio.
  \item For the RIS-enhanced spectrum sensing system, we propose to design the phase shift matrix of the RIS with only the statistical CSI. We will see that the statistical design scheme is quite effective when the RIS-related channels are of Rician fading or LoS. In addition, we derive the asymptotic distributions of the equivalent channel gain under three possible conditions of the RIS-related channels, i.e., LoS, Rayleigh fading, and Rician fading.
  \item According to the proposed analytical framework for evaluating the detection probability and the derived asymptotic distributions of the equivalent channel gains, we provide the theoretical predictions about the number of REs required to achieve a detection probability close to $1$. The Monte-Carlo simulation results justify the validity of the theoretical predictions.
\end{itemize}

The remainder of this paper is organized as follows. Section \ref{sec:RIS4SS} illustrates the RIS-enhanced spectrum sensing system, and presents details of the maximum eigenvalue detection, including the proposed asymptotic analytical framework for evaluating the detection probability. Section \ref{sec:CGA} introduces the statistical phase shift design, and derives the asymptotic distributions of the equivalent channel gains under different RIS-related channel conditions. Section \ref{sec:perfectdetection} analyzes the necessary condition and the sufficient condition to achieve a detection probability close to $1$, and provides the theoretical predictions about the number of REs required for meeting these two conditions, respectively. In Section \ref{sec:sim}, we provide the Monte-Carlo simulation results to evaluate the validity of the analytical results. Besides, we investigate the impact of the number of signal samples and the channel characteristics on the number of REs required to achieve a high detection probability. Finally, Section \ref{sec:conclu} concludes this paper.

\textbf{Notations:} The notations in this paper are as follows. Bold uppercase and lowercase symbols (e.g., $\mathbf{x}$ and $\mathbf{X}$) are used to represent the column vectors and matrices, respectively. $\mathbf{X}^H$ denotes the Hermitian of $\mathbf{X}$, and $x^{\dagger}$ denotes the conjugate of $x$. In addition, $\|\cdot\|$, $\mathbb{E}[\cdot]$, $\Tr(\cdot)$ denote the Euclidean norm operator, the expectation operator and the trace operator, respectively. Further,
$\mathbf{I}_{M}$ denotes an $M$-dimensional identity matrix. $\mathbf{x}\sim\mathcal{CN}(\mathbf{0}, \mathbf{\Sigma})$ means that $\mathbf{x}$ is a complex Gaussian random vector with zero mean and covariance matrix $\mathbf{\Sigma}$. ${\rm diag}([x_1, \cdots, x_M])$ denotes the diagonal matrix with $x_1, \cdots, x_M$ the diagonal elements.

\section{RIS-enhanced Spectrum Sensing}\label{sec:RIS4SS}

\subsection{Signal Model} \label{subsec:sigmodel}

\begin{figure}[!t]
\begin{center}
\epsfxsize=0.25\textwidth \leavevmode
\epsffile{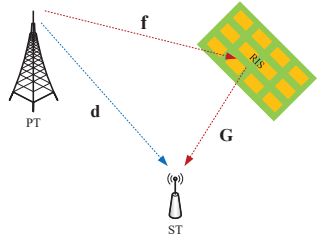}
\caption{An RIS-enhanced spectrum sensing system.}\label{fig:sysmodel}
\end{center}
\end{figure}

As shown in Fig. \ref{fig:sysmodel}, we consider a RIS-enhanced spectrum sensing system, where the PT is equipped with single antenna and the ST is equipped with $N$ antennas. Besides, the RIS has $M$ passive REs to reflect the incident EM wave. Here, we use $\mathbf{d}\in\mathbb{C}^{N}$, $\mathbf{f}\in\mathbb{C}^{M}$ and $\mathbf{G}\in\mathbb{C}^{N\times M}$ to denote the channels from PT to ST, from PT to RIS, and from RIS to ST, respectively. During the sensing interval, the ST can collect $n$ signal samples to determine whether the PT is active. Obviously, there are two hypotheses: $\mathcal{H}_0$, PT is inactive; and $\mathcal{H}_1$, PT is active. Thus, the received signal samples at the ST under $\mathcal{H}_0$ and $\mathcal{H}_1$ can be respectively written as
\begin{align}
  &\mathcal{H}_0: \mathbf{x}_k=\mathbf{u}_k, \\
  &\mathcal{H}_1: \mathbf{x}_k=(\mathbf{d}+\mathbf{G}\mathbf{\Phi}\mathbf{f})s_k + \mathbf{u}_k,
\end{align}
where $\mathbf{x}_k$ denotes the $k$-th sample, $s_k$ is the $k$-th transmitted symbol of PT with $s_k\sim\mathcal{CN}(0,\sigma_s^2)$; $\mathbf{u}_k\sim\mathcal{CN}(\mathbf{0}, \sigma_u^2\mathbf{I}_N)$ denotes the additive white Gaussian noise (AWGN) at ST; $\mathbf{\Phi}={\rm diag}([e^{j\phi_1},$ $\cdots, e^{j\phi_M}])$ denotes the phase shift matrix of the RIS, whose elements are considered to have unit-modulus \cite{huang2019reconfigurable}.

\subsection{Maximum Eigenvalue Detection}\label{subsec:MED}
In this paper, we consider the maximum eigenvalue detection (MED) \cite{zeng2008maximum}, to evaluate the impact of RIS on the detection performance. The MED method originates from the difference of the {\it population covariance matrices} under the two hypotheses, which are defined as
\begin{align}
&\mathcal{H}_0:\mathbf{R}_{\mathbf{x}\mathbf{x}}^{0}=\mathbb{E}[\mathbf{x}_k\mathbf{x}_k^H]=\mathbb{E}[\mathbf{u}_k\mathbf{u}_k^H]=\sigma_u^2\mathbf{I}_N, \label{eq:PCMH0}\\
&\mathcal{H}_1:\mathbf{R}_{\mathbf{x}\mathbf{x}}^{1}=\mathbb{E}[\mathbf{x}_k\mathbf{x}_k^H]
=(\mathbf{d}+\mathbf{G}\mathbf{\Phi}\mathbf{f})(\mathbf{d}+\mathbf{G}\mathbf{\Phi}\mathbf{f})^H\sigma_s^2 +\sigma_u^2\mathbf{I}_N.\label{eq:PCMH1}
\end{align}
Obviously, $(\mathbf{d}+\mathbf{G}\mathbf{\Phi}\mathbf{f})(\mathbf{d}+\mathbf{G}\mathbf{\Phi}\mathbf{f})^H$ is a rank-$1$ matrix, $\mathbf{R}_{\mathbf{x}\mathbf{x}}^{1}$ has only one eigenvalue that not equals to $\sigma_u^2$, namely, the largest eigenvalue of $\mathbf{R}_{\mathbf{x}\mathbf{x}}^{1}$, which equals $\|\mathbf{d}+\mathbf{G}\mathbf{\Phi}\mathbf{f}\|^2\sigma_s^2+\sigma_u^2$. Hence, the active PT can be detected by the largest eigenvalue of the population covariance matrix.
In practice, we only have access to the {\it sample covariance matrix}, i.e.,
\begin{equation}\label{eq:SCM}
\hat{\mathbf{R}}_{\mathbf{x}\mathbf{x}}=\frac{1}{n}\sum_{k=1}^{n}\mathbf{x}_k\mathbf{x}_k^H.
\end{equation}
When $N$ is fixed and $n$ goes to infinity, $\hat{\mathbf{R}}_{\mathbf{x}\mathbf{x}}$ is a quite accurate approximation of $\mathbf{R}_{\mathbf{x}\mathbf{x}}$, therefore the eigenvalues of $\hat{\mathbf{R}}_{\mathbf{x}\mathbf{x}}$ approximate that of $\mathbf{R}_{\mathbf{x}\mathbf{x}}$ well. However, when $n$ is not sufficiently large, $\hat{\mathbf{R}}_{\mathbf{x}\mathbf{x}}$ deviates significantly from $\mathbf{R}_{\mathbf{x}\mathbf{x}}$. The eigenvalues of $\hat{\mathbf{R}}_{\mathbf{x}\mathbf{x}}$ are actually random variables, which can be characterized with the results from random matrix theory \cite{ge2021large}.


In spectrum sensing, we mainly concern about two probabilities, namely, the false alarm probability $P_{fa}$ and the detection probability $P_{d}$. Denoting the two possible detection results, i.e, PT is inactive and PT is active, by $\mathcal{D}_0$ and $\mathcal{D}_1$, the two probabilities are respectively defined as
\begin{equation}\label{eq:Pfa}
P_{fa} = P(\mathcal{D}_1|\mathcal{H}_0),\ P_{d} = P(\mathcal{D}_1|\mathcal{H}_1).
\end{equation}
Denoting $\lambda_{\max}$ the largest eigenvalue of the sample covariance matrices, the test statistic of MED is defined as
\begin{equation}\label{eq:teststatistic}
T\triangleq \frac{\lambda_{\max}}{\sigma_u^2},
\end{equation}
where $\sigma_u^2$ is used to normalize the largest eigenvalue for simplicity's sake.

\subsection{Detection Threshold Determination}\label{subsec:detectionthreshold}
The test statistic, $T$, can be regarded as the largest eigenvalue of $\hat{\mathbf{R}}_{\mathbf{x}\mathbf{x}}$ when $\sigma_u^2$ is set to $1$.
According to Neyman-Pearson lemma, the detection threshold $\gamma$ can be determined with a given constraint on $P_{fa}$, say, $P_{fa}\leq \alpha$. Specifically, $\gamma$ can be calculated by solving
\begin{equation}\label{eq:calgamma}
P(T>\gamma|\mathcal{H}_0)=\alpha.
\end{equation}
Noting that the sample covariance matrix under $\mathcal{H}_0$ is actually a Wishart matrix \cite{ge2021large}, the asymptotic distribution of the largest eigenvalue is given as follows.

\begin{prop}\label{prop:pdflargestWishart}
Let $\mathbf{X}_N\in\mathbb{C}^{N\times n}$ be a random matrix whose entries are independent and identically distributed ({\it i.i.d.}) Gaussian random variables with zero mean and variance $1/n$. Denoting the largest eigenvalue of the Wishart matrix $\mathbf{X}_N\mathbf{X}_N^H$ by $\lambda_{N}^{+}$, as $N,n\to\infty$ with $c=\lim N/n<1$, we have
\begin{equation}\label{eq:pdflargestWishart}
N^{\frac{2}{3}}\frac{\lambda_{N}^{+}-(1+\sqrt{c})^2}{(1+\sqrt{c})^{\frac{4}{3}}\sqrt{c}}\xrightarrow[N,n\to\infty]{\mathcal{D}} X\sim F_2,
\end{equation}
where $\xrightarrow[N,n\to\infty]{\mathcal{D}}$ means convergence in distribution as $N,n\to\infty$ and $F_2$ is the {\it Tracy-Widom} distribution of order $2$ \cite{ge2021large}.
\end{prop}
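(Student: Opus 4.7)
The strategy is to recognize $\mathbf{X}_N\mathbf{X}_N^H$ as a complex Wishart matrix (Laguerre Unitary Ensemble), exploit its determinantal structure, and run an Airy-kernel asymptotic analysis at the soft edge. First, I would rescale to $n\mathbf{X}_N\mathbf{X}_N^H$, whose eigenvalues have the standard LUE joint density proportional to $\prod_{i<j}(\lambda_i-\lambda_j)^2\prod_{i=1}^{N}\lambda_i^{\,n-N}e^{-\lambda_i}$. The Vandermonde-squared structure implies that the eigenvalue point process is determinantal with correlation kernel $K_N(x,y)=\sum_{k=0}^{N-1}\psi_k(x)\psi_k(y)$, where $\psi_k$ are the orthonormal Laguerre functions with parameter $\alpha=n-N$. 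Consequently the CDF of the top eigenvalue is a Fredholm determinant, $P(\lambda_N^{+}\leq t)=\det(I-K_N)_{L^2([t,\infty))}$, and the Christoffel--Darboux identity reduces $K_N$ to a two-term expression in $\psi_{N-1},\psi_N$.

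The core technical step is the asymptotics of $\psi_{N-1},\psi_N$ near the soft edge. Introducing the rescaling
\begin{equation}
\lambda \;=\; (1+\sqrt{c})^2 + (1+\sqrt{c})^{4/3}\sqrt{c}\, N^{-2/3}\,\xi,
\end{equation}
I would show that $N^{-2/3}(1+\sqrt{c})^{4/3}\sqrt{c}\,K_N(\lambda(\xi),\lambda(\eta))$ converges pointwise to the Airy kernel
\begin{equation}
K_{\mathrm{Ai}}(\xi,\eta) \;=\; \frac{\mathrm{Ai}(\xi)\mathrm{Ai}'(\eta)-\mathrm{Ai}'(\xi)\mathrm{Ai}(\eta)}{\xi-\eta}.
\end{equation}
This can be carried out either via the Deift--Kriecherbauer--McLaughlin--Venakides--Zhou Riemann--Hilbert analysis for orthogonal polynomials with the scaled Laguerre weight, or via a direct steepest-descent analysis of the contour-integral representation of $\psi_N$. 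In both approaches the key phenomenon is the coalescence of two saddles of the phase function at $(1+\sqrt{c})^2$, which forces the $N^{2/3}$ zoom factor and produces the Airy function in the limit; the prefactor $(1+\sqrt{c})^{4/3}\sqrt{c}$ is read off from the third derivative of the phase at the coalescing saddle.

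Finally, I would upgrade pointwise kernel convergence to trace-norm convergence on $L^2([s,\infty))$ by invoking the super-exponential decay of Laguerre functions past the soft edge (which transfers to exponential decay of $K_N$ in the rescaled variable), and use continuity of the Fredholm determinant to conclude
\begin{equation}
P\!\left(N^{2/3}\,\frac{\lambda_N^{+}-(1+\sqrt{c})^2}{(1+\sqrt{c})^{4/3}\sqrt{c}}\leq s\right)\;\longrightarrow\;\det(I-K_{\mathrm{Ai}})_{L^2([s,\infty))} \;=\; F_2(s).
\end{equation}
The main obstacle is the saddle-point/Riemann--Hilbert step at the soft edge: extracting the three specific constants---the centering $(1+\sqrt{c})^2$, the exponent $2/3$, and the width $(1+\sqrt{c})^{4/3}\sqrt{c}$---from the local coalescence of saddles is delicate, as is the uniform tail control on $\xi\to+\infty$ required to pass the limit inside the Fredholm determinant over the unbounded interval $[s,\infty)$.
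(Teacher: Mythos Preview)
Your proof sketch is mathematically sound and follows the classical route to the Tracy--Widom law for the top eigenvalue of the Laguerre Unitary Ensemble: determinantal structure of the eigenvalue process, Christoffel--Darboux reduction, Plancherel--Rotach/Riemann--Hilbert asymptotics of Laguerre functions at the soft edge, and trace-norm convergence of the rescaled kernel to the Airy kernel. This is essentially the Johansson/Johnstone proof.

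However, the paper does not prove this proposition at all. Proposition~\ref{prop:pdflargestWishart} is stated as a known result from random matrix theory and is simply attributed to the reference \texttt{[ge2021large]}; no argument is given in the paper itself. It functions as imported background, much like Proposition~\ref{prop:singlespikedmodel} on the spiked model. So there is no ``paper's own proof'' to compare against: you have supplied a genuine (and standard) proof outline where the authors supplied only a citation.

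If your goal is to match the paper, a one-line reference to the Tracy--Widom literature suffices. If your goal is to actually establish the result, your plan is correct; the only caveat is that you should be explicit that the parameter $\alpha=n-N$ grows with $N$ (since $c<1$ is fixed), so the relevant Plancherel--Rotach regime is the one with $\alpha/N\to (1-c)/c$, which is precisely what the Riemann--Hilbert analysis of Deift et~al.\ or the steepest-descent computation of Johansson handles.
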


With Proposition \ref{prop:pdflargestWishart}, the detection threshold can be determined by
\begin{equation}\label{eq:gamma}
\gamma = N^{-\frac{2}{3}}(1+\sqrt{c})^{\frac{4}{3}}\sqrt{c}F_{2}^{-1}\left(1-\alpha\right)+(1+\sqrt{c})^2,
\end{equation}
where the quantile function, i.e., $F_{2}^{-1}(\cdot)$, can be calculated with the tools provided by \cite{RMTstat}.

\subsection{Detection Probability Evaluation}\label{subsec:detectionproba}

To evaluate the detection probability, we can resort to the {\it single spiked model} in random matrix theory. The single spiked model can be briefly described as follows \cite{ge2021large}.
\begin{prop}\label{prop:singlespikedmodel}
Let $\mathbf{T}_N$ be a fixed $N\times N$ non-negative definite Hermitian matrix, $\mathbf{X}_{N}\in\mathbb{C}^{N\times n}$ be a random matrix whose entries $\mathbf{X}_{N, ij}$ are {\it i.i.d.} complex random variables such that
\begin{equation*}
\mathbb{E}(\mathbf{X}_{N,11})=0,\ \mathbb{E}(|\mathbf{X}_{N,11}|^2)=1,\ {\rm and}\ \mathbb{E}(|\mathbf{X}_{N,11}|^4)<\infty.
\end{equation*}
Let $\mathbf{B}_N=\frac{1}{n}\mathbf{T}_N^{\frac{1}{2}}\mathbf{X}_N\mathbf{X}_N^H\mathbf{T}_N^{\frac{1}{2}}$ denote the sample covariance matrix where $\mathbf{T}_N^{\frac{1}{2}}$ is a Hermitian square root of $\mathbf{T}_N$, with eigen-decomposition, we have $\mathbf{S}\triangleq\mathbf{U}_{\mathbf{B}}\mathbf{B}_N\mathbf{U}_{\mathbf{B}}^{-1}={\rm diag}(s_1^{(N)},s_2^{(N)},\cdots, s_N^{(N)})$.
For definiteness, we order the eigenvalues as $s_1^{(N)}\geq s_2^{(N)}\geq \cdots\geq s_N^{(N)}$.
Similarly, for some unitary matrix $\mathbf{U}_{\mathbf{T}}$, we have $\mathbf{\Sigma}\triangleq\mathbf{U}_{\mathbf{T}}\mathbf{T}_N\mathbf{U}_{\mathbf{T}}^{-1}={\rm diag}(\tau, 1, \cdots, 1).$
With $N, n\to\infty$ and $\lim N/n\to c$,
\begin{itemize}
  \item when $\tau<\sqrt{c}+1$,
  $$N^{\frac{2}{3}}\frac{s_1^{(N)}-(1+\sqrt{c})^2}{(1+\sqrt{c})^{\frac{4}{3}}\sqrt{c}}\xrightarrow[N,n\to\infty]{\mathcal{D}}X\sim F_2,$$
  where $F_2$ denotes the Tracy-Widom distribution of order $2$;
  \item when $\tau>\sqrt{c}+1$,
  $$s_1^{(N)}\xrightarrow[N,n\to\infty]{\mathcal{D}}\mathcal{N}(\mu_s, v_s),$$
  where $$\mu_s = \tau+\frac{c\tau}{\tau-1},\  v_s=\frac{\tau^2}{n}\left(1-\frac{c}{(\tau-1)^2}\right).$$
\end{itemize}
\end{prop}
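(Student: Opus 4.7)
The plan is to treat the statement as the Baik--Ben Arous--Péché (BBP) phase transition for a rank-one multiplicative perturbation of a white Wishart model, and to split the argument into a subcritical regime $\tau\le 1+\sqrt{c}$ and a supercritical regime $\tau>1+\sqrt{c}$. In both regimes the opening move is the same: write $\mathbf{T}_N = \mathbf{I}_N + (\tau-1)\mathbf{v}\mathbf{v}^H$ for some unit vector $\mathbf{v}$, rotate coordinates so that $\mathbf{v}=\mathbf{e}_1$ (the rotation is free once we reduce to complex Gaussian entries and then recover the general fourth-moment hypothesis via a universality input at the end), and view $\mathbf{B}_N$ as the null Wishart $\mathbf{W}_N=\frac{1}{n}\mathbf{X}_N\mathbf{X}_N^H$ deformed multiplicatively in the single direction $\mathbf{e}_1$.

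For the subcritical case, the strategy is to show that this deformation is too weak to push the top eigenvalue outside the Marchenko--Pastur bulk. Combining a Weyl/interlacing bound with the secular equation described below shows that no root lies above $(1+\sqrt{c})^2$ when $\tau<1+\sqrt{c}$, and that at $\tau=1+\sqrt{c}$ the candidate outlier collides with the soft edge. Edge rigidity for the null Wishart then pins $s_1^{(N)}$ to $(1+\sqrt{c})^2$ on the scale $N^{-2/3}$, and the Tracy--Widom limit of order $2$ is inherited from Proposition~\ref{prop:pdflargestWishart} together with the standard edge-universality statement that a finite-rank deformation not producing an outlier leaves the top edge statistics unchanged; I would invoke that statement rather than reprove it.

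For the supercritical case, I would derive the outlier from a scalar secular equation. With high probability $\lambda>\lambda_1(\mathbf{W}_N)$, so applying the Weinstein--Aronszajn / Sylvester determinant identity to $\det(\lambda\mathbf{I}-\mathbf{B}_N)=0$ reduces it to
\begin{equation*}
1 + (\tau-1)\,\mathbf{e}_1^H \bigl(\lambda\mathbf{I}_N-\mathbf{W}_N\bigr)^{-1}\!\mathbf{e}_1 \cdot h_N(\lambda) \;=\; 0,
\end{equation*}
where $h_N(\lambda)$ is an explicit scalar factor. Passing to the deterministic equivalent, the random diagonal resolvent entry concentrates at the companion Stieltjes transform $\underline{m}(\lambda)$ of the Marchenko--Pastur law, so the equation becomes a one-line algebraic equation in $\underline{m}(\lambda)$. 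Solving it with the explicit form of $\underline{m}$ yields $\mu_s = \tau + c\tau/(\tau-1)$, which is real and strictly greater than $(1+\sqrt{c})^2$ precisely when $\tau>1+\sqrt{c}$.

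The Gaussian fluctuation is the last and, in my view, the main obstacle. I would linearise the secular equation at $\mu_s$: writing $\lambda=\mu_s+\delta$ and Taylor-expanding, $\delta$ is to leading order proportional to the fluctuation of the quadratic form $\mathbf{e}_1^H(\mu_s\mathbf{I}-\mathbf{W}_N)^{-1}\mathbf{e}_1$ around its deterministic equivalent $\underline{m}(\mu_s)$. A CLT for such diagonal resolvent entries of a Wishart matrix then delivers $\sqrt{n}\,(s_1^{(N)}-\mu_s)\Rightarrow\mathcal{N}(0,\,n v_s)$; reconciling the variance with $\tau^2\bigl(1-c/(\tau-1)^2\bigr)$ is algebra involving $\underline{m}'(\mu_s)$. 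Verifying this variance exactly, and controlling the deterministic-equivalent error uniformly in $\lambda$ in a neighbourhood of $\mu_s$, is where the genuine analytic work sits and where I would rely on the existing quadratic-form CLTs for Wishart resolvents rather than attempt to reprove them from scratch.
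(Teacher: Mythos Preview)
The paper does not prove Proposition~\ref{prop:singlespikedmodel} at all: it is stated as a background result from random matrix theory, with a citation to \cite{ge2021large}, and is immediately specialised to give Theorem~\ref{thm:spike}. There is therefore no ``paper's own proof'' to compare against; the authors treat the BBP transition as an off-the-shelf input.

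Your outline is a faithful sketch of the standard Baik--Ben Arous--P\'ech\'e / Bai--Yao route, and the overall structure (interlacing plus edge universality below threshold; secular equation plus resolvent CLT above threshold) is correct. Two technical points are worth flagging if you intend to turn this into an actual proof. First, the rotation step is not free under the stated hypotheses: for i.i.d.\ but non-Gaussian entries, $\mathbf{U}\mathbf{X}_N$ need not have i.i.d.\ entries, so you must either keep the direction $\mathbf{v}$ general and invoke an isotropic local law for $\mathbf{v}^H(\lambda\mathbf{I}-\mathbf{W}_N)^{-1}\mathbf{v}$, or prove the Gaussian case and then appeal to a four-moment/Lindeberg universality argument (as you hint). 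Second, your secular equation is written with an unspecified factor $h_N(\lambda)$; working it out via $\det(\lambda\mathbf{I}-\mathbf{W}_N\mathbf{T}_N)=0$ and $(\lambda\mathbf{I}-\mathbf{W}_N)^{-1}\mathbf{W}_N=-\mathbf{I}+\lambda(\lambda\mathbf{I}-\mathbf{W}_N)^{-1}$ gives the clean form $\tau=(\tau-1)\lambda\,\mathbf{v}^H(\lambda\mathbf{I}-\mathbf{W}_N)^{-1}\mathbf{v}$, from which $\mu_s$ and (after differentiation) $v_s$ drop out. These are refinements rather than gaps; the strategy is sound, but be aware that the paper simply quotes the result rather than arguing any of this.
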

According to Proposition \ref{prop:singlespikedmodel}, the sample covariance matrix under $\mathcal{H}_1$ can be modeled with the single spiked model. More specifically, when $\sigma_u^2=1$, the distribution of $T$ under $\mathcal{H}_1$ can be characterized with the single spiked model with $\tau=\|\mathbf{d}+\mathbf{G}\mathbf{\Phi}\mathbf{f}\|^2\sigma_s^2+1$. Further, let $\sigma_s^2=1$ and $g\triangleq\|\mathbf{d}+\mathbf{G}\mathbf{\Phi}\mathbf{f}\|^2$, we have the following theorem.

\begin{theorem}\label{thm:spike}
Consider a single spiked model where $\tau = g+1$, we have the following conclusions.
\begin{itemize}
  \item When $g+1<\sqrt{c}+1$, i.e., $g<\sqrt{c}$,
  \begin{equation}\label{eq:gdfleq}
    N^{\frac{2}{3}}\frac{T-(1+\sqrt{c})^2}{(1+\sqrt{c})^{\frac{4}{3}}\sqrt{c}}\xrightarrow[N,n\to\infty]{\mathcal{D}}X\sim F_2,
  \end{equation}
  where $F_2$ denotes the Tracy-Widom distribution of order $2$.
  \item When $g+1>\sqrt{c}+1$, i.e., $g>\sqrt{c}$,
  \begin{equation}\label{eq:gdfgeq}
    T\xrightarrow[N,n\to\infty]{\mathcal{D}}X\sim\mathcal{N}\left(\mu_T(g), v_T(g)\right),
  \end{equation}
  where
  \begin{equation}\label{eq:mTspike}
    \mu_T(g) = g+1+c+\frac{c}{g},\ v_T=\frac{(g+1)^2}{n}\left(1-\frac{c}{g^2}\right).
  \end{equation}
\end{itemize}
\end{theorem}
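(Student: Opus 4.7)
The plan is to recast Theorem~\ref{thm:spike} as a direct specialization of Proposition~\ref{prop:singlespikedmodel} with spike strength $\tau=g+1$. First I would fix $\sigma_u^2=\sigma_s^2=1$ (as stipulated just before the theorem) and write $\mathbf{h}\triangleq\mathbf{d}+\mathbf{G}\mathbf{\Phi}\mathbf{f}$, so that the $\mathcal{H}_1$ population covariance in \eqref{eq:PCMH1} collapses to $\mathbf{R}_{\mathbf{x}\mathbf{x}}^{1}=\mathbf{h}\mathbf{h}^H+\mathbf{I}_N$. Because $\mathbf{h}\mathbf{h}^H$ has rank one with nonzero eigenvalue $\|\mathbf{h}\|^2=g$, the spectrum of $\mathbf{R}_{\mathbf{x}\mathbf{x}}^{1}$ is $(g+1,1,\ldots,1)$, matching the single-spike pattern $(\tau,1,\ldots,1)$ demanded of $\mathbf{T}_N$ in Proposition~\ref{prop:singlespikedmodel}. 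The identification $\mathbf{T}_N=\mathbf{R}_{\mathbf{x}\mathbf{x}}^{1}$, hence $\tau=g+1$, is therefore immediate.

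The next step is to put the sample covariance into the form $\mathbf{B}_N=\tfrac{1}{n}\mathbf{T}_N^{1/2}\mathbf{X}_N\mathbf{X}_N^H\mathbf{T}_N^{1/2}$. Writing $\mathbf{x}_k=\mathbf{T}_N^{1/2}\mathbf{z}_k$ with $\mathbf{z}_k\stackrel{\mathrm{i.i.d.}}{\sim}\mathcal{CN}(\mathbf{0},\mathbf{I}_N)$ and stacking $\mathbf{X}_N=[\mathbf{z}_1,\ldots,\mathbf{z}_n]$, the entries of $\mathbf{X}_N$ trivially satisfy the zero-mean, unit-variance, finite fourth-moment assumptions of Proposition~\ref{prop:singlespikedmodel}, and the sample covariance $\hat{\mathbf{R}}_{\mathbf{x}\mathbf{x}}$ in \eqref{eq:SCM} coincides exactly with $\mathbf{B}_N$. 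In particular, the test statistic $T=\lambda_{\max}(\hat{\mathbf{R}}_{\mathbf{x}\mathbf{x}})$ equals the top ordered eigenvalue $s_1^{(N)}$ in the notation of Proposition~\ref{prop:singlespikedmodel}.

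Once this identification is in place, both bullets follow by algebraic substitution. For $g<\sqrt{c}$ the subcritical condition $\tau<1+\sqrt{c}$ holds, so the first bullet of Proposition~\ref{prop:singlespikedmodel} reproduces \eqref{eq:gdfleq} verbatim. For $g>\sqrt{c}$ we have $\tau>1+\sqrt{c}$, and I would substitute $\tau=g+1$, $\tau-1=g$ into $\mu_s=\tau+c\tau/(\tau-1)$ and $v_s=(\tau^2/n)\bigl(1-c/(\tau-1)^2\bigr)$ to obtain $\mu_T(g)=g+1+c+c/g$ and $v_T(g)=\bigl((g+1)^2/n\bigr)\bigl(1-c/g^2\bigr)$, which matches \eqref{eq:mTspike}.

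The main thing to watch, rather than a serious obstacle, is the modeling step itself: I must verify that the joint Gaussianity of $s_k$ and $\mathbf{u}_k$ under $\mathcal{H}_1$ genuinely licenses the factorization $\mathbf{x}_k=\mathbf{T}_N^{1/2}\mathbf{z}_k$ with $\mathbf{z}_k$ meeting the moment hypotheses of Proposition~\ref{prop:singlespikedmodel}, and that the decreasing eigenvalue ordering there places $T$ in position one (which it does, since $\tau=g+1>1$ in the supercritical regime pushes the top eigenvalue above the bulk). The knife-edge case $g=\sqrt{c}$ is excluded by both bullets, but it has Lebesgue measure zero in $g$ and does not affect the subsequent detection-probability calculations, so I would simply flag the caveat and proceed.
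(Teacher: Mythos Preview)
Your proposal is correct and matches the paper's approach exactly: the paper presents Theorem~\ref{thm:spike} as an immediate specialization of Proposition~\ref{prop:singlespikedmodel} with $\tau=g+1$ (stated in the paragraph preceding the theorem, with no separate proof), and your write-up simply fills in the identification $\mathbf{T}_N=\mathbf{R}_{\mathbf{x}\mathbf{x}}^{1}$ and the substitution $\tau-1=g$ in $\mu_s,v_s$. If anything, your version is more detailed than the paper's, which leaves the algebraic verification of \eqref{eq:mTspike} to the reader.
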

\begin{remark}\label{rmk:spikedetection}
If the equivalent channel gain is very weak such that $g<\sqrt{c}$, the asymptotic distributions of the largest sample eigenvalue under $\mathcal{H}_0$ and $\mathcal{H}_1$ are the same, the MED method will fail to detect the activity of PT. As a consequence, we get a detection probability that is equal to the false alarm probability, i.e., $P_{fa}=P_d$. This actually provides us two effective ways to improve the performance of sensing algorithms. On the one hand, we can reduce $c$ by increasing the number of signal samples to let $g>\sqrt{c}$. This is widely observed in the existing literatures about spectrum sensing. On the other hand, we can increase the channel gain to achieve the same goal, and this is exactly the motivation of RIS-enhanced spectrum sensing systems.
\end{remark}

Theorem \ref{thm:spike} tells us, when $g$ is not large enough to exceed $\sqrt{c}$, the largest eigenvalue behaves the same as under $\mathcal{H}_0$. When $g>\sqrt{c}$, the largest sample eigenvalue satisfies a spiked distribution as shown in \eqref{eq:gdfgeq}. Therefore, the detection probability is given by
\begin{align}\label{eq:Pdint}
P_d &=\int_{\gamma}^{\infty}p(T|\mathcal{H}_1)\dif T =\int_{\gamma}^{\infty}\int p(T|g)p(g)\dif g\dif T\nonumber\\
&=\int_{\gamma}^{\infty}\int_{-\infty}^{\sqrt{c}} p(T|g)p(g)\dif g\dif T+\int_{\gamma}^{\infty}\int_{\sqrt{c}}^{\infty} p(T|g)p(g)\dif g\dif T\nonumber\\
&= P_{fa}\int_{-\infty}^{\sqrt{c}} p(g)\dif g + \int_{\gamma}^{\infty}\int_{\sqrt{c}}^{\infty} p(T|g)p(g)\dif g\dif T.
\end{align}

It is worth noting that Theorem \ref{thm:spike} and \eqref{eq:Pdint} are derived in the asymptotic regime, where $N, n\to\infty$ with $\lim N/n=c$. Obviously, the detection probability depends on the statistical characteristic of the equivalent channel gain, i.e., $g$. To evaluate the impact of RIS on the detection performance, we have to clearly specifies the design for the phase shift matrix of the RIS and then analyze the impact of the RIS on the equivalent channel gain.

\section{Statistical Phase Shift Design and Channel Gain Analysis}\label{sec:CGA}

The channel from PT to ST is usually of Rayleigh fading, i.e., $\mathbf{d}\sim\mathcal{CN}(\mathbf{0}, \beta_d\mathbf{I}_N)$ with $\beta_d$ the pathloss from PT to ST. Without RIS, the equivalent channel gain becomes $\|\mathbf{d}\|^2$. For large $N$, with the central limit theorem (CLT), we have \begin{equation}\label{eq:ddfRay}
\|\mathbf{d}\|^2\sim\mathcal{N}\left(\mu_d, v_d\right),
\end{equation}
where $\mu_d=N\beta_d,\ v_d=N\beta_d^2$.
In the RIS-assisted systems, it is well-known that adaptively tuning the phase shift matrix of the RIS according to the channel variations in real-time can maximize the equivalent channel gain \cite{liang2019large, wu2019towards, gong2020toward}. However, it is impossible for the secondary users to acquire the instantaneous CSI of the primary users in CR systems.
Besides, it is quite difficult to perform separate channel estimation of the RIS-related channels for the low-cost passive REs. Thus, we consider to design the phase shift matrix of the RIS with only the statistical CSI. In addition, to achieve $g>\sqrt{c}$, the phase shift matrix can be designed to maximize the expectation of the equivalent channel gain. Obviously, the statistical phase shift design is most effective when the RIS-related channels, namely, $\mathbf{f}$ and $\mathbf{G}$, are LoS channels. Hence, we start the analysis from the LoS case, where $\mathbf{f}$, $\mathbf{G}$ are LoS channels.


\subsection{Under Line-of-Sight (LoS) RIS-related Channels}\label{subsec:LoS}

When $\mathbf{f}$, $\mathbf{G}$ are LoS channels, we have $\mathbf{f}=\sqrt{\beta_f}\mathbf{a}_{f}$ with $\beta_f$ the pathloss from PT to RIS,
where
\begin{equation}\label{eq:LoSfsv}
\mathbf{a}_{f}=[1, e^{j2\pi\frac{d}{\lambda}\sin\theta_f^{(AOA)}}, \cdots, e^{j2\pi\frac{d}{\lambda}(M-1)\sin\theta_f^{(AOA)}}]^T
\end{equation}
denotes the steering vector at RIS and $\theta_f^{(AOA)}$ is the angle of arrival (AOA) of $\mathbf{f}$. Similarly, $\mathbf{G}$ can be decomposed as $\mathbf{G}=\sqrt{\beta_G}\mathbf{a}_{G}\mathbf{b}_{G}^H$ with $\beta_G$ the pathloss from RIS to ST,
where
\begin{equation}\label{eq:LoSGsva}
\mathbf{a}_{G}=[1, e^{j2\pi\frac{d}{\lambda}\sin\theta_G^{(AOA)}}, \cdots, e^{j2\pi\frac{d}{\lambda}(N-1)\sin\theta_G^{(AOA)}}]^T,
\end{equation}
\begin{equation}\label{eq:LoSGsvb}
\mathbf{b}_{G}=[1,e^{j2\pi\frac{d}{\lambda}\sin\theta_G^{(AOD)}}, \cdots, e^{j2\pi\frac{d}{\lambda}(M-1)\sin\theta_G^{(AOD)}}]^T
\end{equation}
denote the two steering vectors at ST and RIS, respectively; $\theta_G^{(AOA)}$, $\theta_G^{(AOD)}$ are the angle of arrival (AOA) and the angle of departure (AOD) of $\mathbf{G}$, respectively.
Then, $\mathbf{\Phi}$ can be statistically determined by solving a simple optimization problem, i.e.,
\begin{equation}\label{eq:optimalPhiLoS}
\mathbf{\Phi} = \arg\max_{\mathbf{\Phi}}\mathbb{E}_{\mathbf{d}, \mathbf{f}, \mathbf{G}}\left\{\|\mathbf{d}+\mathbf{G}\mathbf{\Phi}\mathbf{f}\|^2\right\}.
\end{equation}
Further, $\mathbf{d}+\mathbf{G}\mathbf{\Phi}\mathbf{f}$ can be rewritten as $\mathbf{d}+\sqrt{\beta_f\beta_G}\mathbf{a}_{G}\mathbf{b}_{G}^H\mathbf{\Phi}\mathbf{a}_{f}$,
\eqref{eq:optimalPhiLoS} reduces to
\begin{align}\label{eq:optimalPhiLoSe}
\mathbf{\Phi} &= \arg\max_{\mathbf{\Phi}}\mathbb{E}_{\mathbf{d}, \mathbf{f}, \mathbf{G}}\left\{\|\mathbf{d}\|^2 + \beta_f\beta_G\|\mathbf{a}_{G}\mathbf{b}_{G}^H\mathbf{\Phi}\mathbf{a}_{f}\|^2\right\}=\arg\max_{\mathbf{\Phi}}\beta_f\beta_G\|\mathbf{a}_{G}\mathbf{b}_{G}^H\mathbf{\Phi}\mathbf{a}_{f}\|^2.
\end{align}
Noting that $\mathbf{b}_{G}^H\mathbf{\Phi}\mathbf{a}_{f}$ is actually a scalar, therefore maximizing $|\mathbf{b}_{G}^H\mathbf{\Phi}\mathbf{a}_{f}|$ is equivalent to \eqref{eq:optimalPhiLoS}. Finally, we obtain the optimal phase shift matrix of RIS as
\begin{equation}\label{eq:optimalPSMLoS}
\phi_i=\exp[-j\angle b_G(i)a_f(i)],\ i=1,2,\cdots, M,
\end{equation}
where $b_G(i)$, $a_f(i)$ denote the $i$-th element of $\mathbf{b}_{G}$ and $\mathbf{a}_{f}$, respectively, and $\angle b_G(i)a_f(i)$ denotes the phase of $b_G(i)a_f(i)$.

With the optimized phase shift matrix, we have $\mathbf{d}+\mathbf{G}\mathbf{\Phi}\mathbf{f}=\mathbf{d}+p\mathbf{a}_{f}$, where $p\triangleq M\sqrt{\beta_f\beta_G}$. Obviously, the cascaded channel, namely, $\mathbf{G}\mathbf{\Phi}\mathbf{f}$, introduces a non-zero constant component $p\mathbf{a}_{f}$, which improves the equivalent channel gain. Thus, the equivalent channel gain satisfies a non-central chi-square distribution. When $M$, $N$ are sufficiently large, we can obtain the following limit distribution of $g$ via the CLT:
\begin{equation}\label{eq:gdfLoS}
g\sim\mathcal{N}(\mu_g^{\rm LoS}, v_g^{\rm LoS}),
\end{equation}
where
\begin{equation}\label{eq:mgLoS}
\mu_g^{\rm LoS} = N\beta_d + M^2N\beta_f\beta_G, \ v_g^{\rm LoS} = 2M^2N\beta_d\beta_f\beta_G.
\end{equation}

To validate the accuracy of \eqref{eq:ddfRay} and \eqref{eq:gdfLoS}, we provide the analytical results and the Monte-Carlo simulation results with $N\beta_d=-20dB, N=64, M=40, \beta_f\beta_G=-60dB$, as shown in Fig. \ref{fig:pdfvad_LoS}. Obviously, the analytical results fit the simulation results well.
\begin{figure}[!t]
\begin{center}
\subfigure[{\it p.d.f.} of $\|\mathbf{d}\|^2$]{%
\epsfxsize=0.2\textwidth \leavevmode
\epsffile{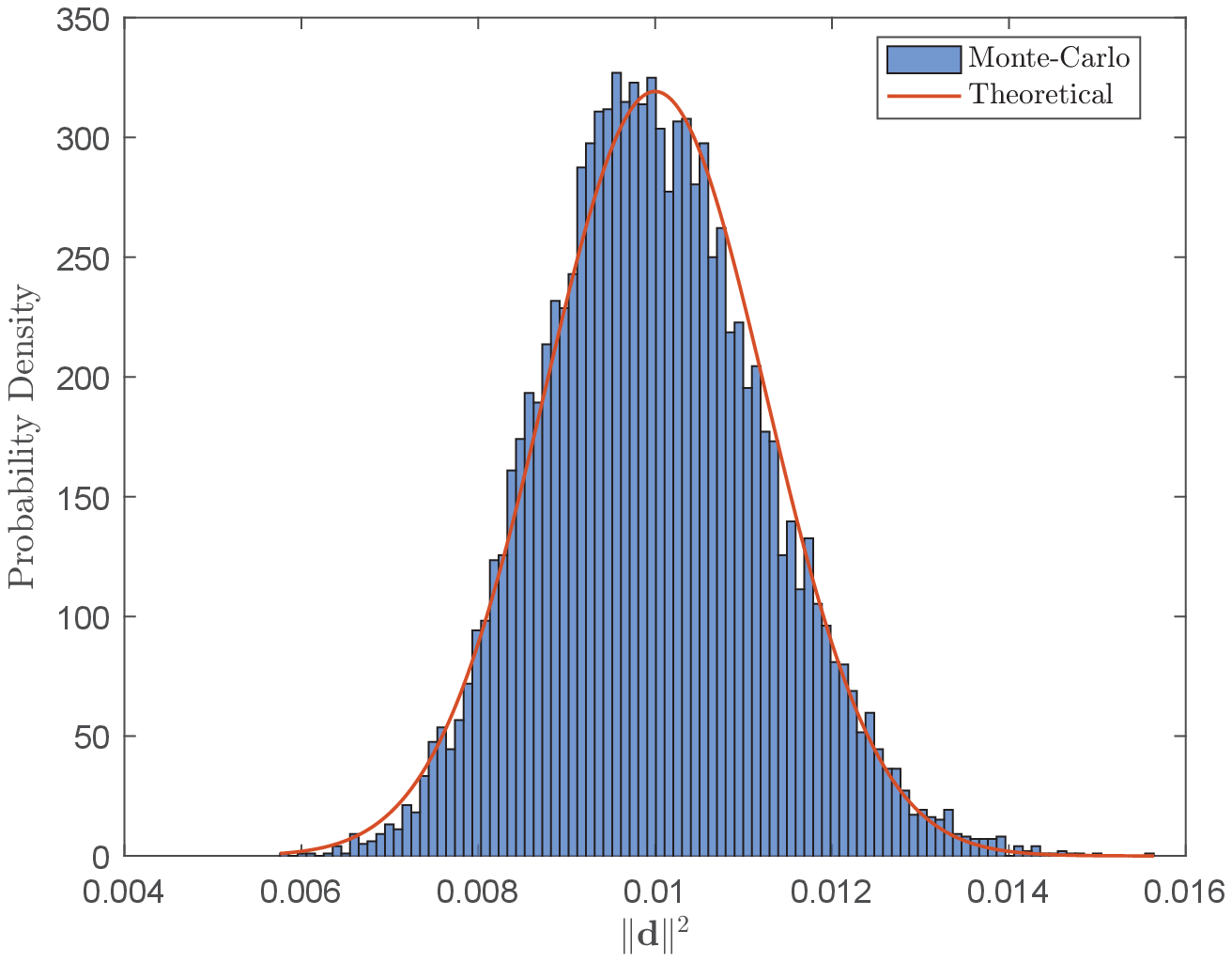}}\quad
\subfigure[{\it p.d.f.} of $\|\mathbf{d}+\mathbf{G\Phi f}\|^2$]{%
\epsfxsize=0.2\textwidth \leavevmode
\epsffile{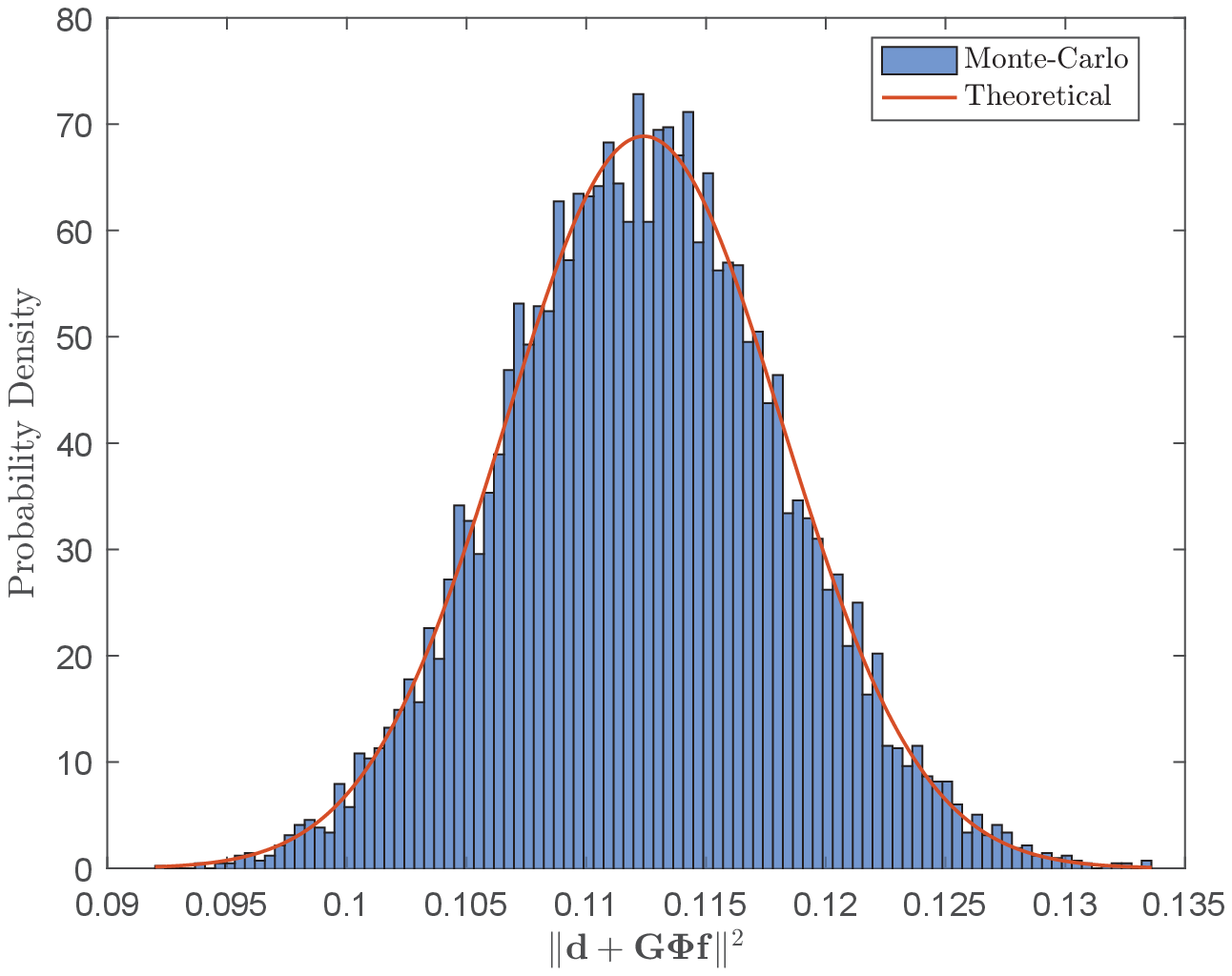}}
\caption{Analytical results and Monte-Carlo simulation results for the {\it p.d.f.} of $\|\mathbf{d}\|^2$ and $\|\mathbf{d}+\mathbf{G\Phi f}\|^2$ under LoS RIS-related channels.}\label{fig:pdfvad_LoS}
\end{center}
\end{figure}

\subsection{Under Rayleigh RIS-related Channels}\label{subsec:Rayleigh}
In practice, the RIS-related channels are more likely to have LoS components, and therefore the statistical phase shift design is expected to be useful. In this section, we still consider the Rayleigh case where the RIS-related channels are of Rayleigh fading to provide more insights into the statistical phase shift design, since the Rayleigh case is also a special case of the Rician case we will introduce later. Under the Rayleigh case, the statistical phase shift becomes actually random phase shift design since both $\mathbf{f}$ and $\mathbf{G}$ are random. Hence, we provide the analysis under where $\mathbf{\Phi}$ is randomly chosen. Denoting $r\triangleq \|\mathbf{G}\mathbf{\Phi}\mathbf{f}\|^2$, for large $M$ and $N$, the distribution of $r$ converges to a Gaussian distribution as follows.
\begin{lemma}\label{lem:rdfRay}
Let $\mathbf{f}\in\mathbb{C}^{N}$ with $\mathbf{f}\sim\mathcal{CN}(\mathbf{0}, \beta_f\mathbf{I}_M)$, $\mathbf{G}\in\mathbb{C}^{N\times M}$ with {\it i.i.d.} entries $G_{ij}\sim\mathcal{CN}(0, \beta_G)$, $\mathbf{\Phi}\in\mathbb{C}^{M\times M}$ is a random diagonal matrix. For large $M$, $N$, $r=\|\mathbf{G}\mathbf{\Phi}\mathbf{f}\|^2$ converges to a Gaussian distribution as
\begin{equation}\label{eq:rdfRay}
r\sim\mathcal{N}\left(\mu_r^{\rm Ray}, v_r^{\rm Ray}\right),
\end{equation}
where
\begin{equation}\label{eq:mrRay}
\mu_r^{\rm Ray} = MN\beta_f\beta_G, \ v_r^{\rm Ray} = (M+N)MN\beta_f^2\beta_G^2.
\end{equation}

\end{lemma}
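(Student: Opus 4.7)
The plan is to exploit the circular symmetry of $\mathbf{G}$ to eliminate $\mathbf{\Phi}$ entirely, and then decouple the product so that $r$ becomes a product of two independent chi-squared variables, from which a two-stage CLT gives the claim.

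First I would note that each column of $\mathbf{G}$ is i.i.d.\ $\mathcal{CN}(\mathbf{0},\beta_G\mathbf{I}_N)$ and hence circularly symmetric, so multiplying the $j$-th column by the unit-modulus scalar $e^{j\phi_j}$ preserves its distribution. Conditioning on $\mathbf{\Phi}$ and using independence of $\mathbf{G}$ from $\mathbf{\Phi}$, this gives $\mathbf{G}\mathbf{\Phi}\stackrel{d}{=}\mathbf{G}$, so it suffices to study $r = \|\mathbf{G}\mathbf{f}\|^2$ with $\mathbf{f}$ and $\mathbf{G}$ independent.

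Next I would decouple $\mathbf{G}$ from $\mathbf{f}$. Conditional on $\mathbf{f}$, $\mathbf{G}\mathbf{f}\sim\mathcal{CN}(\mathbf{0},\beta_G\|\mathbf{f}\|^2\mathbf{I}_N)$, so $r = \beta_G\|\mathbf{f}\|^2\|\mathbf{z}\|^2$ for some $\mathbf{z}\sim\mathcal{CN}(\mathbf{0},\mathbf{I}_N)$ independent of $\mathbf{f}$. Writing $X\triangleq 2\|\mathbf{f}\|^2/\beta_f\sim\chi^2_{2M}$ and $Y\triangleq 2\|\mathbf{z}\|^2\sim\chi^2_{2N}$ with $X\perp Y$, we get $r = \tfrac{\beta_f\beta_G}{4}XY$. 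The exact moments then follow from standard chi-squared formulas: $\mathbb{E}[r]=MN\beta_f\beta_G$, and a direct computation using $\mathbb{E}[X^2]=4M(M+1)$, $\mathbb{E}[Y^2]=4N(N+1)$ yields $\mathrm{Var}(r)=MN(M+N+1)\beta_f^2\beta_G^2$, which agrees with the stated $v_r^{\rm Ray}=MN(M+N)\beta_f^2\beta_G^2$ up to a lower-order additive term $MN\beta_f^2\beta_G^2$.

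For asymptotic normality I would apply the classical CLT to $X$ and $Y$ separately, writing $X = 2M + 2\sqrt{M}\,A_M$ and $Y = 2N + 2\sqrt{N}\,B_N$ with $A_M,B_N$ independent and each converging in distribution to $\mathcal{N}(0,1)$. Expanding the product gives
\begin{equation*}
XY - 4MN \;=\; 4M\sqrt{N}\,B_N + 4N\sqrt{M}\,A_M + 4\sqrt{MN}\,A_M B_N,
\end{equation*}
and the first two terms are (asymptotically) independent centered Gaussians with total variance $16MN(M+N)$. Multiplying by $\beta_f\beta_G/4$ then delivers the stated limit distribution $\mathcal{N}(MN\beta_f\beta_G,\,MN(M+N)\beta_f^2\beta_G^2)$.

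The main obstacle is controlling the bilinear cross term $\sqrt{MN}\,A_MB_N$: after normalizing by $\sqrt{MN(M+N)}$ it has order $1/\sqrt{M+N}\to 0$ in probability, so it is asymptotically negligible relative to the two linear terms. A clean way to finish is via a joint Slutsky argument on the independent pair $(A_M,B_N)$, which bypasses the need to track higher moments of the product. Everything else reduces to the chi-squared moment calculation sketched above.
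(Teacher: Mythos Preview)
Your argument is correct and takes a genuinely different route from the paper. The paper views $r$ as the quadratic form $\tilde{\mathbf{f}}^H\mathbf{A}_M\tilde{\mathbf{f}}$ with $\mathbf{A}_M=\mathbf{\Phi}^H\tilde{\mathbf{G}}^H\tilde{\mathbf{G}}\mathbf{\Phi}$ and appeals to a trace-lemma CLT (a random-matrix result stating that $\tilde{\mathbf{f}}^H\mathbf{A}_M\tilde{\mathbf{f}}-\Tr\mathbf{A}_M$ is asymptotically Gaussian with variance governed by $\frac{1}{M}\Tr\mathbf{A}_M^2$); the mean and variance then come out of computing $\mathbb{E}\Tr(\tilde{\mathbf{G}}^H\tilde{\mathbf{G}})$ and $\mathbb{E}\Tr((\tilde{\mathbf{G}}^H\tilde{\mathbf{G}})^2)$. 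Your approach is more elementary and sharper for this specific lemma: the circular-symmetry step removing $\mathbf{\Phi}$ is cleaner than the trace-cyclicity argument, the decoupling $r\stackrel{d}{=}\tfrac{\beta_f\beta_G}{4}XY$ into a product of two independent chi-squares yields the \emph{exact} variance $MN(M{+}N{+}1)\beta_f^2\beta_G^2$ (the paper's $(M{+}N)MN$ is only the leading order), and the two-stage CLT with Slutsky on the cross term $A_MB_N/\sqrt{M{+}N}$ avoids importing any random-matrix machinery. What the paper's route buys is reusability: the trace-lemma framework and the accompanying moment bookkeeping are exactly what is needed for the Rician case (Lemma~\ref{lem:rdfRi}), where the presence of deterministic LoS components $\bar{\mathbf{f}},\bar{\mathbf{G}}$ breaks your chi-squared product factorization, and one is forced to expand into the many cross terms the paper tabulates.
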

\begin{proof}
See Appendix \ref{apdx:prooflem1}.
\end{proof}

With the CLT, we can infer that $g$ also satisfies a Gaussian distribution. Since $\mathbf{d}$ is independent of $\mathbf{G}\mathbf{\Phi}\mathbf{f}$, we have
\begin{equation}\label{eq:expectgRay}
\mathbb{E}[g]= \mathbb{E}[\|\mathbf{d}+\mathbf{G}\mathbf{\Phi}\mathbf{f}\|^2]
=\mathbb{E}[\|\mathbf{d}\|^2]+\mathbb{E}[\|\mathbf{G}\mathbf{\Phi}\mathbf{f}\|^2]=\mu_d+\mu_r^{\rm Ray}.
\end{equation}
To derive the variance of $g$, we first approximate the cascaded channel, i.e, $\mathbf{G}\mathbf{\Phi}\mathbf{f}$, with a Rayleigh fading channel $\mathbf{r}\sim\mathcal{CN}(\mathbf{0}, \beta_r\mathbf{I}_N)$ since $\mathbf{f}$, $\mathbf{G}$ are Rayleigh fading channels and $\mathbf{\Phi}$ is randomly chosen. Using the fact that $\|\mathbf{r}\|^2$ and $\|\mathbf{G}\mathbf{\Phi}\mathbf{f}\|^2$ have the same variance, we can obtain
\begin{equation}\label{eq:beta_r}
\beta_r=\sqrt{\frac{v_r^{\rm Ray}}{N}}=\sqrt{(M+N)M}\beta_f\beta_G.
\end{equation}
Then, we approximate $\mathbf{d}+\mathbf{G}\mathbf{\Phi}\mathbf{f}$ with the summation of two independent Rayleigh channels, namely, $\mathbf{d}$ and $\mathbf{r}$, when we derive the variance of $g$. Via the CLT, the variance of $g$ is
\begin{equation}\label{eq:vargRay}
v_g^{\rm Ray} =N(\beta_d+\beta_r)^2= N(\beta_d+\sqrt{(M+N)M}\beta_f\beta_G)^2.
\end{equation}
Therefore, the asymptotic distribution of $g$ under the Rayleigh case unfolds as the following proposition.
\begin{prop}\label{prop:gdfRay}
For large $M$, $N$, the distribution of $g$ under the Rayleigh case is given by
\begin{equation}\label{eq:gdfRay}
g\sim\mathcal{N}(\mu_g^{\rm Ray}, v_g^{\rm Ray}),
\end{equation}
where
\begin{equation}\label{eq:mgRay}
\mu_g^{\rm Ray} = N\beta_d+MN\beta_G\beta_f, \ v_g^{\rm Ray} = N(\beta_d+\sqrt{(M+N)M}\beta_f\beta_G)^2.
\end{equation}
\end{prop}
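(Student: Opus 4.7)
The plan is to derive the mean and variance of $g$ and then invoke the central limit theorem (CLT) to conclude asymptotic normality. The key decomposition is
\begin{equation*}
g=\|\mathbf{d}\|^{2}+2\,\mathrm{Re}(\mathbf{d}^{H}\mathbf{G}\mathbf{\Phi}\mathbf{f})+\|\mathbf{G}\mathbf{\Phi}\mathbf{f}\|^{2},
\end{equation*}
which isolates pieces whose statistics are either given by \eqref{eq:ddfRay} or by Lemma \ref{lem:rdfRay}.

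For the mean, I would use that $\mathbf{d}$ is independent of $(\mathbf{G},\mathbf{f})$ and zero-mean, so the middle term vanishes in expectation and $\mathbb{E}[g]=\mu_{d}+\mu_{r}^{\rm Ray}=N\beta_{d}+MN\beta_{f}\beta_{G}$, matching $\mu_{g}^{\rm Ray}$ and reproducing \eqref{eq:expectgRay}.

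For the variance, I would follow the moment-matching approach suggested in the text preceding the proposition: replace the cascaded vector $\mathbf{G}\mathbf{\Phi}\mathbf{f}$ by a surrogate Rayleigh channel $\mathbf{r}\sim\mathcal{CN}(\mathbf{0},\beta_{r}\mathbf{I}_{N})$ whose squared norm has the same variance $v_{r}^{\rm Ray}$ as $\|\mathbf{G}\mathbf{\Phi}\mathbf{f}\|^{2}$. Since $\|\mathbf{r}\|^{2}$ is the sum of $N$ i.i.d. exponentials with mean $\beta_{r}$, this matching forces $\beta_{r}=\sqrt{(M+N)M}\,\beta_{f}\beta_{G}$ as in \eqref{eq:beta_r}. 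The surrogate sum $\mathbf{d}+\mathbf{r}$ is then $\mathcal{CN}(\mathbf{0},(\beta_{d}+\beta_{r})\mathbf{I}_{N})$, so $\|\mathbf{d}+\mathbf{r}\|^{2}$ has variance $N(\beta_{d}+\beta_{r})^{2}$, giving $v_{g}^{\rm Ray}$. Asymptotic Gaussianity then follows by applying the CLT to the representation of $\|\mathbf{d}+\mathbf{r}\|^{2}$ as a sum of $N$ i.i.d. exponentials.

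The main obstacle is justifying the surrogate replacement, since $\mathbf{G}\mathbf{\Phi}\mathbf{f}$ is a product of Gaussians rather than a Gaussian itself, and its entries are correlated through the shared rows of $\mathbf{G}$. A more transparent alternative I would verify in parallel is to compute $\mathrm{Var}(g)$ directly via the law of total variance: by independence, $\mathrm{Cov}(\|\mathbf{d}\|^{2},\|\mathbf{G}\mathbf{\Phi}\mathbf{f}\|^{2})=0$, and the two covariances involving the cross term vanish by conditioning on $\mathbf{d}$ and on $(\mathbf{G},\mathbf{f})$ respectively, since $\mathbf{d}$ is zero-mean. Checking that the resulting exact variance agrees with the claimed $v_{g}^{\rm Ray}$ to leading order in the joint limit $M,N\to\infty$ is the delicate step; one should argue that the dominant $O(M^{2}N\beta_{f}^{2}\beta_{G}^{2})$ contribution from $\mathrm{Var}(\|\mathbf{G}\mathbf{\Phi}\mathbf{f}\|^{2})$ swamps any discrepancy in the cross terms, leaving the surrogate variance asymptotically exact.
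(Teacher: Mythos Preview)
Your proposal is correct and follows essentially the same approach as the paper: the derivation in \eqref{eq:expectgRay}--\eqref{eq:vargRay} preceding the proposition is exactly the surrogate Rayleigh moment-matching you describe, with the same mean computation and the same choice of $\beta_r$. Your proposed law-of-total-variance cross-check goes slightly beyond the paper, which simply adopts the surrogate replacement as a heuristic approximation without the asymptotic-order justification you sketch.
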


\begin{figure}[!t]
\begin{center}
\subfigure[{\it p.d.f.} of $\|\mathbf{G\Phi f}\|^2$]{%
\epsfxsize=0.2\textwidth \leavevmode
\epsffile{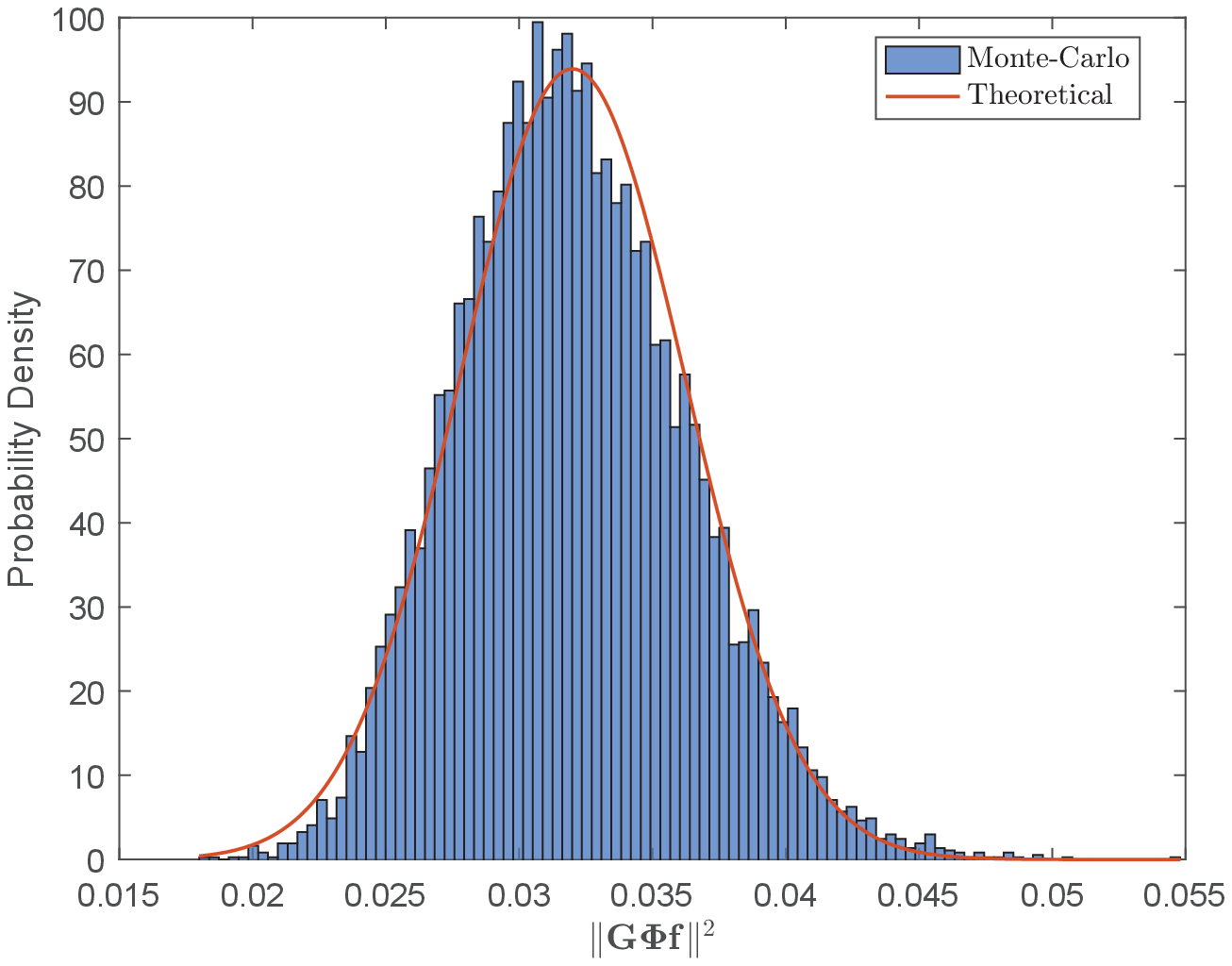}}\quad
\subfigure[{\it p.d.f.} of $\|\mathbf{d}+\mathbf{G\Phi f}\|^2$]{%
\epsfxsize=0.2\textwidth \leavevmode
\epsffile{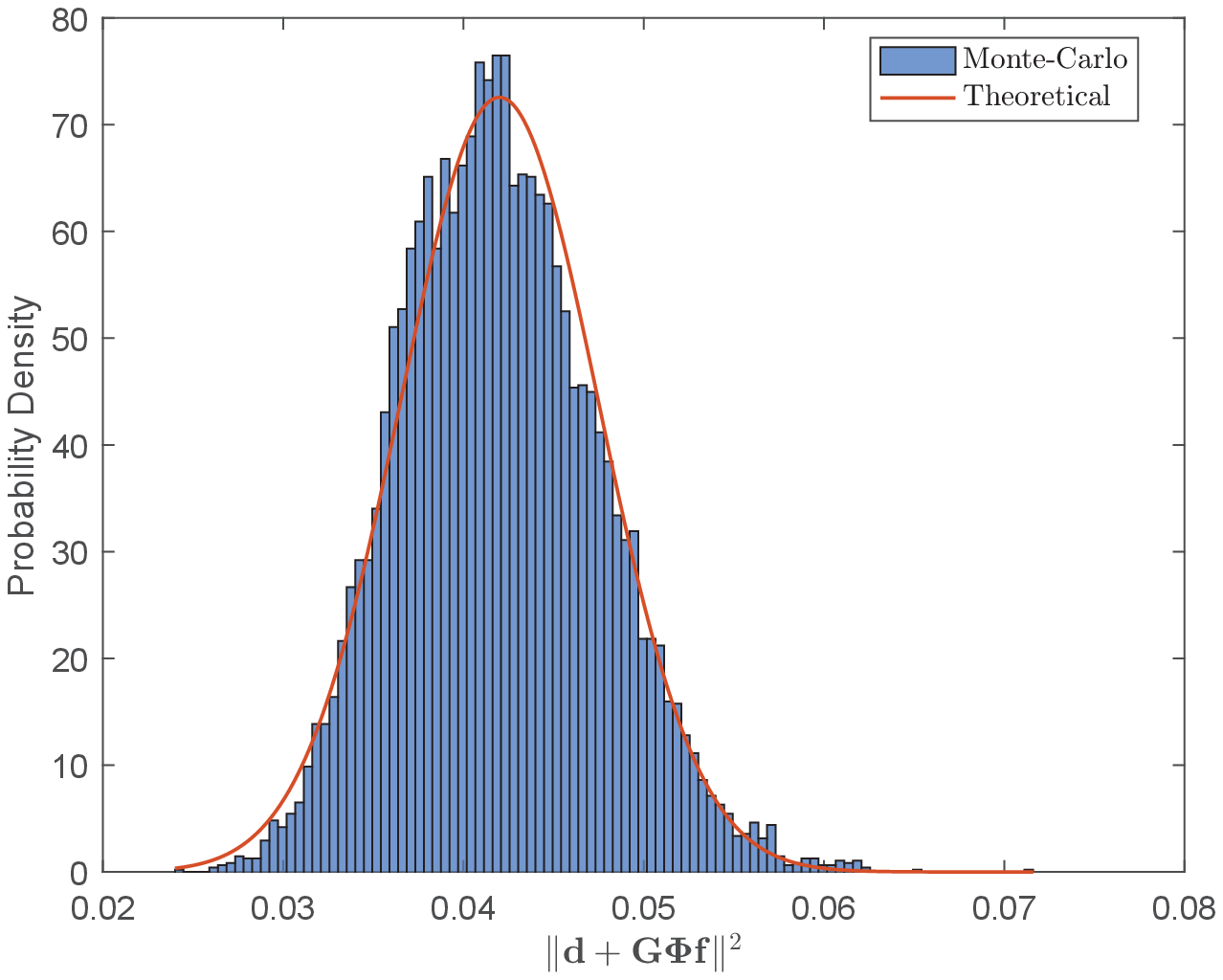}}\quad
\caption{Analytical results and Monte-Carlo simulation results for the {\it p.d.f.} of $\|\mathbf{G\Phi f}\|^2$ and $\|\mathbf{d}+\mathbf{G\Phi f}\|^2$ under Rayleigh RIS-related channels.}\label{fig:pdfvad_Ray}
\end{center}
\end{figure}

Finally, we perform the Monte-Carlo simulations with $N\beta_d=-20dB, N=64, M=500, \beta_f\beta_G=-60dB$ to evaluate the accuracy of the derived asymptotic distribution of $g$. The results are shown in Fig. \ref{fig:pdfvad_Ray} and the analytical results fit the simulation results well.

\subsection{Under Rician RIS-related Channels}\label{subsec:Rician}
Under the LoS case, the statistical phase shift design provides us a reliable cascaded channel to enhance the equivalent channel gain. On the other hand, under the Rayleigh case, if the phase shift matrix is not adaptively tuned according to the instantaneous CSI, the gain introduced by the RIS is much lower than the LoS case. This can be easily observed by comparing the mean value in \eqref{eq:mgLoS} and that in \eqref{eq:mgRay}. In a word, the LoS RIS-related channels are more suitable for deploying the statistical phase shift design. However, the LoS RIS-related channels can not be always achieved due to the practical placement of the RIS. Hence, we here consider the more general Rician case where $\mathbf{f}$ and $\mathbf{G}$ are of Rician fading.

Let
\begin{equation}\label{eq:ricianf}
\mathbf{f} = \sqrt{\frac{\beta_f\kappa_f}{\kappa_f+1}}\bar{\mathbf{f}}+\sqrt{\frac{\beta_f}{\kappa_f+1}}\tilde{\mathbf{f}},
\mathbf{G} = \sqrt{\frac{\beta_G\kappa_G}{\kappa_G+1}}\bar{\mathbf{G}}+\sqrt{\frac{\beta_G}{\kappa_G+1}}\tilde{\mathbf{G}},
\end{equation}
where $\bar{\mathbf{f}}=\mathbf{a}_{f}$, $\bar{\mathbf{G}}=\mathbf{a}_{G}\mathbf{b}_{G}^H$ with $\mathbf{a}_{f}$, $\mathbf{a}_{G}$, and $\mathbf{b}_{G}$ defined in \eqref{eq:LoSfsv}, \eqref{eq:LoSGsva} and \eqref{eq:LoSGsvb}, respectively; $\tilde{\mathbf{f}}\sim\mathcal{CN}(\mathbf{0}, \mathbf{I}_M)$, $\tilde{\mathbf{G}}$ has {\it i.i.d.} entries with $\tilde{\mathbf{G}}_{ij}\sim\mathcal{CN}(0, 1)$; $\kappa_f$ and $\kappa_G$ are the Rician factors of $\mathbf{f}$ and $\mathbf{G}$, respectively. Again, we consider to statistically determine $\mathbf{\Phi}$ by solving the optimization problem in \eqref{eq:optimalPhiLoS}. It can be imagined that the optimal phase shift matrix of RIS is still the solution in \eqref{eq:optimalPSMLoS}. With the optimized $\mathbf{\Phi}$, for large $M$, $N$, the distribution of $r=\|\mathbf{G}\mathbf{\Phi}\mathbf{f}\|^2$ converges to a Gaussian distribution as follows.

\begin{lemma}\label{lem:rdfRi}
Denoting
\begin{equation}\label{eq:betafbt}
\bar{\beta}_f\triangleq\frac{\beta_f\kappa_f}{\kappa_f+1},\  \tilde{\beta}_f\triangleq\frac{\beta_f}{\kappa_f+1},
\bar{\beta}_G\triangleq\frac{\beta_G\kappa_G}{\kappa_G+1}, \tilde{\beta}_G\triangleq\frac{\beta_G}{\kappa_G+1},
\end{equation}
$r=\|\mathbf{G}\mathbf{\Phi}\mathbf{f}\|^2$ converges to a Gaussian distribution as
\begin{equation}\label{eq:rdfRi}
r\sim\mathcal{N}(\mu_r^{\rm Ri}, v_r^{\rm Ri}),
\end{equation}
where
\begin{equation}\label{eq:mrRi}
\mu_r^{\rm Ri}=M^2N\bar{\beta}_f\bar{\beta}_G+MN(\tilde{\beta}_f\bar{\beta}_G+\bar{\beta}_f\tilde{\beta}_G+\tilde{\beta}_f\tilde{\beta}_G),
\end{equation}
$v_r^{\rm Ri}$ is defined as
\begin{align}\label{eq:vrRi}
v_r^{\rm Ri}=&2M^3N^2\bar{\beta}_f\tilde{\beta}_f\bar{\beta}_G^2+2M^3N\bar{\beta}_f^2\bar{\beta}_G\tilde{\beta}_G+
(2M^3N+4M^2N^2+4M^2N)\bar{\beta}_f\tilde{\beta}_f\bar{\beta}_G\tilde{\beta}_G\nonumber\\
&+M^2N^2\tilde{\beta}_f^2\bar{\beta}_G^2+2M^2N\tilde{\beta}_f^2\bar{\beta}_G\tilde{\beta}_G+
2(M+N)MN\bar{\beta}_f\tilde{\beta}_f\tilde{\beta}_G^2+(M+N)MN\tilde{\beta}_f^2\tilde{\beta}_G^2.
\end{align}
\end{lemma}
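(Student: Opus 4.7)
The plan is to decompose each Rician channel via \eqref{eq:ricianf} and expand
\begin{equation*}
\mathbf{G}\mathbf{\Phi}\mathbf{f} = \mathbf{v}_1 + \mathbf{v}_2 + \mathbf{v}_3 + \mathbf{v}_4,
\end{equation*}
with $\mathbf{v}_1 = \sqrt{\bar{\beta}_f\bar{\beta}_G}\,\bar{\mathbf{G}}\mathbf{\Phi}\bar{\mathbf{f}}$, $\mathbf{v}_2 = \sqrt{\bar{\beta}_f\tilde{\beta}_G}\,\tilde{\mathbf{G}}\mathbf{\Phi}\bar{\mathbf{f}}$, $\mathbf{v}_3 = \sqrt{\tilde{\beta}_f\bar{\beta}_G}\,\bar{\mathbf{G}}\mathbf{\Phi}\tilde{\mathbf{f}}$, and $\mathbf{v}_4 = \sqrt{\tilde{\beta}_f\tilde{\beta}_G}\,\tilde{\mathbf{G}}\mathbf{\Phi}\tilde{\mathbf{f}}$. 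Since $\bar{\mathbf{f}}=\mathbf{a}_f$ and $\bar{\mathbf{G}}=\mathbf{a}_G\mathbf{b}_G^H$, the statistical phase-shift choice \eqref{eq:optimalPSMLoS} co-phases the scalar $\mathbf{b}_G^H\mathbf{\Phi}\mathbf{a}_f$ to modulus $M$, so $\mathbf{v}_1$ is deterministic with $\|\mathbf{v}_1\|^2 = M^2 N\bar{\beta}_f\bar{\beta}_G$. Then $r = \sum_{i,j=1}^{4}\mathbf{v}_i^H\mathbf{v}_j$.

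For the mean, every cross term $\mathbf{v}_i^H\mathbf{v}_j$ with $i\neq j$ contains an odd-degree polynomial in the zero-mean vectors $\tilde{\mathbf{f}}$ or $\tilde{\mathbf{G}}$, and so vanishes in expectation by the independence of $\tilde{\mathbf{f}}$ and $\tilde{\mathbf{G}}$. For the diagonal terms I would condition on whichever random factor is present: $\mathbb{E}[\|\mathbf{v}_2\|^2]=\bar{\beta}_f\tilde{\beta}_G\,\mathbf{a}_f^H\mathbf{\Phi}^H\mathbb{E}[\tilde{\mathbf{G}}^H\tilde{\mathbf{G}}]\mathbf{\Phi}\mathbf{a}_f = MN\bar{\beta}_f\tilde{\beta}_G$, and similarly $\mathbb{E}[\|\mathbf{v}_3\|^2]=MN\tilde{\beta}_f\bar{\beta}_G$ and $\mathbb{E}[\|\mathbf{v}_4\|^2]=MN\tilde{\beta}_f\tilde{\beta}_G$, using $\|\mathbf{a}_f\|^2=M$ and $\|\mathbf{a}_G\|^2=N$. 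Adding $\|\mathbf{v}_1\|^2$ reproduces $\mu_r^{\rm Ri}$ in \eqref{eq:mrRi}.

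The heart of the proof is $\mathrm{Var}(r)$. Writing $r = \|\mathbf{v}_1\|^2 + 2\Re\{\mathbf{v}_1^H\mathbf{u}\} + \|\mathbf{u}\|^2$ with $\mathbf{u}\triangleq\mathbf{v}_2+\mathbf{v}_3+\mathbf{v}_4$, the cross-covariance between the linear and quadratic parts vanishes because it involves odd moments of the zero-mean vectors, and the linear part contributes $\sum_{i=2}^{4}2\mathbb{E}|\mathbf{v}_1^H\mathbf{v}_i|^2$, each a standard quadratic form in one complex Gaussian vector. The quadratic part expands into variances $\mathrm{Var}(\|\mathbf{v}_i\|^2)$ and cross pieces $\mathbb{E}|\mathbf{v}_i^H\mathbf{v}_j|^2$ for $i,j\in\{2,3,4\}$. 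Those involving only $\mathbf{v}_2$ or $\mathbf{v}_3$ are quartic forms in a single complex Gaussian vector and follow from Isserlis/Wick's theorem. The contribution of $\mathbf{v}_4$ is the genuinely non-trivial piece, since both $\tilde{\mathbf{f}}$ and $\tilde{\mathbf{G}}$ are random; I would condition on $\tilde{\mathbf{f}}$, use $\mathrm{Var}(\|\tilde{\mathbf{G}}\mathbf{\Phi}\tilde{\mathbf{f}}\|^2\mid\tilde{\mathbf{f}}) = N\|\tilde{\mathbf{f}}\|^4$ and $\mathbb{E}[\|\tilde{\mathbf{G}}\mathbf{\Phi}\tilde{\mathbf{f}}\|^2\mid\tilde{\mathbf{f}}]=N\|\tilde{\mathbf{f}}\|^2$, then apply the law of total variance together with $\mathbb{E}[\|\tilde{\mathbf{f}}\|^4]=M(M+1)$ and $\mathrm{Var}(\|\tilde{\mathbf{f}}\|^2)=M$. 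Collecting the resulting $M^iN^j$ terms reproduces \eqref{eq:vrRi}.

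The main obstacle is the covariance bookkeeping: identifying for each pair of quadratic or bilinear forms whether they are independent (so the covariance vanishes) or share a common Gaussian factor (so a surviving fourth-moment term must be matched against a specific $M^iN^j$ coefficient in \eqref{eq:vrRi}). Once the mean and variance are in hand, Gaussian convergence for large $M,N$ follows by an argument analogous to Lemma \ref{lem:rdfRay}: $\mathbf{v}_2$ and $\mathbf{v}_3$ are linear functionals of high-dimensional complex Gaussian vectors, and $\mathbf{v}_4$ concentrates via a two-stage CLT, first on $\tilde{\mathbf{G}}\mathbf{\Phi}\tilde{\mathbf{f}}$ given $\tilde{\mathbf{f}}$, then on $\tilde{\mathbf{f}}$.
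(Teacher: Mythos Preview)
Your proposal is correct and follows essentially the same strategy as the paper: decompose $\mathbf{G}\mathbf{\Phi}\mathbf{f}$ into the four pieces $\mathbf{v}_1,\ldots,\mathbf{v}_4$, use parity in $\tilde{\mathbf{f}},\tilde{\mathbf{G}}$ to kill the odd cross-terms, and then compute the first two moments of $r$ term by term. The paper carries out the variance step by brute force---expanding $r^2$ into $256$ products, reducing via parity and the circularity identity $\mathbb{E}[\mathbf{x}^H\mathbf{a}\,\mathbf{x}^H\mathbf{b}]=0$ to $25$ surviving expectations, and tabulating each---whereas you organize the same computation more compactly by first splitting off the deterministic piece $\|\mathbf{v}_1\|^2$, then handling the remaining blocks with Isserlis/Wick and the law of total variance (conditioning on $\tilde{\mathbf{f}}$ for the $\mathbf{v}_4$ contribution). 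Both routes produce the same $M^iN^j$ bookkeeping; yours is somewhat more structured, the paper's is more mechanical but leaves an explicit audit trail in its table of the $C_1,\ldots,C_{25}$ expectations.
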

\begin{proof}
See Appendix \ref{apdx:prooflem2}.
\end{proof}

It is worth noting that Lemma \ref{lem:rdfRay} is actually a special case of Lemma \ref{lem:rdfRi} when $\kappa_f=\kappa_G=0$. With the statistically determined $\mathbf{\Phi}$, it can be observed that $\bar{\mathbf{f}}$, $\mathbf{\Phi}$, and $\bar{\mathbf{G}}$ constitute an equivalent LoS component of $\mathbf{G}\mathbf{\Phi}\mathbf{f}$. Thus, we can approximate $\mathbf{G}\mathbf{\Phi}\mathbf{f}$ with an equivalent Rician fading channel, namely, $\mathbf{h}=\bar{\mathbf{h}}+\tilde{\mathbf{h}}$, which satisfies $\mathbf{h}\sim\mathcal{CN}(\bar{\mathbf{h}}, \sigma_h^2\mathbf{I}_N)$. For large $N$, with the CLT, $h\triangleq\|\mathbf{h}\|^2$ converges to a Gaussian distribution as
\begin{equation}\label{eq:hdfRi}
h\sim\mathcal{N}(N\sigma_h^2+\|\bar{\mathbf{h}}\|^2, N\sigma_h^4+2\|\bar{\mathbf{h}}\|^2\sigma_h^2).
\end{equation}
Comparing \eqref{eq:hdfRi} with \eqref{eq:rdfRi}, the corresponding equivalent parameters, i.e., $\|\bar{\mathbf{h}}\|^2$ and $\sigma_h^2$, can be obtained via
\begin{equation}\label{eq:emh}
\|\bar{\mathbf{h}}\|^2=\sqrt{N\left(\frac{{\mu_r^{\rm Ri}}^2}{N}-v_r^{\rm Ri}\right)},\
\sigma_h^2=\frac{\mu_r^{\rm Ri}-\|\bar{\mathbf{h}}\|^2}{N}.
\end{equation}
With this approximation, $\mathbf{d}+\mathbf{G}\mathbf{\Phi}\mathbf{f}$ can also be approximated with a Rician fading channel whose LoS component is exactly $\bar{\mathbf{h}}$. Besides, with the fact that $\tilde{\mathbf{h}}$ is independent of $\mathbf{d}$, we have
\begin{equation}\label{eq:haddd}
\tilde{\mathbf{h}}+\mathbf{d}\sim\mathcal{CN}\left(\mathbf{0}, (\sigma_h^2+\beta_d)\mathbf{I}_N\right).
\end{equation}
Finally, $\mathbf{d}+\mathbf{G}\mathbf{\Phi}\mathbf{f}$ can be approximated by $\bar{\mathbf{h}}+\tilde{\mathbf{h}}+\mathbf{d}$.
Obviously, $\mathbf{g}\triangleq \mathbf{d}+\bar{\mathbf{h}}+\tilde{\mathbf{h}}$ is another Rician fading channel which satisfies $\mathbf{g}\sim\mathcal{CN}(\bar{\mathbf{h}}, (\sigma_h^2+\beta_d)\mathbf{I}_N)$. Thus, the asymptotic distribution of $g$ under the Rician case unfolds as the following proposition.
\begin{prop}\label{prop:gdfRi}
For large $M$, $N$, with the CLT, the distribution of $g=\|\mathbf{g}\|^2$ under the Rician case is given by
\begin{equation}\label{eq:gdfRi}
g\sim\mathcal{N}(\mu_g^{\rm Ri}, v_g^{\rm Ri}),
\end{equation}
where
\begin{equation}\label{eq:mgRi}
\mu_g^{\rm Ri} = N(\sigma_h^2+\beta_d)+\|\bar{\mathbf{h}}\|^2, \ v_g^{\rm Ri} = N(\sigma_h^2+\beta_d)^2+2\|\bar{\mathbf{h}}\|^2(\sigma_h^2+\beta_d).
\end{equation}
\end{prop}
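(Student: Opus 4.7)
The plan is to observe that Proposition \ref{prop:gdfRi} is essentially a direct specialization of the distributional identity already established in \eqref{eq:hdfRi}, applied to the Gaussian surrogate $\mathbf{g}$ for $\mathbf{d}+\mathbf{G}\mathbf{\Phi}\mathbf{f}$. Since the discussion preceding the proposition has already justified the equivalent-Rician approximation $\mathbf{d}+\mathbf{G}\mathbf{\Phi}\mathbf{f}\approx \bar{\mathbf{h}}+\tilde{\mathbf{h}}+\mathbf{d}$, and has shown via independence of $\tilde{\mathbf{h}}$ and $\mathbf{d}$ that $\mathbf{g}\sim\mathcal{CN}(\bar{\mathbf{h}},(\sigma_h^2+\beta_d)\mathbf{I}_N)$, all that remains is to compute the asymptotic distribution of the squared norm of a complex Gaussian vector with nonzero mean and identity-scaled covariance.

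First I would write $\mathbf{g}=\bar{\mathbf{h}}+\mathbf{z}$ with $\mathbf{z}\sim\mathcal{CN}(\mathbf{0},\sigma^2\mathbf{I}_N)$, where $\sigma^2\triangleq\sigma_h^2+\beta_d$, and expand
\begin{equation*}
g=\|\mathbf{g}\|^2=\|\bar{\mathbf{h}}\|^2+2\Re(\bar{\mathbf{h}}^H\mathbf{z})+\|\mathbf{z}\|^2.
\end{equation*}
The mean is immediate: $\mathbb{E}[g]=\|\bar{\mathbf{h}}\|^2+N\sigma^2$, giving $\mu_g^{\rm Ri}$. For the variance, I would use that $\bar{\mathbf{h}}^H\mathbf{z}\sim\mathcal{CN}(0,\sigma^2\|\bar{\mathbf{h}}\|^2)$, so $\mathrm{Var}(2\Re(\bar{\mathbf{h}}^H\mathbf{z}))=2\sigma^2\|\bar{\mathbf{h}}\|^2$, and that $\|\mathbf{z}\|^2$ is a sum of $N$ i.i.d.\ exponential random variables with mean $\sigma^2$ and variance $\sigma^4$, giving $\mathrm{Var}(\|\mathbf{z}\|^2)=N\sigma^4$. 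The cross-term $\mathrm{Cov}(2\Re(\bar{\mathbf{h}}^H\mathbf{z}),\|\mathbf{z}\|^2)$ vanishes because it reduces to a third-order moment of the zero-mean circular Gaussian $\mathbf{z}$. Summing yields $v_g^{\rm Ri}=N\sigma^4+2\sigma^2\|\bar{\mathbf{h}}\|^2=N(\sigma_h^2+\beta_d)^2+2\|\bar{\mathbf{h}}\|^2(\sigma_h^2+\beta_d)$, matching \eqref{eq:mgRi}.

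For the convergence to a Gaussian law I would invoke the CLT twice: the linear term $2\Re(\bar{\mathbf{h}}^H\mathbf{z})$ is already exactly Gaussian, while $\|\mathbf{z}\|^2-N\sigma^2$, being a centred sum of $N$ i.i.d.\ exponentials, is asymptotically Gaussian by the classical CLT once normalized by $\sqrt{N}$. Since these two summands are uncorrelated and jointly Gaussian in the limit (the exponential term being Gaussian after scaling and the linear term exactly so), their sum is asymptotically Gaussian with the mean and variance computed above. Equivalently, one can simply cite \eqref{eq:hdfRi} with the substitution $\sigma_h^2\mapsto\sigma_h^2+\beta_d$, since $\mathbf{g}$ and the $\mathbf{h}$ of that earlier display have the same parametric form.

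The main subtlety, rather than the calculation itself, lies in ensuring that the composite approximations used to reach this point are jointly valid: the Gaussian surrogate $\mathbf{h}$ for $\mathbf{G}\mathbf{\Phi}\mathbf{f}$ relied on matching the first two moments derived in Lemma \ref{lem:rdfRi}, and the further replacement of $\mathbf{d}+\mathbf{G}\mathbf{\Phi}\mathbf{f}$ by $\mathbf{g}$ assumes the randomness of $\mathbf{G}\mathbf{\Phi}\mathbf{f}$ behaves, to leading order in $M,N$, like an independent circular Gaussian superimposed on the deterministic component $\bar{\mathbf{h}}$. I would therefore include a brief remark that this joint-Gaussianity is the only nontrivial step, while the squared-norm computation itself is routine, closing the argument by appealing to \eqref{eq:hdfRi}.
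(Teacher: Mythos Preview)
Your proposal is correct and mirrors the paper's own reasoning: the paper does not give a separate proof of Proposition~\ref{prop:gdfRi} but simply observes that once $\mathbf{g}\sim\mathcal{CN}(\bar{\mathbf{h}},(\sigma_h^2+\beta_d)\mathbf{I}_N)$ has been established, the result is obtained from \eqref{eq:hdfRi} by replacing $\sigma_h^2$ with $\sigma_h^2+\beta_d$ and invoking the CLT. Your explicit moment computation for the squared norm of a noncentral complex Gaussian vector just fills in the details the paper leaves implicit.
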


To validate the derived asymptotic distribution of $g$, the Monte-Carlo simulation results for $N\beta_d=-20dB, N=64, M=40, \beta_f\beta_G=-60dB$ are provided in Fig. \ref{fig:pdfvad_Ri} and are consistent with the analytical results.

\begin{figure}[!t]
\begin{center}
\subfigure[{\it p.d.f.} of $\|\mathbf{G\Phi f}\|^2$]{%
\epsfxsize=0.2\textwidth \leavevmode
\epsffile{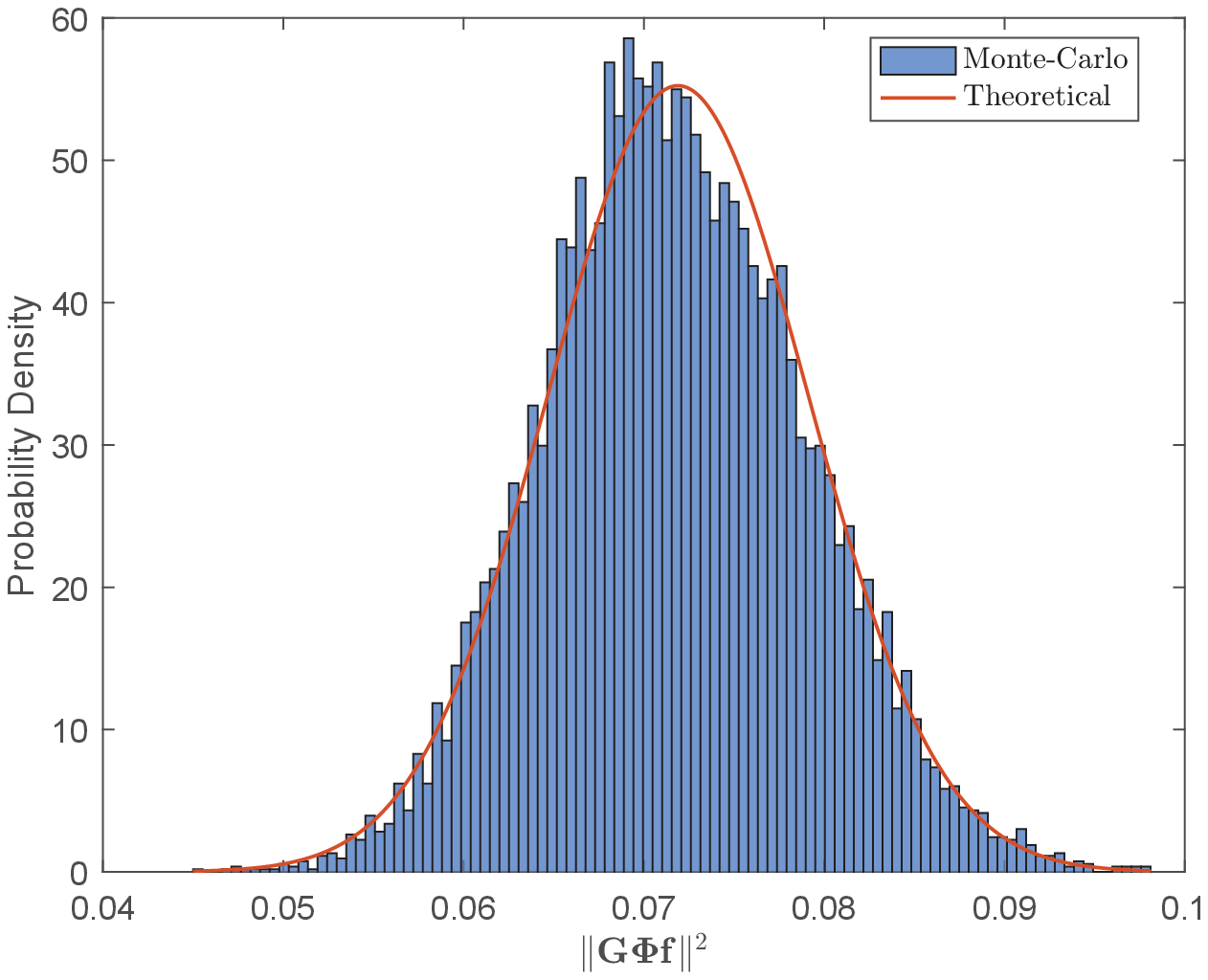}}\quad
\subfigure[{\it p.d.f.} of $\|\mathbf{d}+\mathbf{G\Phi f}\|^2$]{%
\epsfxsize=0.2\textwidth \leavevmode
\epsffile{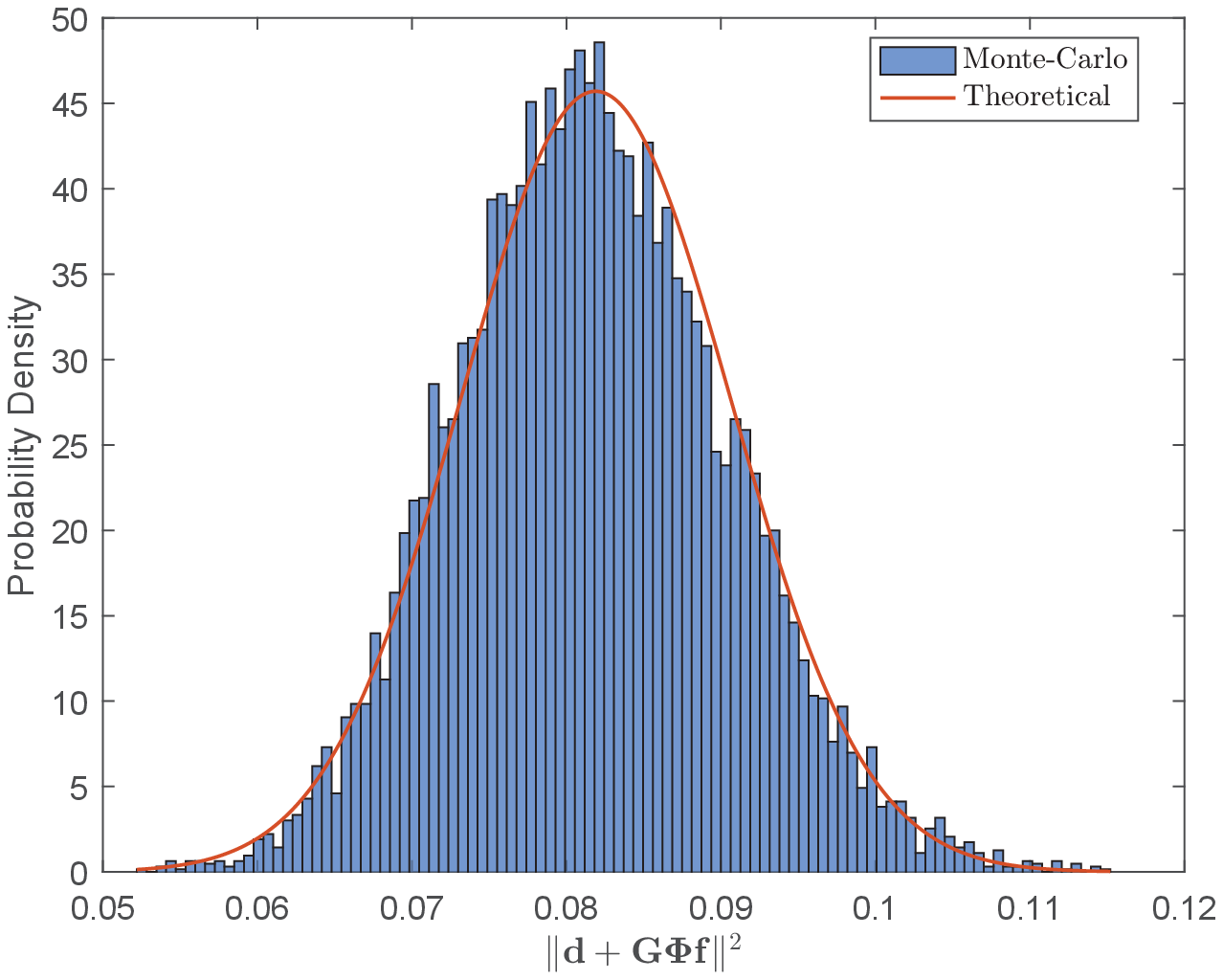}}\quad
\caption{Analytical results and Monte-Carlo simulation results for the {\it p.d.f.} of $\|\mathbf{G\Phi f}\|^2$ and $\|\mathbf{d}+\mathbf{G\Phi f}\|^2$ under Rician RIS-related channels.}\label{fig:pdfvad_Ri}
\end{center}
\end{figure}


\section{How Can We Achieve a Detection Probability Close to $1$?}\label{sec:perfectdetection}

In Section \ref{sec:CGA}, we have discussed the statistical phase shift design and have derived the asymptotic distributions of the equivalent channel gains for three cases of the RIS-related channels. Particularly, all the distributions are shown to be asymptotically Gaussian due to the CLT. However, it is still unclear that how to place the RIS or how many REs we need to achieve a significant improvement of the spectrum sensing performance. According to Remark \ref{rmk:spikedetection}, if the probability of $g<\sqrt{c}$ is almost $1$, the asymptotic distributions of the test statistic under $\mathcal{H}_0$ and $\mathcal{H}_1$ are the same and thus the ST can not identify the active PT. Therefore, based on the relation between $\sqrt{c}$ and the asymptotic distribution of $g$, we may have three possible cases, which are illustrated in Fig. \ref{fig:perfectdetection}.

In Fig. \ref{fig:perfectdetection}, we use the red-shaded part and the blue-shaded part to denote the probability of $g<\sqrt{c}$ and that of $g>\sqrt{c}$, respectively. Besides, more details for the three cases are listed as follows.
\begin{figure*}[!t]
\begin{center}
\subfigure[Case $1$]{%
\epsfxsize=0.2\textwidth \leavevmode
\epsffile{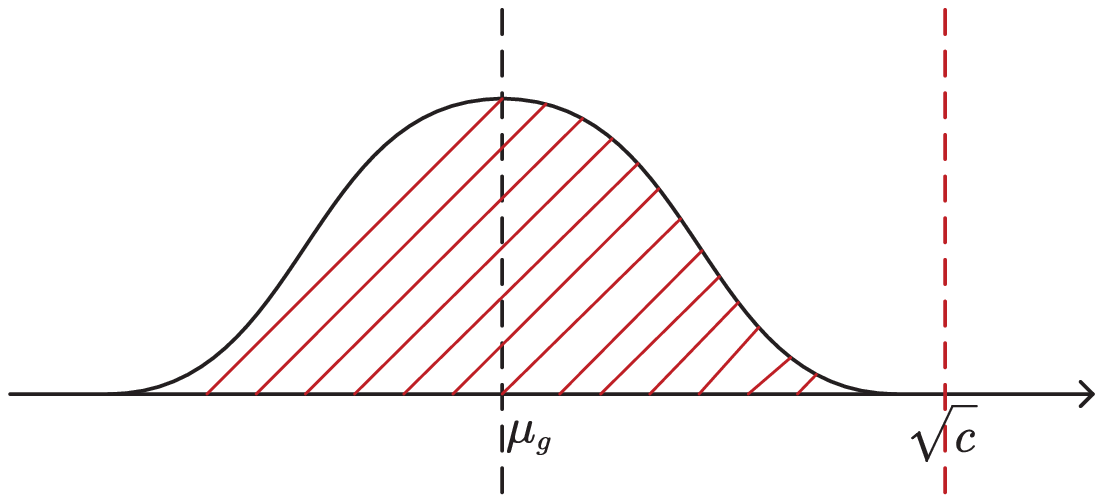}}\qquad
\subfigure[Case $2$]{%
\epsfxsize=0.2\textwidth \leavevmode
\epsffile{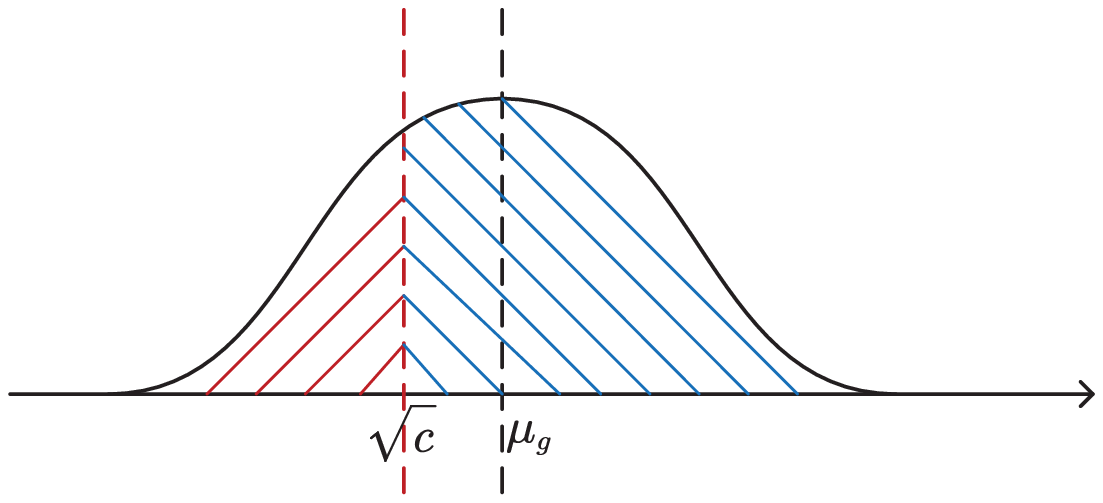}}\qquad
\subfigure[Case $3$]{%
\epsfxsize=0.2\textwidth \leavevmode
\epsffile{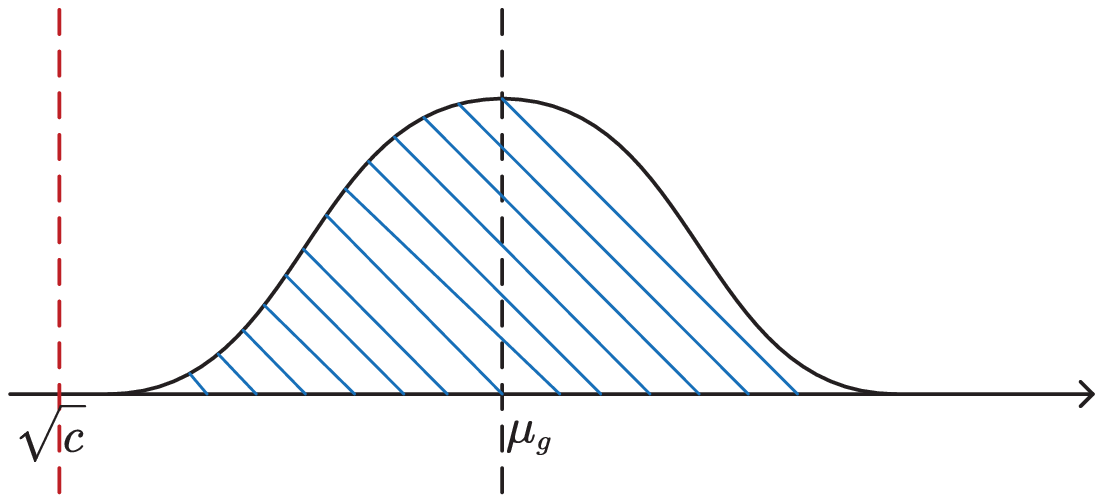}}
\caption{Three possible cases for the asymptotic distribution of $g$.}\label{fig:perfectdetection}
\end{center}
\end{figure*}

\begin{itemize}
  \item {\it Case} $1$: the probability of $g<\sqrt{c}$ is almost $1$, thus the {\it p.d.f.}'s of $T$ under $\mathcal{H}_0$ and $\mathcal{H}_1$ are almost surely the same, we have $P_{d}=P_{fa}$.
  \item {\it Case} $2$: the red-shaded part denotes the probability of that the {\it p.d.f.} of $T$ under $\mathcal{H}_0$ and $\mathcal{H}_1$ are the same, i.e., $P_{d}=P_{fa}$, and the blue-shaded part denotes the probability of that $T$ under $\mathcal{H}_1$ satisfies a spiked distribution as shown in \eqref{eq:gdfgeq}.
  \item {\it Case} $3$: the probability of $g>\sqrt{c}$ is almost $1$, thus the {\it p.d.f.} of $T$ under $\mathcal{H}_1$ almost surely converges to a spiked distribution as shown in \eqref{eq:gdfgeq}.
\end{itemize}

Obviously, if $g<\sqrt{c}$ with a nonzero probability, e.g., {\it Case} $1$ and {\it Case} $2$, the {\it miss detection} event will almost surely happens with a nonzero probability. To achieve a detection probability close to $1$, i.e., $P_{d}\approx 1$, the probability of $g<\sqrt{c}$ are supposed to be as small as possible. Thus, {\it Case} $3$ is exactly a necessary condition to achieve a detection probability close to $1$, which means that, if $P_{d}\approx 1$ is achieved, the relation between the asymptotic distribution of $g$ and $\sqrt{c}$ must be as in {\it Case} $3$. With the results in Section \ref{sec:CGA}, we know that the asymptotic distributions of $g$ under all the channel conditions converge to some specific Gaussian distributions. Without loss of generality, we let $g\sim\mathcal{N}(\mu_g, v_g)$ with $v_g=\sigma_g^2$. Using the {\it Three-Sigma Rule} for Gaussian distributions, we have
\begin{equation}\label{eq:g3s}
P(g>\mu_g-3\sigma_g)\approx 0.9987.
\end{equation}
Therefore, {\it Case} $3$ can be approximately guaranteed by letting
\begin{equation}\label{eq:nececonditionPD}
\mu_g-3\sigma_g>\sqrt{c},
\end{equation}
which can effectively achieve a near-zero probability of $g<\sqrt{c}$, or more accurately, $P(g<\sqrt{c})<0.0013$. Besides, $\mu_g$ and $\sigma_g$ are actually functions of $M$, e.g., \eqref{eq:mgRay}, let $g_m(M)\triangleq\mu_g-3\sigma_g$ and $M_{\inf}$ be the solution to the equation $g_m(M)=\sqrt{c}$. With the fact that better detection performance can be achieved by employing more REs, we can see that $M_{\inf}$ is actually the lower bound of $M$ to achieve a detection probability close to $1$.

To achieve $P_{d}\approx 1$, we have to employ an RIS with more than $M_{\inf}$ REs. However, it is still unknown that how many REs are required. Hence, we have to reconsider {\it Case} $3$ where $g$ is almost surely larger than $\sqrt{c}$. Under {\it Case} $3$, we recall that the {\it p.d.f.} of $T$ under $\mathcal{H}_1$ converges to a spiked distribution with mean $\mu_T(g)$ and variance $v_T(g)=\sigma_T^2(g)$ as shown in \eqref{eq:gdfgeq}. Therefore, $P_{d}\approx 1$ can be achieved by letting $T>\gamma$ almost surely happens under $\mathcal{H}_1$. Using the {\it Three-Sigma Rule} again, we have
\begin{equation}\label{eq:t3s}
P\left(T>\mu_T(g)-3\sigma_T(g)|g\right)\approx0.9987, \forall g>g_0>\sqrt{c}.
\end{equation}
Thus, the sufficient condition to achieve $P_{d}\approx 1$ can be expressed as
\begin{equation}\label{eq:suffconditionPD1}
\mu_T(g)-3\sigma_T(g)>\gamma, \forall g>g_0.
\end{equation}

To make the sufficient condition more clear, we first evaluate the monotonicity of $\mu_T(g)-3\sigma_T(g)$ with respect to $g$.
Let $f(g)\triangleq\mu_T(g)-3\sigma_T(g)$, with \eqref{eq:mTspike}, we have
\begin{equation}\label{eq:fg}
f(g)=g+1+c+\frac{c}{g}-3\sqrt{\frac{(g+1)^2}{n}\left(1-\frac{c}{g^2}\right)},
\end{equation}
where $g$ is defined in the interval $(\sqrt{c}, +\infty)$. Take the derivative of $f(g)$ with respect to $g$, we obtain
\begin{equation}\label{eq:fdg}
f'(g)=\frac{(g^2-c)\left(\sqrt{n(g^2-c)}+3\right)-3g^2(g+1)}{g^2\sqrt{n(g^2-c)}}.
\end{equation}
It can be easily observed that the denominator of $f'(g)$ are always positive when $g>\sqrt{c}$, therefore the sign of $f'(g)$ only depends on the numerator, which is denoted by $f'_{num}$ in the context. Denoting $(\sqrt{c})^{+}$ a number which is a little larger than $\sqrt{c}$ but its limit is $\sqrt{c}$, i.e.,
$\lim(\sqrt{c})^{+}=\sqrt{c}$, we have
\begin{equation}\label{eq:fdgsqrtc}
\lim f'_{num}[(\sqrt{c})^{+}]=-3c(\sqrt{c}+1)<0.
\end{equation}
Besides, consider an arbitrary value of $g$, namely, $m\sqrt{c}, m>1$, we have
\begin{equation}\label{eq:fdgmsqrtc}
f'_{num}(m\sqrt{c})=\left(\left[(m^2-1)^{\frac{3}{2}}\sqrt{n}-3m^3\right]\sqrt{c}-3\right)c.
\end{equation}
For large $m$, $(m^2-1)^{\frac{3}{2}}$ can be approximated by $m^3$, then we have $f'_{num}(m\sqrt{c})=\left[(\sqrt{n}-3)m^3\sqrt{c}\right.$ $\left.-3\right]c$. Moreover, the number of signal samples, namely, $n$, is much larger than $3^2$ and note that $\lim c=N/n$, thus $f'_{num}(m\sqrt{c})$ can be further approximated by $(\sqrt{N}m^3-3)c$, where we recall $N$ is the number of the antennas at ST. To conclude, $f'(m\sqrt{c})$ is negative when $m$ is smaller than a certain threshold $m_0$, and is always positive when $m$ becomes larger than $m_0$. This means that $f(g)$ decreases firstly and then increases as $g$ increases. Therefore, if the equation $f(g)=\gamma$ has two roots, \eqref{eq:suffconditionPD1} can be achieved by letting $g_0$ be the larger root. Besides, if the equation has a unique root or no root, $M_{\inf}$ becomes sufficient to achieve a detection probability close to $1$.

Via some preliminary numerical simulations, we find that the equation $f(g)=\gamma$ almost always has two roots when a reasonable false alarm probability is considered. Let $g_0$ be the larger root of the equation $f(g)=\gamma$, denoting $M_{\rm PD}$ the solution to the equation $g_m(M)=g_0$, we finally obtain the number of REs required to achieve a detection probability close to $1$. Obviously, $M_{\rm PD}$ is large enough to achieve $P_{d}\approx 1$ since it makes \eqref{eq:nececonditionPD} and \eqref{eq:suffconditionPD1} hold at the same time. Hence, $M_{\rm PD}$ actually provides us a theoretical prediction about the number of REs required to achieve $P_{d}\approx 1$.

\section{Simulation Results}\label{sec:sim}

In this section, we provide numerical simulations to compare the analytical results with the Monte-Carlo simulation results and evaluate the validity of the theoretical predictions about the number of REs required to achieve $P_{d}\approx 1$. In addition, we investigate the impact of the channel Rician factors on the number of REs required to achieve a high detection probability.

\subsection{Simulation Setup}\label{subsec:simsetup}

It is well-known that the detection performance mainly depends on the number of signal samples and the SNR for sensing. In conventional spectrum sensing systems, the SNR can be defined as \cite{bianchi2011performance}
\begin{equation}\label{eq:sensingSNR}
SNR=\frac{\sigma_s^2\mathbb{E}[\|\mathbf{d}\|^2]}{\sigma_u^2}=N\beta_d,
\end{equation}
where we recall that $\mathbf{d}$ denotes the channel from PT to ST. In the RIS-enhanced spectrum sensing system, $\mathbf{d}$ should be substituted with $\mathbf{d}+\mathbf{G}\mathbf{\Phi}\mathbf{f}$.

In CR, the secondary user is supposed to detect the presence of the active PT with a high detection probability at a quite low SNR level, say, $-20dB$ \cite{zeng2010review}. Hence, we set $N\beta_d=-20dB$ in the following simulations. On the other hand, the cascaded channel, i.e., $\mathbf{G\Phi f}$, usually suffers from the double fading effect \cite{griffin2009complete}, i.e, the pathloss of the cascaded channel is much lower than that of the direct-link channel $\mathbf{d}$. The specific value of the pathloss of the cascaded channel mainly depends on the placement of the RIS, we here set $\beta_f\beta_G=-60dB$ without loss of generality. Besides, we set the false alarm probability, i.e., $\alpha$, to $0.1$. The detection threshold can therefore be calculated by \eqref{eq:gamma}, in which $F_2^{-1}(1-\alpha)$ can be computed with the tools provided in \cite{RMTstat}.

\subsection{Accuracy of the Asymptotic Analytical Framework for Evaluating $P_d$ in the Finite Dimensional Regime}\label{subsec:Pdtheoaccu}

\begin{figure*}[!t]
\setlength{\abovecaptionskip}{3pt}
\centering
\subfigure[$N=64$]{%
\epsfxsize=0.24\textwidth \leavevmode
\epsffile{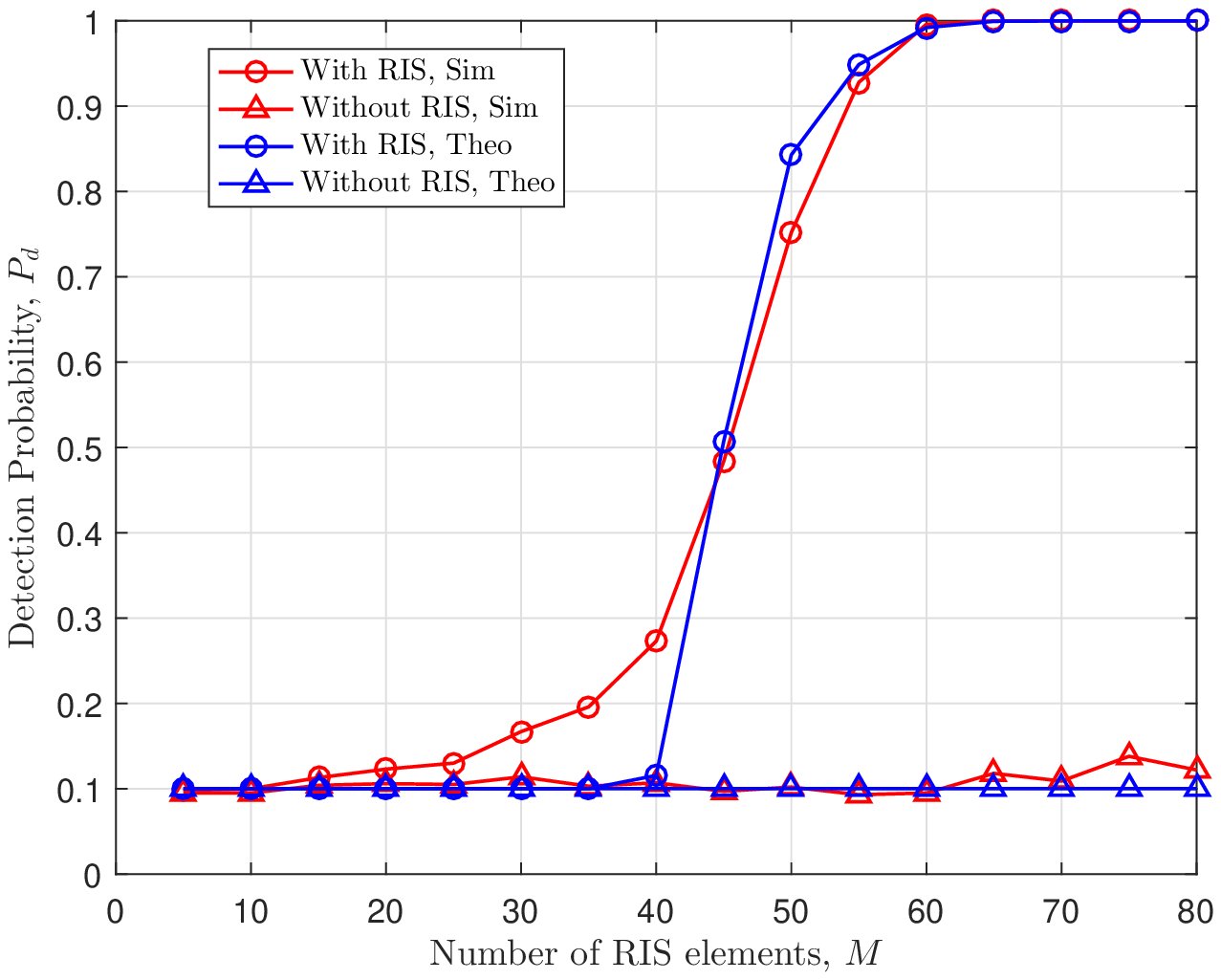}}
\subfigure[$N=128$]{%
\epsfxsize=0.24\textwidth \leavevmode
\epsffile{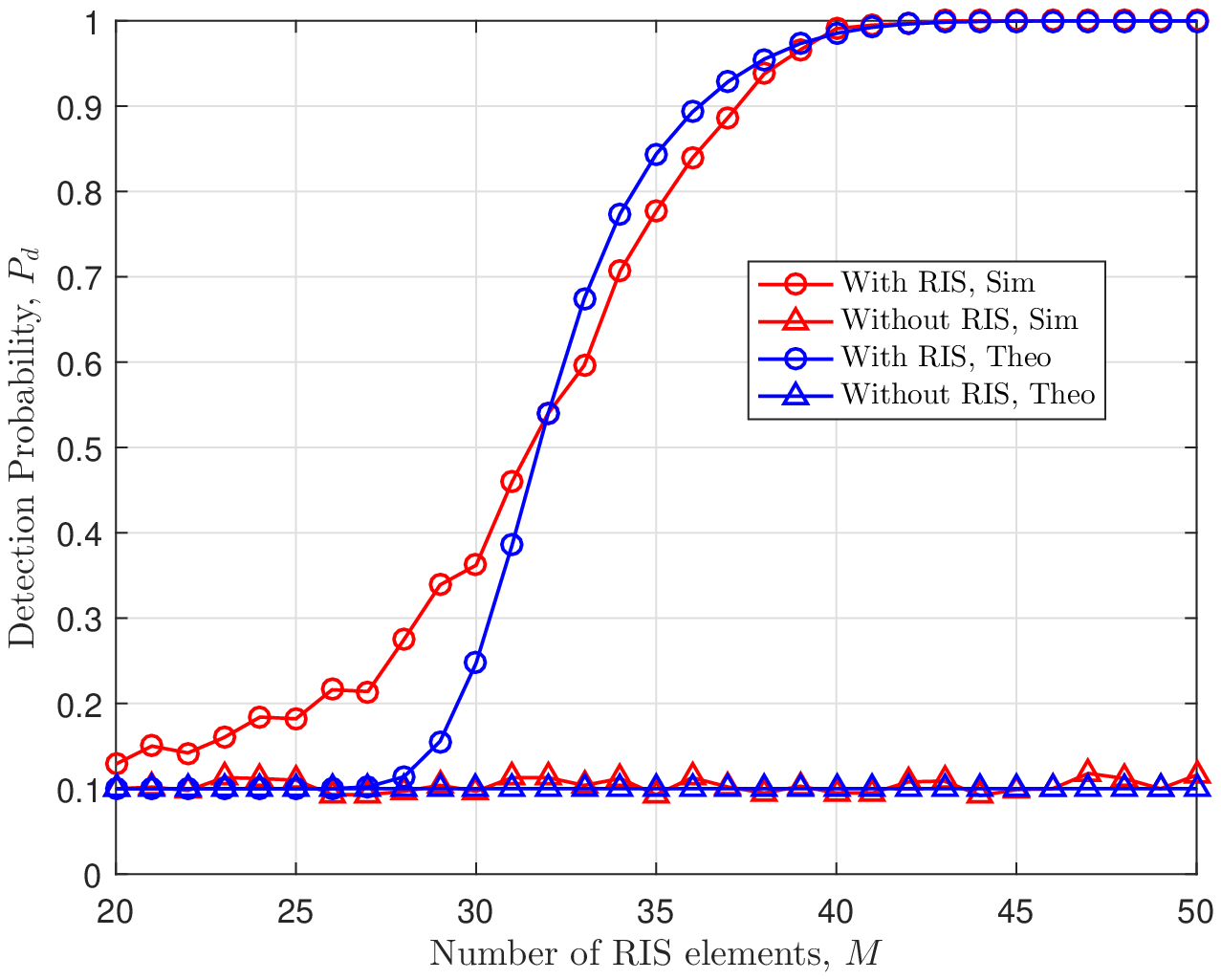}}
\subfigure[$N=256$]{%
\epsfxsize=0.24\textwidth \leavevmode
\epsffile{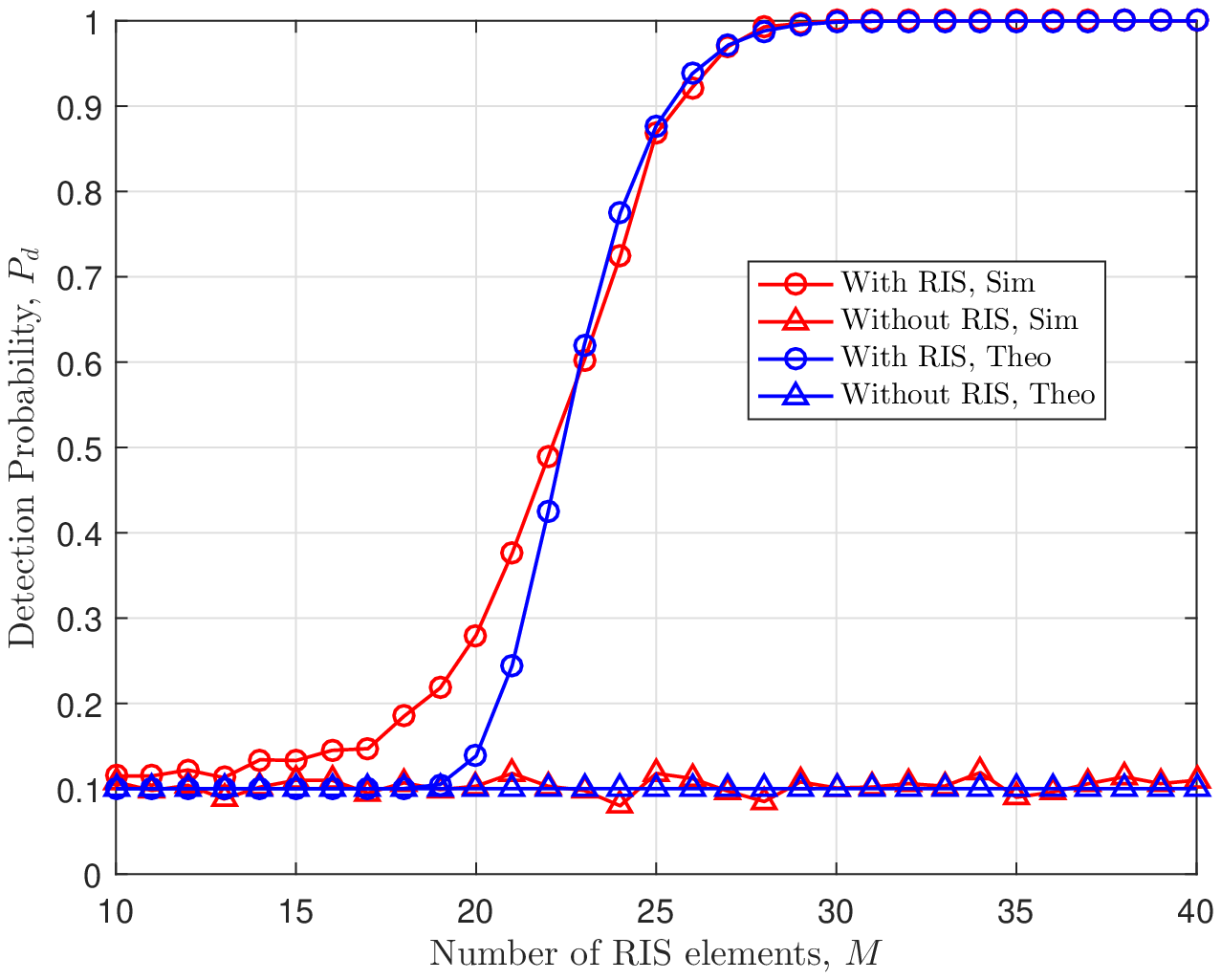}}
\subfigure[$N=512$]{%
\epsfxsize=0.24\textwidth \leavevmode
\epsffile{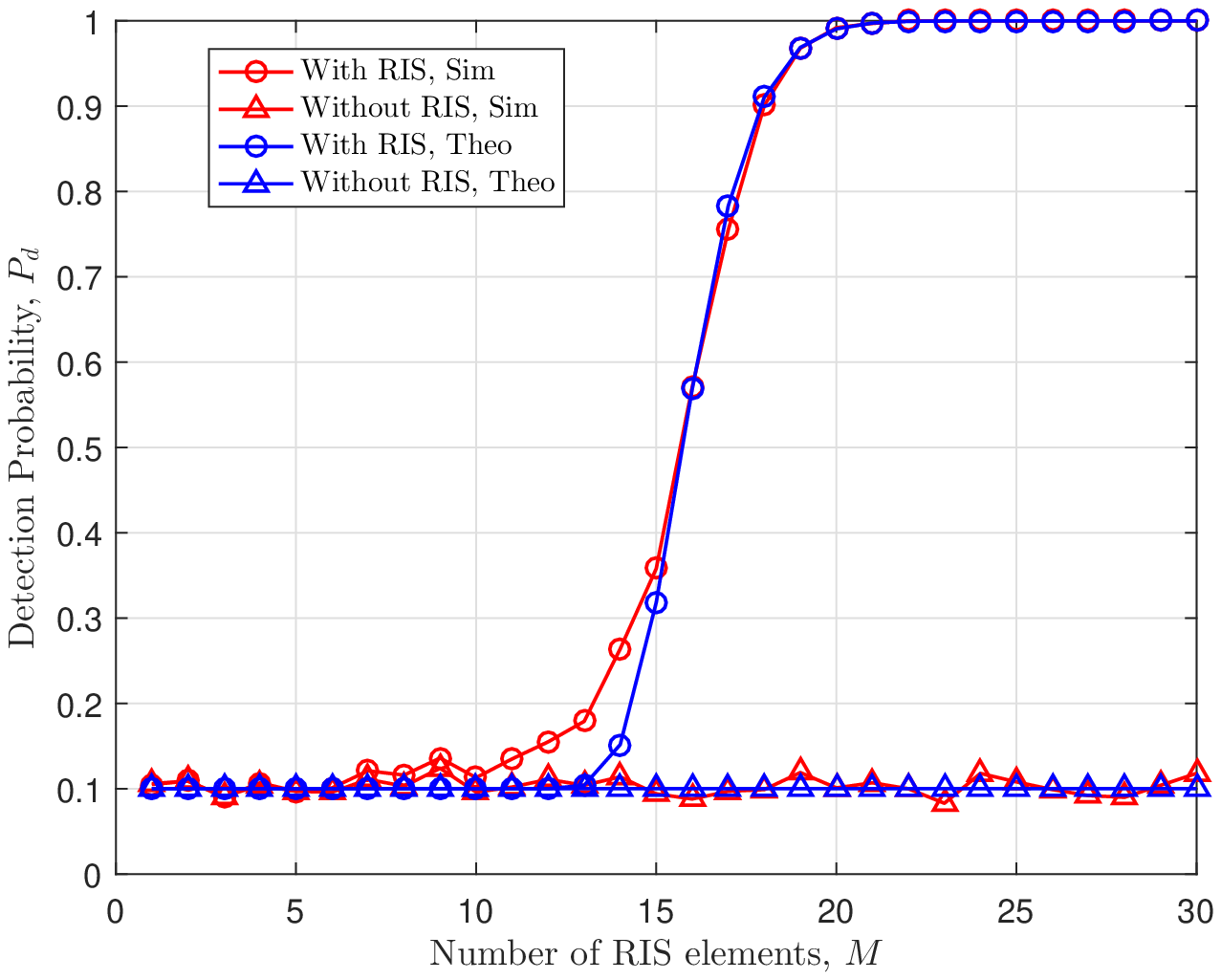}}
\caption{Comparison of the theoretical results and the Monte-Carlo simulation results of $P_d$ for different $M$'s and $N$'s.}\label{fig:RIS4SS_M_theo_sim}
\end{figure*}

We recall that the proposed analytical framework for evaluating $P_d$ is supposed to be accurate in the asymptotic regime. However, the number of antennas at ST can not be infinitely large even in the massive MIMO scenarios, where the number of antennas can be $64$, $128$ or even larger. With the fact that the Rayleigh case and the LoS case are actually special cases of the Rician RIS-related channels, we provide the results under the Rician RIS-related channels, where both $\kappa_f$ and $\kappa_G$ are set to $5$ without loss of generality. Besides, from the analysis in Section \ref{sec:RIS4SS}, the sensing performance mainly depends on the number of antennas at ST, i.e., $N$, and the number of signal samples, namely, $n=N/c$. To evaluate the impact of $N$ on the accuracy of the analytical framework, we fix $c=0.01$ in the simulations, i.e., $n=100N$. With these settings, the mean and variance of $g$ can be derived via \eqref{eq:mgRi}. Then, the analytical $P_d$ can be computed by numerically evaluating the double integral as shown in \eqref{eq:Pdint}. Besides, the Monte-Carlo simulation results are obtained via $1000$ random realizations for each pair of $M$ and $N$ under each hypothesis. The results for different $N$'s are shown in Fig. \ref{fig:RIS4SS_M_theo_sim}.

As shown in all the subfigures in Fig. \ref{fig:RIS4SS_M_theo_sim}, we have $P_d=P_{fa}$ for both the theoretical results and the simulation results when RIS is not employed. This phenomenon is consistent with our analysis for {\it Case} $1$ in Section \ref{sec:perfectdetection}. When the number of signal samples is not large enough to ensure $g>\sqrt{c}$ with a non-zero probability, we have $P_{d}\approx P_{fa}$. In the proposed RIS-enhanced spectrum sensing system, we can see that the detection probability increases as the number of REs increases. This is due to the fact that the RIS introduces an additional signal propagation path for improving the sensing SNR. Therefore, with the optimized phase shift matrix, $g>\sqrt{c}$ can almost surely holds by employing a large number of REs, which can be validated by the fact that the detection probability approaches $1$ when the number of REs becomes large. To conclude, with enough REs, the RIS-enhanced spectrum sensing system can achieve $P_{d}\approx 1$ without requiring more signal samples.

However, it can also be observed that the analytical results deviate a little from the Monte-Carlo simulation results. The reason is that Theorem \ref{thm:spike} actually holds in the asymptotic regime, where $N, n\to\infty$ with $N/n\to c$. When $N, n$ are not large enough, the asymptotic distributions of $T$ as shown in \eqref{eq:gdfleq} and \eqref{eq:gdfgeq} are shown to deviate a little from the true distributions \cite{couillet2011random}. Obviously, the simulation results with $N=64$ is far away from the asymptotic regime and therefore the gap between the analytical results and simulation results appears. This can be verified by the simulation results shown in Fig. \ref{fig:RIS4SS_M_theo_sim}(b), Fig. \ref{fig:RIS4SS_M_theo_sim}(c), and Fig. \ref{fig:RIS4SS_M_theo_sim}(d). As $N$ becomes larger, the accuracies of the analytical results are improved. Thus, we can make a conjecture that the analytical results are accurate in the asymptotic regime.

Despite the deviations of the analytical results in the finite dimensional regime, it can be observed that the analytical results approximate the simulation results well under where the detection probability approaches $1$. The reason is that, in the finite dimensional regime, the asymptotic distribution of $T$ under $\mathcal{H}_1$ is not so accurate when $g$ is near $\sqrt{c}$, i.e., \eqref{eq:gdfleq} is not accurate when $g$ is a little smaller than $\sqrt{c}$ and \eqref{eq:gdfgeq} is not accurate when $g$ is a little larger than $\sqrt{c}$. When $g$ is much larger than $\sqrt{c}$, \eqref{eq:gdfgeq} is still accurate for moderate $N$. Hence, we can infer that the theoretical predictions about the number of REs required to achieve $P_d\approx 1$ are quite accurate.

\subsection{Validity of the Theoretical Predictions about the Number of REs Required to Achieve $P_d\approx 1$}\label{subsec:Mtheoaccu}

In Section \ref{sec:perfectdetection}, we have provided the necessary condition and the sufficient condition of the number of REs required to achieve $P_d\approx 1$. We recall that $M_{\inf}$ is the lower bound and $M_{\rm PD}$ is large enough to achieve a detection probability close to $1$. Here, we provide the Monte-Carlo simulation results to evaluate the validity of $M_{\inf}$ and $M_{\rm PD}$. The number of antennas at ST, namely, $N$, is fixed at $64$. As aforementioned, more REs can ensure a higher $g$, and the RIS-enhanced spectrum sensing system can achieve $P_d\approx 1$ by letting $g>\sqrt{c}$ almost surely holds. As shown in Fig. \ref{fig:MpdMinf}, we plot the simulation results for the minimal number of REs required to achieve different levels of $P_d$, as $c$ increases. It is worth noting that when $N$ is fixed, larger $c$ means less signal samples for spectrum sensing. Besides, the theoretical $M_{\inf}$ and $M_{\rm PD}$ are also plotted to evaluate the validity of the theoretical predictions. Obviously, the simulation results are consistent with the theoretical predictions, i.e., $M_{\inf}$ is the lower bound of the number of REs required to achieve a high detection probability and $M_{\rm PD}$ is large enough to achieve $P_d\approx 1$. More specifically, $M_{\rm PD}$ can achieve a detection probability as high as $1-10^{-5}$. In addition, it can be observed that more REs are required to achieve a higher detection probability.

However, for spectrum sensing problems, it may be not necessary to realize a detection probability as high as $1-10^{-3}$ or a higher one. For example, in IEEE $802.22$ standard, achieving $P_d>0.9$ with $P_{fa}<0.1$ under $SNR=-20dB$ is enough to meet the requirements for TV signal detection \cite{zeng2010review}. Therefore, it is necessary to evaluate the accuracy of the theoretical predictions about the number of REs required to achieve a detection probability higher than $0.9$. In Fig. \ref{fig:cmptheosim}, the analytical results and the simulation results are provided. In particular, the theoretical predictions are obtained by searching the minimum $M$ which lets the integral in \eqref{eq:Pdint} satisfy the constraint of $P_d$, e.g., $P_d>0.9$. It can be observed that the analytical results approximate the simulation results well for $P_d>0.9$. In other words, the proposed asymptotic analytical framework for evaluating $P_{d}$ can actually provide us accurate predictions about the number of REs required to achieve a detection probability higher than $0.9$. Therefore, the theoretical predictions can help us determine the number of REs when we design the RIS-enhanced spectrum sensing system.

\begin{figure}[!t]
\centering
\epsfxsize=0.3\textwidth \leavevmode
\epsffile{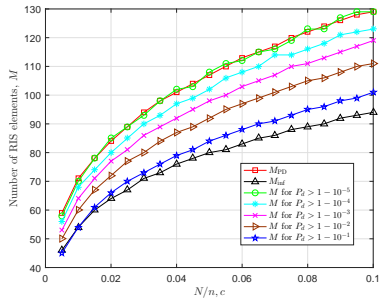}
\caption{The Monte-Carlo simulation results for the number of REs required to achieve different levels of $P_d$ and theoretical $M_{\inf}$, $M_{\rm PD}$.}\label{fig:MpdMinf}
\end{figure}

\begin{figure}[!t]
\centering
\epsfxsize=0.3\textwidth \leavevmode
\epsffile{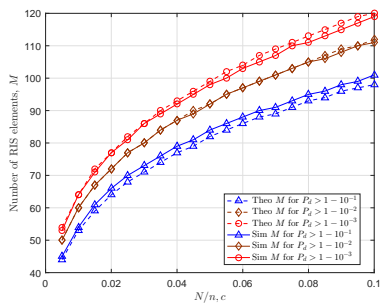}
\caption{Comparison between the analytical results and Monte-Carlo simulation results for the number of REs required to achieve different levels of $P_d$.}\label{fig:cmptheosim}
\end{figure}

\subsection{Impact of the Rician Factors of RIS-related Channels}\label{subsec:impactRicianFac}

\begin{figure*}[!t]
\centering
\subfigure[$\kappa_f=\kappa_G=0$]{%
\epsfxsize=0.25\textwidth \leavevmode
\epsffile{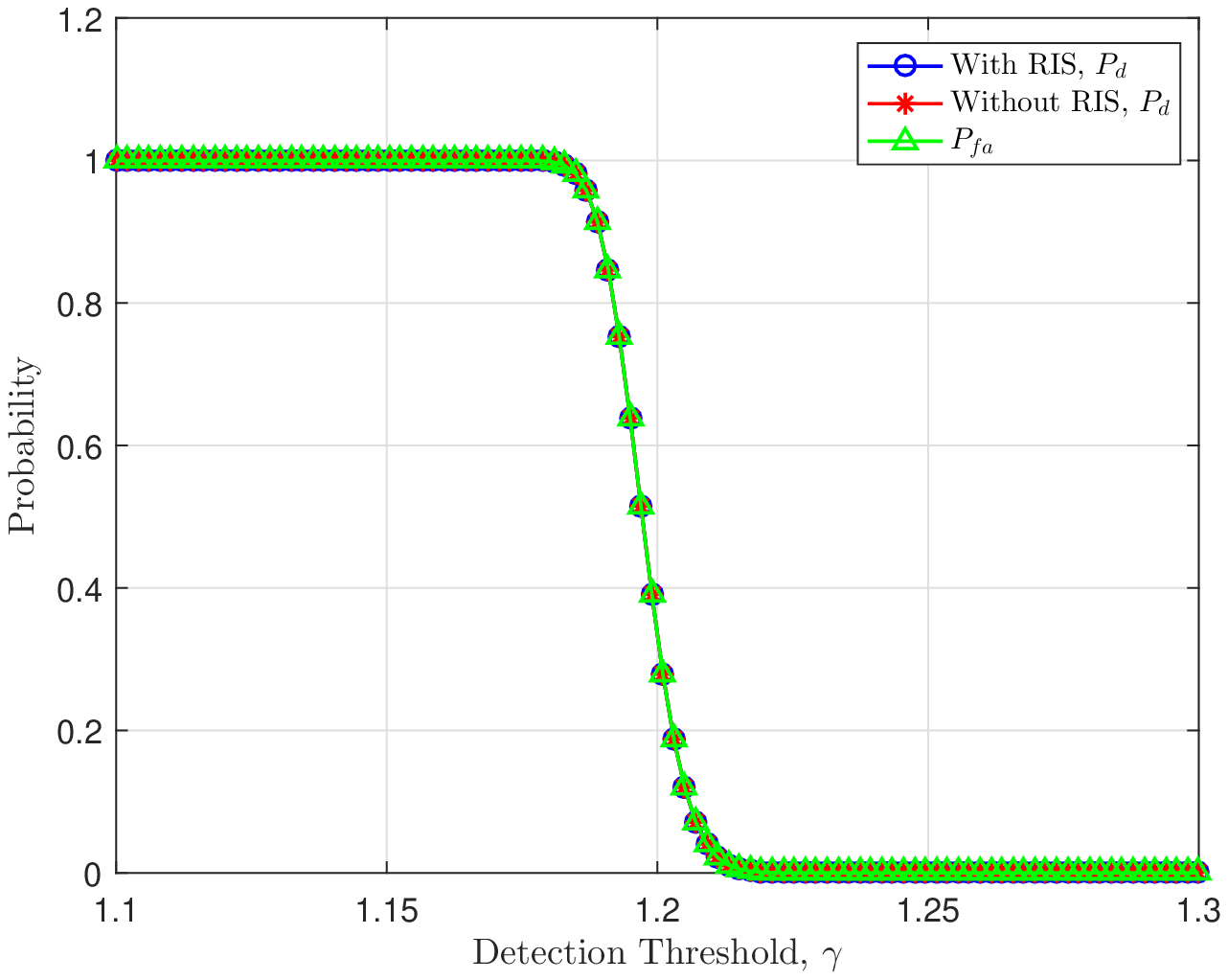}}\quad
\subfigure[$\kappa_f=\kappa_G=1$]{%
\epsfxsize=0.25\textwidth \leavevmode
\epsffile{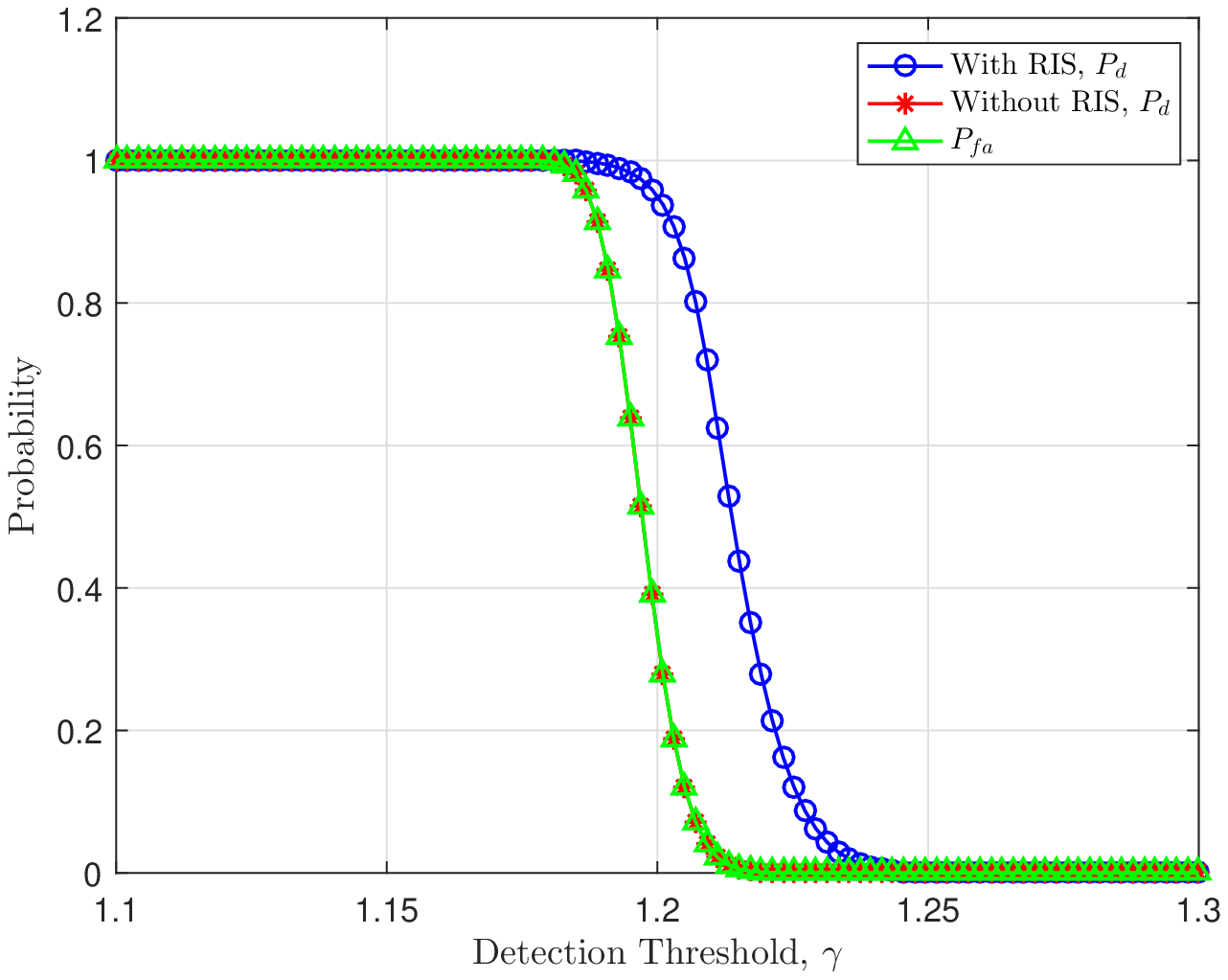}}\quad
\subfigure[$\kappa_f=\kappa_G=10$]{%
\epsfxsize=0.25\textwidth \leavevmode
\epsffile{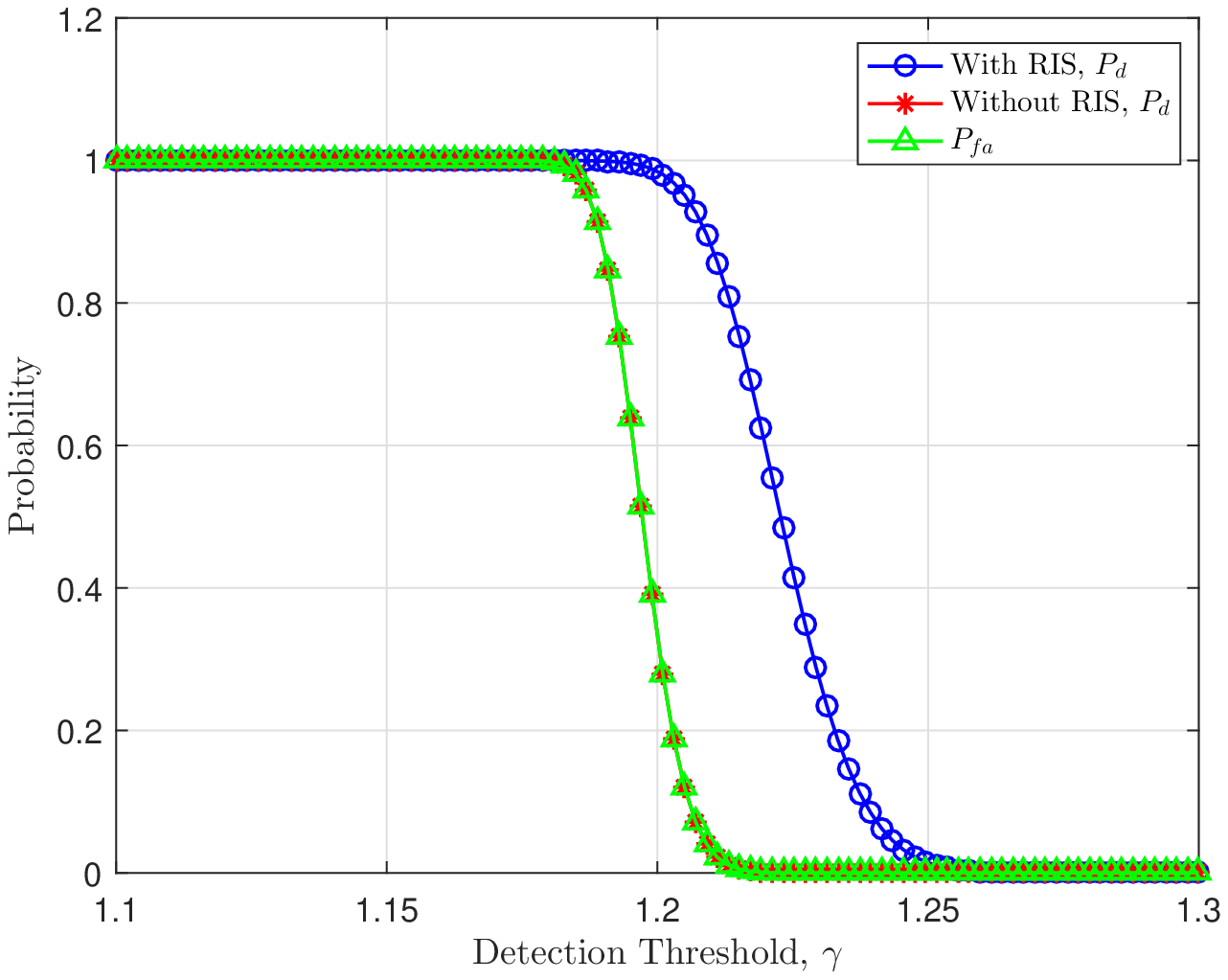}}\quad
\caption{Illustration for the number of REs required to achieve a detection probability close to $1$ under different RIS-related channel conditions.}\label{fig:illuchannelcond}
\end{figure*}

\begin{figure}[!t]
\centering
\epsfxsize=0.3\textwidth \leavevmode
\epsffile{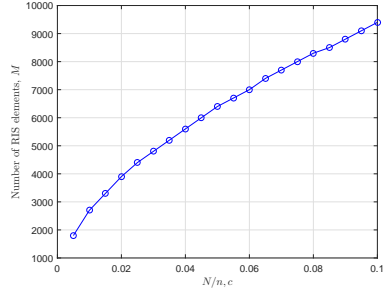}
\caption{Analytical results for the number of REs required to achieve $P_d>0.99$ under Rayleigh RIS-related channels, i.e., $\kappa_f=\kappa_G=0$.}\label{fig:NUMRay}
\end{figure}

\begin{figure}[!t]
\centering
\epsfxsize=0.3\textwidth \leavevmode
\epsffile{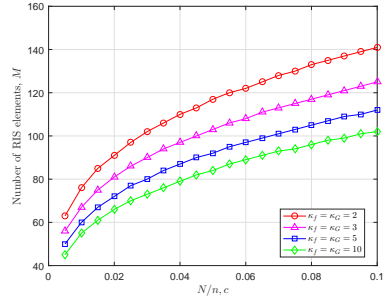}
\caption{Analytical results for the number of REs required to achieve $P_d>0.99$ under Rician RIS-related channels with different Rician factors.}\label{fig:NUMRi}
\end{figure}

Beyond the number of REs, it is obvious that the RIS-related channel Rician factors, namely, $\kappa_f$ and $\kappa_G$, also have a large impact on the detection probability. In this section, we mainly investigate the impact of $\kappa_f$ and $\kappa_G$ on the number of REs required to achieve $P_{d}>0.99$, since Fig. \ref{fig:cmptheosim} shows that the theoretical predictions about the number of REs required to achieve $P_d>0.99$ are rather accurate. Note that the LoS case and the Rayleigh case are actually special cases of Rician RIS-related channels, we first illustrate the impact of the Rician factors in Fig. \ref{fig:illuchannelcond}, which is obtained under where $N=64$, $M=50$ and $c=0.01$. In Fig. \ref{fig:illuchannelcond}, we plot the theoretical false alarm probability, and two theoretical detection probabilities, i.e., with RIS and without RIS, with respect to the detection threshold. As $\gamma$ increases, all the three probabilities decrease from $1$ to $0$. Moreover, due to the low SNR and the lack of signal samples, the distributions of the test statistic under the two hypotheses are the same, the curve of $P_d$ without RIS is thus overlapped with that of $P_{fa}$. Besides, under the Rayleigh case, i.e., $\kappa_f=\kappa_G=0$, the three curves are almost overlapped because of the small number of REs and the random phase shift design. On the contrary, under the Rician case, the RIS-enhanced spectrum sensing system can achieve a higher $P_{d}$ while the $P_{fa}$ is kept to be small. In addition, by comparing Fig. \ref{fig:illuchannelcond}(b) with \ref{fig:illuchannelcond}(c), it can be observed that, with the same number of REs, the gain of $P_{d}$ is more significant when the Rician factors are larger.

On the other hand, to achieve the same $P_{d}$, we can infer that less REs are required when the Rician factors are larger. Thus, we here also provide theoretical results for the number of REs required to achieve $P_{d}>0.99$ for different Rician factors when $N=64, P_{fa}=0.1$. When the RIS-related channels are of Rayleigh and the phase shift matrix is randomly chosen, the results are shown in Fig. \ref{fig:NUMRay}. To cover the large pathloss of the cascaded channel, thousands of REs are required for the Rayleigh case. Similarly, the results for the Rician case are shown in Fig. \ref{fig:NUMRi}. For the same $c$, i.e., the same number of signal samples, less REs are required to achieve the same $P_{d}$ as the channel Rician factors become larger. This phenomenon further justifies the fact that the statistical phase shift design approaches the optimal as the channel Rician factors go to infinity.

\section{Conclusion}\label{sec:conclu}

In this study, we have proposed an RIS-enhanced spectrum sensing system to improve the detection performance without a large number of sensing signal samples. To this end, we have firstly proposed an asymptotic analytical framework for evaluating the detection probability of the MED approach. Then, we have presented the statistical design for the phase shift matrix of the RIS, which only requires the statistical CSI. With this design, the asymptotic distributions of the equivalent channel gains have been derived. Combined with the analytical framework for evaluating $P_{d}$, we have analyzed the necessary condition and the sufficient condition of the number of REs required to achieve a detection probability close to $1$, respectively. Last but not least, the Monte-Carlo simulation results have been provided to evaluate the validity of the theoretical preditions, and the results have shown that the proposed RIS-enhanced can substantially improve the detection probability without more sensing signal samples.


\appendix

\subsection{Proof of Lemma \ref{lem:rdfRay}}\label{apdx:prooflem1}
\begin{prop}\label{prop:tracelemma}
Consider a series of symmetric matrices $\mathbf{A}_1, \mathbf{A}_2, \cdots$, with $\mathbf{A}_M\in\mathbb{C}^{M\times M}$ which have uniformly bounded spectrum norm and a series of vectors $\mathbf{x}_1, \mathbf{x}_2, \cdots,$ with $\mathbf{x}_M\in\mathbb{C}^{M}$ which have {\it i.i.d.} elements of zero mean, variance $1/M$ and eighth moment of order $O(1/M^4)$, independent of $\mathbf{A}_M$. Then
\begin{equation}\label{eq:tracelemmamean}
\mathbf{x}_{M}^H\mathbf{A}_M\mathbf{x}_{M}-\frac{1}{M}\Tr(\mathbf{A}_M)\stackrel{a.s.}{\longrightarrow}0
\end{equation}
Besides, for the real case where the involved quantities are real and the entries of $\mathbf{x}_M$ have fourth order moment of order $O(1/M^2)$, assume $\mathbf{A}_M$ has a limit spectrum density, $F^{\mathbf{A}}$, \cite{tse2000linear} gives a CLT for $\mathbf{x}_{M}^H\mathbf{A}_M\mathbf{x}_{M}-\frac{1}{M}tr\mathbf{A}_M$, i.e.,
\begin{equation}\label{eq:tracelemmaCLT}
\sqrt{M}[\mathbf{x}_{M}^H\mathbf{A}_M\mathbf{x}_{M}-\frac{1}{M}\Tr(\mathbf{A}_M)]\xrightarrow[M\to\infty]{\mathcal{D}}Z\sim\mathcal{N}(0, v),
\end{equation}
where the variance $v$ depends on the limit spectrum density of $\mathbf{A}_M$, namely, $F^{\mathbf{A}}$, and the fourth moment of the entries of $\mathbf{x}_M$ as
\begin{equation}\label{eq:tracelemmavarreal}
v = 2\int t^2\dif F^{\mathbf{A}}(t)+(\mathbb{E}[x_{11}^4]-3)\left(\int t\dif F^{\mathbf{A}}(t)\right)^2,
\end{equation}
and $x_{11}$ satisfies the same distribution as the entries of $\mathbf{x}_M$ but with unit variance.

However, \cite{tse2000linear} only provides the results under the real case. When $x_{11}$ is complex with $\mathbb{E}x_{11}^2=0$, we will show that the variance becomes
\begin{equation}\label{eq:tracelemmavarcmp}
v = \int t^2\dif f^{\mathbf{A}}(t)+(\mathbb{E}[|x_{11}|^4]-2)\left(\int t\dif F^{\mathbf{A}}(t)\right)^2.
\end{equation}
\end{prop}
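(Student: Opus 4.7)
The plan is to follow the moment-based martingale-CLT strategy used in Tse's proof of the real case and recompute the fourth-moment bookkeeping carefully in the complex regime where $\mathbb{E}[x_{11}^2]=0$; the difference with the real case will lie entirely in which index-pairings of the fourth-order moments survive. The almost-sure convergence in \eqref{eq:tracelemmamean} I would cite from the standard Bai--Silverstein trace lemmas (Chebyshev plus Borel--Cantelli, using the $O(M^{-4})$ eighth-moment bound), and the real-case CLT \eqref{eq:tracelemmaCLT} is taken as given; the new content is the variance formula \eqref{eq:tracelemmavarcmp}.

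To compute the limiting variance I would split the centred form $Y_M = \mathbf{x}_M^H\mathbf{A}_M\mathbf{x}_M - M^{-1}\Tr(\mathbf{A}_M)$ as $Y_M = D_M + O_M$, with diagonal part $D_M = \sum_i A_{ii}(|x_i|^2 - 1/M)$ and off-diagonal part $O_M = \sum_{i\neq j}\bar{x}_i A_{ij}x_j$. The cross moment $\mathbb{E}[D_M\overline{O_M}]$ vanishes because every mixed monomial contains a lone factor of $x_i$ or $\bar{x}_i$ that is independent of the rest, and the diagonal variance is immediate, $\mathrm{Var}(D_M) = M^{-2}(\mathbb{E}|x_{11}|^4-1)\sum_i A_{ii}^2$. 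The off-diagonal variance reduces to classifying $\mathbb{E}[\bar{x}_i x_j x_k \bar{x}_l]$ for $i\neq j$ and $k\neq l$: the real case produces two surviving pairings, $(i=k,j=l)$ and $(i=l,j=k)$, which together account for the prefactor $2$ in Tse's formula; in the complex case the second pairing produces a factor $\mathbb{E}[x_j^2]=0$ and drops out, leaving
\[
\mathrm{Var}(O_M) \;=\; \frac{1}{M^2}\sum_{i\neq j}|A_{ij}|^2 \;=\; \frac{1}{M^2}\bigl(\Tr(\mathbf{A}_M^2) - \sum_i A_{ii}^2\bigr),
\]
using Hermitianity of $\mathbf{A}_M$. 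Adding the two variances, multiplying by $M$ and sending $M\to\infty$ yields exactly \eqref{eq:tracelemmavarcmp} after using $M^{-1}\Tr(\mathbf{A}_M^2)\to\int t^2\,dF^{\mathbf{A}}(t)$ together with the ancillary limit $M^{-1}\sum_i A_{ii}^2\to(\int t\,dF^{\mathbf{A}})^2$ implicit in the statement. The passage from the real coefficients $(2,-3)$ to the complex coefficients $(1,-2)$ is precisely the bookkeeping of the single missing pairing.

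For asymptotic normality I would reuse the Bai--Silverstein martingale-difference decomposition $Y_M=\sum_{k=1}^M Z_k$ with $Z_k = (\mathbb{E}_k-\mathbb{E}_{k-1})[\mathbf{x}_M^H\mathbf{A}_M\mathbf{x}_M]$, where $\mathbb{E}_k$ denotes conditioning on $(x_1,\dots,x_k)$, and invoke the martingale CLT: this requires (i) convergence in probability of $\sum_k\mathbb{E}[|Z_k|^2\mid\mathcal{F}_{k-1}]$ to $v$, and (ii) a Lindeberg condition, the latter being routine given the eighth-moment bound. The main obstacle I anticipate is step (i): promoting the mean-square variance calculation to convergence in probability requires a second-level variance bound involving up to eighth moments of the $x_i$'s and quartic products of rows of $\mathbf{A}_M$. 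Most of these terms have direct analogues in the real case; the delicate point is re-verifying that every term quadratic in $\mathbb{E}[x^2]$ vanishes identically under the complex assumption rather than merely to leading order, so that no surprise contribution is picked up in the complex variance beyond the unconditional bookkeeping above.
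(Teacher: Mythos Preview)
Your proposal is correct and, at the level of the variance computation, follows the same route as the paper: both reduce the scaled variance to
\[
M\,\mathrm{Var}(Y_M)\;=\;\frac{1}{M}\Tr(\mathbf{A}_M^2)\bigl(1+|\mathbb{E}x_{11}^2|^2\bigr)\;+\;\frac{1}{M}\sum_i A_{M,ii}^2\bigl(\mathbb{E}|x_{11}|^4-2-|\mathbb{E}x_{11}^2|^2\bigr)
\]
and then specialize to $\mathbb{E}x_{11}^2=0$. The difference is one of packaging: the paper simply quotes this identity from the literature and plugs in, whereas you rederive it from scratch via the diagonal/off-diagonal split $Y_M=D_M+O_M$, which is more self-contained and makes the $(2,-3)\to(1,-2)$ transition in the coefficients visible as the loss of the $(i=l,j=k)$ pairing. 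For the passage $M^{-1}\sum_i A_{M,ii}^2\to(\int t\,dF^{\mathbf{A}})^2$ the paper invokes an external result of Bai et al.\ giving $A_{M,ii}\to\int t\,dF^{\mathbf{A}}$, while you flag it as ``implicit in the statement''; be aware that this limit genuinely fails for arbitrary deterministic Hermitian sequences (e.g.\ diagonal $\mathbf{A}_M$), so both arguments are tacitly relying on extra structure of $\mathbf{A}_M$ beyond a limiting spectral distribution. Finally, you go further than the paper by sketching the martingale-CLT argument for asymptotic normality itself; the paper's proof does not address this and only supplies the variance formula.
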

\begin{proof}
It can be verified that \cite{liang2007asymptotic, fang2014spectral}
\begin{align}\label{eq:tracelemmaCLTorigin}
  v&=\mathbb{E}\left[\mathbf{x}_M^H\mathbf{A}_M\mathbf{x}_M-\frac{1}{M}\Tr(\mathbf{A}_M)\right]^2 \nonumber\\
  &=\frac{1}{M}\left[\Tr(\mathbf{A}_M^2)+|\mathbb{E}(x_{11}^2)|^2\Tr(\mathbf{A}_M\mathbf{A}_M^H)\right]+\frac{1}{M}\left[\sum_{i=1}^{M}A_{M, ii}^2(\mathbb{E}|x_{11}|^4-2-|\mathbb{E}x_{11}^2|^2)\right].
\end{align}
Since $\mathbf{A}_M$ is Hermitian, therefore $\Tr(\mathbf{A}_M^2)=\Tr(\mathbf{A}_M\mathbf{A}_M^H)$ and the diagonal elements of $\mathbf{A}_M$ are real. In addition, with the law of large number (LLN), we have
\begin{equation}\label{eq:meantrA2}
\lim_{M\to\infty}\frac{1}{M}\Tr(\mathbf{A}_M^2)=\lim_{M\to\infty}\frac{1}{M}\sum_{i=1}^{M}\lambda_{i}^2=\int t^2\dif F^{\mathbf{A}}(t).
\end{equation}
On the other hand, using \cite[Theorem $1$]{bai2007asymptotics}, we can obtain
\begin{equation}\label{eq:limitAii}
\lim_{M\to\infty}\left|A_{M, ii}-\int t\dif F^{\mathbf{A}}(t)\right|\xrightarrow{a.s.} 0.
\end{equation}
Therefore, for the real case with $\mathbb{E}x_{11}^2=1$ and the complex case with $\mathbb{E}x_{11}^2=0$, \eqref{eq:tracelemmavarreal} and \eqref{eq:tracelemmavarcmp} can be easily obtained by substituting \eqref{eq:meantrA2} and \eqref{eq:limitAii} into \eqref{eq:tracelemmaCLTorigin}.
\end{proof}

Let $\tilde{\mathbf{f}}\in\mathbb{C}^{M}$ be a complex Gaussian random vector with zero mean and covariance matrix $\mathbf{I}_M$, $\tilde{\mathbf{G}}\in\mathbb{C}$ be a matrix whose entries are {\it i.i.d.} complex Gaussian random variables with zero mean and unit variance, $\mathbf{\Phi}$ be an arbitrary complex diagonal matrix whose diagonal elements are of unit modulus. To obtain the distribution of $\tilde{\mathbf{f}}^H\mathbf{\Phi}^H\tilde{\mathbf{G}}^H\tilde{\mathbf{G}}\mathbf{\Phi}\tilde{\mathbf{f}}$, we first give a scaled version of \eqref{eq:tracelemmaCLT}, i.e.,
\begin{equation}\label{eq:tracelemmaCLTscaled}
\tilde{\mathbf{f}}^H\mathbf{A}_M\tilde{\mathbf{f}}-\Tr(\mathbf{A}_M)\xrightarrow[M\to\infty]{\mathcal{D}}Z\sim\mathcal{N}(0, Mv),
\end{equation}
where $\mathbf{A}_M$, $v$ are defined as aforementioned. \eqref{eq:tracelemmaCLTscaled} can be easily proved with the fact that $\mathbb{E}[\tilde{\mathbf{f}}^H\mathbf{A}_M\tilde{\mathbf{f}}]=M\mathbb{E}[\mathbf{x}_{M}^H\mathbf{A}_M\mathbf{x}_{M}]$.
Then, the asymptotic distribution of $\tilde{\mathbf{f}}^H\mathbf{\Phi}^H\tilde{\mathbf{G}}^H\tilde{\mathbf{G}}\mathbf{\Phi}\tilde{\mathbf{f}}$ can be directly derived by letting $\mathbf{A}_M=\mathbf{\Phi}^H\tilde{\mathbf{G}}^H\tilde{\mathbf{G}}\mathbf{\Phi}$ in \eqref{eq:tracelemmaCLTscaled}. The expectation of $\tilde{\mathbf{f}}^H\mathbf{A}_M\tilde{\mathbf{f}}$ is exactly $\mathbb{E}\Tr[\mathbf{\Phi}^H\tilde{\mathbf{G}}^H\tilde{\mathbf{G}}\mathbf{\Phi}]$ and the variance can be accurately approximated by $Mv$ \cite{tse2000linear}, while $v$ depends on the limit spectrum density of $\mathbf{A}_M$, namely, $F^{\mathbf{A}}$, and the fourth moment of the entries of $\tilde{\mathbf{f}}$. Denoting the $i$-th column of $\mathbf{G}^H$ by $\tilde{\mathbf{g}}_i$, the expectation value can be calculated as
\begin{align}\label{eq:traceAm}
\mathbb{E}\Tr(\mathbf{\Phi}^H\tilde{\mathbf{G}}^H\tilde{\mathbf{G}}\mathbf{\Phi})&=\mathbb{E}\Tr(\tilde{\mathbf{G}}^H\tilde{\mathbf{G}}) =\Tr\left(\sum_{i=1}^{N}\mathbb{E}[\tilde{\mathbf{g}}_i\tilde{\mathbf{g}}_i^H]\right)
=\Tr(N\mathbf{I}_M)=MN.
\end{align}
Note that both $\tilde{\mathbf{f}}$ and $\tilde{\mathbf{G}}$ are complex, the parameter, $v$, in \eqref{eq:tracelemmaCLTscaled} can be computed by \eqref{eq:tracelemmavarcmp}. Specifically, the first order moment of $F^{\mathbf{A}}$ can be derived as
\begin{align}\label{eq:mpm1}
  \int t\dif F^{\mathbf{A}}(t)&=\lim_{M\to\infty}\frac{1}{M}\mathbb{E}\Tr(\mathbf{A}_M)
  =\lim_{M\to\infty}\frac{1}{M}\mathbb{E}\Tr(\mathbf{\Phi}^H\tilde{\mathbf{G}}^H\tilde{\mathbf{G}}\mathbf{\Phi})
  =\frac{1}{M}MN=N.
\end{align}
Besides, the second order moment of $F^{\mathbf{A}}$ can be decomposed as
\begin{align}\label{eq:mpm2}
  \int t^2\dif f^{\mathbf{A}}(t)&=\lim_{M\to\infty}\frac{1}{M}\mathbb{E}\Tr(A^2)
  =\lim_{M\to\infty}\frac{1}{M}\mathbb{E}\Tr(\mathbf{\Phi}^H\tilde{\mathbf{G}}^H\tilde{\mathbf{G}}\mathbf{\Phi}\mathbf{\Phi}^H\tilde{\mathbf{G}}^H\tilde{\mathbf{G}}\mathbf{\Phi})\nonumber\\
  &=\lim_{M\to\infty}\frac{1}{M}\mathbb{E}\Tr(\tilde{\mathbf{G}}^H\tilde{\mathbf{G}}\tilde{\mathbf{G}}^H\tilde{\mathbf{G}})
  =\lim_{M\to\infty}\frac{1}{M}\Tr\left[\mathbb{E}\left(\sum_{i=1}^{N}\tilde{\mathbf{g}}_i\tilde{\mathbf{g}}_i^H\sum_{j=1}^{N}\tilde{\mathbf{g}}_j\tilde{\mathbf{g}}_j^H\right)\right]\nonumber\\
  &=\lim_{M\to\infty}\frac{1}{M}\Tr\left[\mathbb{E}\left(\sum_{i=1}^{N}\tilde{\mathbf{g}}_i\tilde{\mathbf{g}}_i^H\sum_{j=1,j\neq i}^{N}\tilde{\mathbf{g}}_j\tilde{\mathbf{g}}_j^H+\sum_{i=1}^{N}\tilde{\mathbf{g}}_i\tilde{\mathbf{g}}_i^H\tilde{\mathbf{g}}_i\tilde{\mathbf{g}}_i^H\right)\right]\nonumber\\
  &=\lim_{M\to\infty}\frac{1}{M}\left[\Tr\left(\sum_{i=1}^{N}\mathbb{E}[\tilde{\mathbf{g}}_i\tilde{\mathbf{g}}_i^H]\sum_{j=1,j\neq i}^{N}\mathbb{E}[\tilde{\mathbf{g}}_j\tilde{\mathbf{g}}_j^H]\right)+ \sum_{i=1}^{N}
  \Tr\left(\mathbb{E}[\tilde{\mathbf{g}}_i\tilde{\mathbf{g}}_i^H\tilde{\mathbf{g}}_i\tilde{\mathbf{g}}_i^H]\right)\right].
\end{align}
The first component of \eqref{eq:mpm2} can be simply derived by
\begin{align}\label{eq:mpm2p1}
\Tr\left(\sum_{i=1}^{N}\mathbb{E}[\tilde{\mathbf{g}}_i\tilde{\mathbf{g}}_i^H]\sum_{j=1,j\neq i}^{N}\mathbb{E}[\tilde{\mathbf{g}}_j\tilde{\mathbf{g}}_j^H]\right)&=\Tr[N(N-1)\mathbf{I}_M]
=(N-1)MN.
\end{align}
Denoting $g_{ip}$ the $p$-th element of $\mathbf{g}_i$, the second component of \eqref{eq:mpm2} becomes
\begin{align}\label{eq:mpm2p2}
  \mathbb{E}\Tr\left(\tilde{\mathbf{g}}_i\tilde{\mathbf{g}}_i^H\tilde{\mathbf{g}}_i\tilde{\mathbf{g}}_i^H\right)&=\mathbb{E}\Tr\left(\tilde{\mathbf{g}}_i^H\tilde{\mathbf{g}}_i\tilde{\mathbf{g}}_i^H\tilde{\mathbf{g}}_i\right)=
    \mathbb{E}\tilde{\mathbf{g}}_i^H\tilde{\mathbf{g}}_i\tilde{\mathbf{g}}_i^H\tilde{\mathbf{g}}_i
  =\mathbb{E}\left[\left(\sum_{p=1}^{M}|g_{ip}|^2\right)\left(\sum_{q=1}^{M}|g_{iq}|^2\right)\right]\nonumber\\
  &=\sum_{p=1}^{M}\sum_{q\neq p}^{M}\mathbb{E}[|g_{ip}|^2]\mathbb{E}[|g_{iq}|^2]+\sum_{p=1}^{M}\mathbb{E}[|g_{ip}|^4]
  =M(M-1)+M\mathbb{E}[|g_{ip}|^4].
\end{align}
Note that $g_{ip}$ is a complex Gaussian random variable with zero mean and unit variance, we have $\mathbb{E}[|g_{ip}|^4]=2$. Therefore, \eqref{eq:mpm2} becomes
\begin{align}\label{eq:mpm2all}
  \int t^2\dif f^{\mathbf{A}}(t)&=\lim_{M\to\infty}\frac{1}{M}[(N-1)MN+NM(M+1)]
  =(M+N)N.
\end{align}
In addition, the entries of $\tilde{\mathbf{f}}$ are also {\it i.i.d} complex Gaussian random variables with zero mean and unit variance, i.e., $\mathbb{E}[|x_{11}|^4]=2$, and \eqref{eq:tracelemmavarcmp} becomes
\begin{equation}\label{eq:vcal}
v=\int t^2\dif f^{\mathbf{A}}(t)=(M+N)N.
\end{equation}

Substituting \eqref{eq:traceAm} and \eqref{eq:vcal} into \eqref{eq:tracelemmaCLTscaled}, the mean and variance of $\tilde{\mathbf{f}}^H\mathbf{\Phi}^H\tilde{\mathbf{G}}^H\tilde{\mathbf{G}}\mathbf{\Phi}\tilde{\mathbf{f}}$ are $MN$ and $(M+N)MN$, respectively. Note that $\mathbf{f}=\sqrt{\beta}_f\tilde{\mathbf{f}}$ and $\mathbf{G}=\sqrt{\beta}_G\tilde{\mathbf{G}}$, the mean and variance of $r=\|\mathbf{G}\mathbf{\Phi}\mathbf{f}\|^2$ are therefore
\begin{equation}\label{eq:mrRayproof}
\mu_r = MN\beta_f\beta_G,\  v_r = (M+N)MN\beta_f^2\beta_G^2.
\end{equation}

\subsection{Proof of Lemma \ref{lem:rdfRi}}\label{apdx:prooflem2}
For ease of notation, let $r'=\|\mathbf{G}'\mathbf{\Phi}\mathbf{f}'\|^2$, we here consider a simpler case where the coefficients, namely, $\bar{\beta}_f$, $\tilde{\beta}_f$, $\bar{\beta}_G$, $\tilde{\beta}_G$ defined in \eqref{eq:betafbt} are ignored, i.e, $\mathbf{f}'=\bar{\mathbf{f}}+\tilde{\mathbf{f}}$, $\mathbf{G}'=\bar{\mathbf{G}}+\tilde{\mathbf{G}}$. As a consequence, $r'$ becomes
\begin{align}
r'=\|(\bar{\mathbf{G}}+\tilde{\mathbf{G}})\mathbf{\Phi}(\bar{\mathbf{f}}+\tilde{\mathbf{f}})\|^2
=\|\bar{\mathbf{G}}\mathbf{\Phi}\bar{\mathbf{f}}+\bar{\mathbf{G}}\mathbf{\Phi}\tilde{\mathbf{f}}+\tilde{\mathbf{G}}\mathbf{\Phi}\bar{\mathbf{f}}
+\tilde{\mathbf{G}}\mathbf{\Phi}\tilde{\mathbf{f}}\|^2.\label{eq:r_simple}
\end{align}
Obviously, the expansion of \eqref{eq:r_simple} has $16$ components as
\begin{align}\label{eq:r_simple_exp}
r&=\bar{\mathbf{f}}^H\mathbf{\Phi}^H\bar{\mathbf{G}}^H\bar{\mathbf{G}}\mathbf{\Phi}\bar{\mathbf{f}}
+\bar{\mathbf{f}}^H\mathbf{\Phi}^H\bar{\mathbf{G}}^H\bar{\mathbf{G}}\mathbf{\Phi}\tilde{\mathbf{f}}
+\bar{\mathbf{f}}^H\mathbf{\Phi}^H\bar{\mathbf{G}}^H\tilde{\mathbf{G}}\mathbf{\Phi}\bar{\mathbf{f}}
+\bar{\mathbf{f}}^H\mathbf{\Phi}^H\bar{\mathbf{G}}^H\tilde{\mathbf{G}}\mathbf{\Phi}\tilde{\mathbf{f}}\nonumber\\
&\quad+\tilde{\mathbf{f}}^H\mathbf{\Phi}^H\bar{\mathbf{G}}^H\bar{\mathbf{G}}\mathbf{\Phi}\bar{\mathbf{f}}
+\tilde{\mathbf{f}}^H\mathbf{\Phi}^H\bar{\mathbf{G}}^H\bar{\mathbf{G}}\mathbf{\Phi}\tilde{\mathbf{f}}
+\tilde{\mathbf{f}}^H\mathbf{\Phi}^H\bar{\mathbf{G}}^H\tilde{\mathbf{G}}\mathbf{\Phi}\bar{\mathbf{f}}
+\tilde{\mathbf{f}}^H\mathbf{\Phi}^H\bar{\mathbf{G}}^H\tilde{\mathbf{G}}\mathbf{\Phi}\tilde{\mathbf{f}}\nonumber\\
&\quad+\bar{\mathbf{f}}^H\mathbf{\Phi}^H\tilde{\mathbf{G}}^H\bar{\mathbf{G}}\mathbf{\Phi}\bar{\mathbf{f}}
+\bar{\mathbf{f}}^H\mathbf{\Phi}^H\tilde{\mathbf{G}}^H\bar{\mathbf{G}}\mathbf{\Phi}\tilde{\mathbf{f}}
+\bar{\mathbf{f}}^H\mathbf{\Phi}^H\tilde{\mathbf{G}}^H\tilde{\mathbf{G}}\mathbf{\Phi}\bar{\mathbf{f}}
+\bar{\mathbf{f}}^H\mathbf{\Phi}^H\tilde{\mathbf{G}}^H\tilde{\mathbf{G}}\mathbf{\Phi}\tilde{\mathbf{f}}\nonumber\\
&\quad+\tilde{\mathbf{f}}^H\mathbf{\Phi}^H\tilde{\mathbf{G}}^H\bar{\mathbf{G}}\mathbf{\Phi}\bar{\mathbf{f}}
+\tilde{\mathbf{f}}^H\mathbf{\Phi}^H\tilde{\mathbf{G}}^H\bar{\mathbf{G}}\mathbf{\Phi}\tilde{\mathbf{f}}
+\tilde{\mathbf{f}}^H\mathbf{\Phi}^H\tilde{\mathbf{G}}^H\tilde{\mathbf{G}}\mathbf{\Phi}\bar{\mathbf{f}}
+\tilde{\mathbf{f}}^H\mathbf{\Phi}^H\tilde{\mathbf{G}}^H\tilde{\mathbf{G}}\mathbf{\Phi}\tilde{\mathbf{f}}.
\end{align}
According to the CLT, we can assume that $r'$ satisfies a Gaussian distribution with mean $\mu_{r'}$ and variance $v_{r'}$. The two parameters can then be calculated with
\begin{align}
  \mu_{r'} = \mathbb{E}[r'],
  v_{r'} = \mathbb{E}[{r'}^2]-\mathbb{E}[r']^2.
\end{align}
Taking the expectation of \eqref{eq:r_simple_exp}, it can be observed that the expectations of $12$ components of \eqref{eq:r_simple_exp} are $0$ since they have unpaired $\tilde{\mathbf{f}}$ or $\tilde{\mathbf{G}}$. Therefore, the remaining non-zero four components are
\begin{align}\label{eq:mrproof}
\mu_{r'}&=\mathbb{E}[\bar{\mathbf{f}}^H\mathbf{\Phi}^H\bar{\mathbf{G}}^H\bar{\mathbf{G}}\mathbf{\Phi}\bar{\mathbf{f}}]
+\mathbb{E}[\tilde{\mathbf{f}}^H\mathbf{\Phi}^H\bar{\mathbf{G}}^H\bar{\mathbf{G}}\mathbf{\Phi}\tilde{\mathbf{f}}]
+\mathbb{E}[\bar{\mathbf{f}}^H\mathbf{\Phi}^H\tilde{\mathbf{G}}^H\tilde{\mathbf{G}}\mathbf{\Phi}\bar{\mathbf{f}}]
+\mathbb{E}[\tilde{\mathbf{f}}^H\mathbf{\Phi}^H\tilde{\mathbf{G}}^H\tilde{\mathbf{G}}\mathbf{\Phi}\tilde{\mathbf{f}}].
\end{align}
Obviously, with the statistical phase shift matrix design in \eqref{eq:optimalPSMLoS}, $\bar{\mathbf{f}}^H\mathbf{\Phi}^H\bar{\mathbf{G}}^H\bar{\mathbf{G}}\mathbf{\Phi}\bar{\mathbf{f}}$ is a constant, i.e.,
\begin{equation}\label{eq:Ebbbb}
\mathbb{E}[\bar{\mathbf{f}}^H\mathbf{\Phi}^H\bar{\mathbf{G}}^H\bar{\mathbf{G}}\mathbf{\Phi}\bar{\mathbf{f}}]=
\bar{\mathbf{f}}^H\mathbf{\Phi}^H\bar{\mathbf{G}}^H\bar{\mathbf{G}}\mathbf{\Phi}\bar{\mathbf{f}}=M^2N.
\end{equation}
With Proposition \ref{prop:tracelemma} and the fact that $$\mathbb{E}\Tr(\tilde{\mathbf{G}}^H\tilde{\mathbf{G}})=\mathbb{E}\Tr(\bar{\mathbf{G}}^H\bar{\mathbf{G}})=MN,$$
$\mathbb{E}[\tilde{\mathbf{f}}^H\mathbf{\Phi}^H\bar{\mathbf{G}}^H\bar{\mathbf{G}}\mathbf{\Phi}\tilde{\mathbf{f}}]$,
$\mathbb{E}[\tilde{\mathbf{f}}^H\mathbf{\Phi}^H\tilde{\mathbf{G}}^H\tilde{\mathbf{G}}\mathbf{\Phi}\tilde{\mathbf{f}}]$ can be calculated by
\begin{align}\label{eq:Etbbt}
  \mathbb{E}[\tilde{\mathbf{f}}^H\mathbf{\Phi}^H\bar{\mathbf{G}}^H\bar{\mathbf{G}}\mathbf{\Phi}\tilde{\mathbf{f}}] &= \mathbb{E}\Tr[\mathbf{\Phi}^H\bar{\mathbf{G}}^H\bar{\mathbf{G}}\mathbf{\Phi}]
  =\mathbb{E}\Tr[\bar{\mathbf{G}}^H\bar{\mathbf{G}}]=MN,\\
  \mathbb{E}[\tilde{\mathbf{f}}^H\mathbf{\Phi}^H\tilde{\mathbf{G}}^H\tilde{\mathbf{G}}\mathbf{\Phi}\tilde{\mathbf{f}}] &= \mathbb{E}\Tr[\mathbf{\Phi}^H\tilde{\mathbf{G}}^H\tilde{\mathbf{G}}\mathbf{\Phi}]
  =\mathbb{E}\Tr[\tilde{\mathbf{G}}^H\tilde{\mathbf{G}}]=MN.
\end{align}
Besides, $\mathbb{E}[\bar{\mathbf{f}}^H\mathbf{\Phi}^H\tilde{\mathbf{G}}^H\tilde{\mathbf{G}}\mathbf{\Phi}\bar{\mathbf{f}}]$ can be computed by
\begin{align}\label{eq:Etttt}
  \mathbb{E}[\bar{\mathbf{f}}^H\mathbf{\Phi}^H\tilde{\mathbf{G}}^H\tilde{\mathbf{G}}\mathbf{\Phi}\bar{\mathbf{f}}]= \mathbb{E}\Tr[\tilde{\mathbf{G}}\mathbf{\Phi}\bar{\mathbf{f}}\bar{\mathbf{f}}^H\mathbf{\Phi}^H\tilde{\mathbf{G}}^H]
  =\mathbb{E}\Tr[\tilde{\mathbf{G}}\tilde{\mathbf{G}}^H]=MN.
\end{align}
Finally, taking the ignored coefficients into consideration, we finally obtain
\begin{equation}\label{eq:mrRiproof}
\mathbb{E}[r]=M^2N\bar{\beta}_f\bar{\beta}_G+MN(\tilde{\beta}_f\bar{\beta}_G+\bar{\beta}_f\tilde{\beta}_G+\tilde{\beta}_f\tilde{\beta}_G).
\end{equation}

The variance of $r$ can be analyzed in a similar way. Firstly, we need to compute the expectations of all the components of $r^2$. Then, we can obtain the variance of $r$ by $\mathbb{E}[(r-\mathbb{E}[r])^2]=\mathbb{E}[r^2]-\mathbb{E}[r]^2$. For ease of notation, we again analyze the expectations of the components of ${r'}^2$ at the beginning. It can be imagined that ${r'}^2$ is a summation of totally $256$ components since $r'$ has $16$ additive components. Therefore, it is quite complex to perform one-by-one analysis for the expectations of all the components. In the following, we will see that the complexity can be substantially reduced with a simple observation from \eqref{eq:r_simple_exp}. With the fact that $\mathbb{E}[\tilde{\mathbf{f}}]=\mathbf{0}$, $\mathbb{E}[\tilde{\mathbf{G}}]=\mathbf{0}$ and $\tilde{\mathbf{f}}$, $\tilde{\mathbf{G}}$ are independent of the others, the expectation of the component containing an odd number of $\tilde{\mathbf{f}}$ or $\tilde{\mathbf{G}}$ is exactly $0$. Thus, among the $256$ additive components of ${r'}^2$, we only need to focus on the square terms of the $16$ components in \eqref{eq:r_simple_exp}, and the cross product terms in which the number of $\tilde{\mathbf{f}}$ and that of $\tilde{\mathbf{G}}$ are both even. For example, $(\bar{\mathbf{f}}^H\mathbf{\Phi}^H\bar{\mathbf{G}}^H\bar{\mathbf{G}}\mathbf{\Phi}\tilde{\mathbf{f}})^{\dagger}
\bar{\mathbf{f}}^H\mathbf{\Phi}^H\tilde{\mathbf{G}}^H\tilde{\mathbf{G}}\mathbf{\Phi}\tilde{\mathbf{f}}$ contains two $\tilde{\mathbf{f}}$'s and two $\tilde{\mathbf{G}}$'s and therefore its expectation may not be $0$. Besides, its complex conjugate, i.e., $(\bar{\mathbf{f}}^H\mathbf{\Phi}^H\tilde{\mathbf{G}}^H\tilde{\mathbf{G}}\mathbf{\Phi}\tilde{\mathbf{f}})^{\dagger}
\bar{\mathbf{f}}^H\mathbf{\Phi}^H\bar{\mathbf{G}}^H\bar{\mathbf{G}}\mathbf{\Phi}\tilde{\mathbf{f}}$, which is also a component of ${r'}^2$, has a possibly non-zero expectation. Following this train of thought, the number of components with possibly non-zero expectations can be reduced to $64$, namely, $16$ square terms of the components of \eqref{eq:r_simple_exp} and $24$ pairs of components with possibly non-zero expectations. To further reduce the number of components with possibly non-zero expectation of ${r'}^2$, we need to establish the following proposition firstly.

\begin{prop}\label{prop:further_zero_comp}
Let $\mathbf{x}$ be an $N$-dimensional random vector whose entries are {\it i.i.d.} complex Gaussian random variables with zero mean and variance $2\sigma^2$, i.e., $\mathbf{x}\sim\mathcal{CN}(0, 2\sigma^2\mathbf{I}_N)$, $\mathbf{a}$, $\mathbf{b}$ be two arbitrary $N$-dimensional vectors, which are independent of $\mathbf{x}$, then we have
\begin{equation}\label{eq:2ndmoment0}
\mathbb{E}[\mathbf{x}^H\mathbf{a}\mathbf{x}^H\mathbf{b}]=0,\ {\rm or}\ \mathbb{E}[\mathbf{a}^H\mathbf{x}\mathbf{b}^H\mathbf{x}]=0.
\end{equation}
\end{prop}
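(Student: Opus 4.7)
The plan is to exploit the circular symmetry of $\mathbf{x}$. For a circularly symmetric complex Gaussian scalar $x_i\sim\mathcal{CN}(0,2\sigma^2)$, the pseudo-variance vanishes, i.e., $\mathbb{E}[x_i^2]=0$, and by the mutual independence of the entries this extends to the pseudo-covariance identity $\mathbb{E}[x_ix_j]=0$ for every pair $(i,j)$, including $i=j$. Conjugating gives $\mathbb{E}[x_i^{\dagger}x_j^{\dagger}]=0$ as well. This single fact is the entire engine of the proposition.

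First, I would use the independence of $\mathbf{x}$ from $(\mathbf{a},\mathbf{b})$ together with the tower property to reduce to computing the inner expectation with $\mathbf{a},\mathbf{b}$ held fixed. Second, I would expand the bilinear form in coordinates as
\begin{equation*}
\mathbf{x}^H\mathbf{a}\,\mathbf{x}^H\mathbf{b}=\sum_{i=1}^{N}\sum_{j=1}^{N} a_i b_j\, x_i^{\dagger}x_j^{\dagger},
\end{equation*}
interchange the finite sum with the expectation, and apply $\mathbb{E}[x_i^{\dagger}x_j^{\dagger}]=0$ termwise. Taking the outer expectation over $(\mathbf{a},\mathbf{b})$ then yields the first identity. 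The second identity follows by the symmetric argument using $\mathbb{E}[x_ix_j]=0$, or equivalently by observing that $\mathbf{a}^H\mathbf{x}\,\mathbf{b}^H\mathbf{x}$ is, up to relabeling of $\mathbf{a}$ and $\mathbf{b}$, the complex conjugate of $\mathbf{x}^H\mathbf{a}\,\mathbf{x}^H\mathbf{b}$.

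There is no real obstacle here — the proposition is a bookkeeping lemma whose sole content is the pseudo-covariance identity. The one point worth flagging is that the \emph{complex} Gaussianity is essential: for a real Gaussian vector one has $\mathbb{E}[x_i^2]=2\sigma^2\neq 0$ and the conclusion would fail on the diagonal terms. Hence, when this proposition is later invoked in Appendix B to eliminate additional cross-product terms in the expansion of ${r'}^2$, the argument rests on the fact that $\tilde{\mathbf{f}}$ and the columns of $\tilde{\mathbf{G}}$ are circularly symmetric complex Gaussian, which is inherited from the Rician decomposition in \eqref{eq:ricianf}.
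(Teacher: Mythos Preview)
Your proposal is correct and follows essentially the same approach as the paper: both arguments reduce to the vanishing pseudo-covariance $\mathbb{E}[x_ix_j]=0$ (the paper verifies $\mathbb{E}[x_m^2]=0$ via the real/imaginary decomposition, which is precisely circular symmetry) and then expand the bilinear form as a double sum. Your treatment is slightly more careful in invoking the tower property to handle random $\mathbf{a},\mathbf{b}$, which the paper glosses over, but the core argument is identical.
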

\begin{proof}
Proposition \ref{prop:further_zero_comp} can be proved with the fact that $\mathbb{E}[x_m^2]=0$, where $x_m$ denotes the $m$-th entry of $\mathbf{x}$. Let $x_m=p+qi$, where $p$, $q$ denote the real part and imaginary part of $x_m$, respectively. Obviously, $p$, $q$ are two {\it i.i.d.} real Gaussian random variables with zero-mean and variance $\sigma^2$. Therefore, we have
\begin{align}\label{eq:Ex2}
\mathbb{E}[x_m^2]&=\mathbb{E}[(p+qi)^2]
=\mathbb{E}[p^2]-\mathbb{E}[q^2]+2\mathbb{E}[pq]i=0.
\end{align}
With $\mathbb{E}[x_m]=0$ and $\mathbb{E}[x_m^2]=0$, we have
\begin{align}\label{eq:axbx}
\mathbb{E}[\mathbf{a}^H\mathbf{x}\mathbf{b}^H\mathbf{x}]&=\mathbb{E}\left[\sum_{m=1}^{N}a_m^{\dagger}x_m\sum_{n=1}^{N}b_n^{\dagger}x_n\right]\nonumber\\
&=\sum_{m=1}^{N}\sum_{\substack{n=1\\n\neq m}}^{N}a_m^{\dagger}b_n^{\dagger}\mathbb{E}[x_m]\mathbb{E}[x_n]+\sum_{m=1}^{N}a_m^{\dagger}b_m^{\dagger}\mathbb{E}[x_m^2]
=0.
\end{align}
Besides, $\mathbb{E}[\mathbf{x}^H\mathbf{a}\mathbf{x}^H\mathbf{b}]=0$ can be proved in the same way.
\end{proof}

Actually, Proposition \ref{prop:further_zero_comp} provides us more useful tricks to identify the components with zero expectation of ${r'}^2$. For example, $(\bar{\mathbf{f}}^H\mathbf{\Phi}^H\bar{\mathbf{G}}^H\bar{\mathbf{G}}\mathbf{\Phi}\tilde{\mathbf{f}})^{\dagger}
\tilde{\mathbf{f}}^H\mathbf{\Phi}^H\tilde{\mathbf{G}}^H\tilde{\mathbf{G}}\mathbf{\Phi}\bar{\mathbf{f}}$ can be rewritten as $\tilde{\mathbf{f}}^H\mathbf{a}\tilde{\mathbf{f}}^H\mathbf{b}$ with $\mathbf{a}=\mathbf{\Phi}^H\bar{\mathbf{G}}^H\bar{\mathbf{G}}\mathbf{\Phi}\bar{\mathbf{f}}$ and $\mathbf{b}=\mathbf{\Phi}^H\tilde{\mathbf{G}}^H\tilde{\mathbf{G}}\mathbf{\Phi}\bar{\mathbf{f}}$. With Proposition \ref{prop:further_zero_comp}, its expectation is obviously zero. With this observation, the number of the components with non-zero expectation can be reduced to $34$, i.e., $16$ square terms of the components of \eqref{eq:r_simple_exp} and $9$ pairs of the components with possibly non-zero expectations. So far, we have to analyze the expectation of each of these components. The non-zero expectations of these components are summarized in Table \ref{tab:componentexpectations}. Specifically, $C_{1}, \cdots, C_{16}$ are exactly the $16$ square terms of the components of \eqref{eq:r_simple_exp}. It is worth noting that, for each component of $C_{17}, \cdots, C_{25}$, the expectation of itself and that of its complex conjugate are the same. Therefore, we only list one of each pair of the components with non-zero expectations in Table \ref{tab:componentexpectations}.
\begin{table*}[!t]
\centering
\caption{Expectations of the non-zero components of ${r'}^2$}
\label{tab:componentexpectations}
\renewcommand{\arraystretch}{1}
\resizebox{.9\textwidth }{!}{
\begin{tabular}{c|c|c|c|c|c}
  \hline
  Index & Component, $C$ & $\mathbb{E}[C]$ & Index & Component, $C$ & $\mathbb{E}[C]$ \\\hline
  $C_1$ & $(\bar{\mathbf{f}}^H\mathbf{\Phi}^H\bar{\mathbf{G}}^H\bar{\mathbf{G}}\mathbf{\Phi}\bar{\mathbf{f}})^{\dagger}
  \bar{\mathbf{f}}^H\mathbf{\Phi}^H\bar{\mathbf{G}}^H\bar{\mathbf{G}}\mathbf{\Phi}\bar{\mathbf{f}}$ & $M^4N^2$ &

  $C_2$ & $(\bar{\mathbf{f}}^H\mathbf{\Phi}^H\bar{\mathbf{G}}^H\bar{\mathbf{G}}\mathbf{\Phi}\tilde{\mathbf{f}})^{\dagger}
  \bar{\mathbf{f}}^H\mathbf{\Phi}^H\bar{\mathbf{G}}^H\bar{\mathbf{G}}\mathbf{\Phi}\tilde{\mathbf{f}}$ & $M^3N^2$ \\\hline

  $C_3$ & $(\bar{\mathbf{f}}^H\mathbf{\Phi}^H\bar{\mathbf{G}}^H\tilde{\mathbf{G}}\mathbf{\Phi}\bar{\mathbf{f}})^{\dagger}
  \bar{\mathbf{f}}^H\mathbf{\Phi}^H\bar{\mathbf{G}}^H\tilde{\mathbf{G}}\mathbf{\Phi}\bar{\mathbf{f}}$ & $M^3N$ &

  $C_4$ & $(\bar{\mathbf{f}}^H\mathbf{\Phi}^H\bar{\mathbf{G}}^H\tilde{\mathbf{G}}\mathbf{\Phi}\tilde{\mathbf{f}})^{\dagger}
  \bar{\mathbf{f}}^H\mathbf{\Phi}^H\bar{\mathbf{G}}^H\tilde{\mathbf{G}}\mathbf{\Phi}\tilde{\mathbf{f}}$ & $M^3N$ \\\hline

  $C_5$ & $(\tilde{\mathbf{f}}^H\mathbf{\Phi}^H\bar{\mathbf{G}}^H\bar{\mathbf{G}}\mathbf{\Phi}\bar{\mathbf{f}})^{\dagger}
  \tilde{\mathbf{f}}^H\mathbf{\Phi}^H\bar{\mathbf{G}}^H\bar{\mathbf{G}}\mathbf{\Phi}\bar{\mathbf{f}}$ & $M^3N^2$ &

  $C_6$ & $(\tilde{\mathbf{f}}^H\mathbf{\Phi}^H\bar{\mathbf{G}}^H\bar{\mathbf{G}}\mathbf{\Phi}\tilde{\mathbf{f}})^{\dagger}
  \tilde{\mathbf{f}}^H\mathbf{\Phi}^H\bar{\mathbf{G}}^H\bar{\mathbf{G}}\mathbf{\Phi}\tilde{\mathbf{f}}$ & $2M^2N^2$ \\\hline

  $C_7$ & $(\tilde{\mathbf{f}}^H\mathbf{\Phi}^H\bar{\mathbf{G}}^H\tilde{\mathbf{G}}\mathbf{\Phi}\bar{\mathbf{f}})^{\dagger}
  \tilde{\mathbf{f}}^H\mathbf{\Phi}^H\bar{\mathbf{G}}^H\tilde{\mathbf{G}}\mathbf{\Phi}\bar{\mathbf{f}}$ & $M^2N$ &

  $C_8$ & $(\tilde{\mathbf{f}}^H\mathbf{\Phi}^H\bar{\mathbf{G}}^H\tilde{\mathbf{G}}\mathbf{\Phi}\tilde{\mathbf{f}})^{\dagger}
  \tilde{\mathbf{f}}^H\mathbf{\Phi}^H\bar{\mathbf{G}}^H\tilde{\mathbf{G}}\mathbf{\Phi}\tilde{\mathbf{f}}$ & $M^2N$ \\\hline

  $C_9$ & $(\bar{\mathbf{f}}^H\mathbf{\Phi}^H\tilde{\mathbf{G}}^H\bar{\mathbf{G}}\mathbf{\Phi}\bar{\mathbf{f}})^{\dagger}
  \bar{\mathbf{f}}^H\mathbf{\Phi}^H\tilde{\mathbf{G}}^H\bar{\mathbf{G}}\mathbf{\Phi}\bar{\mathbf{f}}$ & $M^3N$ &

  $C_{10}$ & $(\bar{\mathbf{f}}^H\mathbf{\Phi}^H\tilde{\mathbf{G}}^H\bar{\mathbf{G}}\mathbf{\Phi}\tilde{\mathbf{f}})^{\dagger}
  \bar{\mathbf{f}}^H\mathbf{\Phi}^H\tilde{\mathbf{G}}^H\bar{\mathbf{G}}\mathbf{\Phi}\tilde{\mathbf{f}}$ & $M^2N$ \\\hline

  $C_{11}$ & $(\bar{\mathbf{f}}^H\mathbf{\Phi}^H\tilde{\mathbf{G}}^H\tilde{\mathbf{G}}\mathbf{\Phi}\bar{\mathbf{f}})^{\dagger}
  \bar{\mathbf{f}}^H\mathbf{\Phi}^H\tilde{\mathbf{G}}^H\tilde{\mathbf{G}}\mathbf{\Phi}\bar{\mathbf{f}}$ & $M^2N^2$ &

  $C_{12}$ & $(\bar{\mathbf{f}}^H\mathbf{\Phi}^H\tilde{\mathbf{G}}^H\tilde{\mathbf{G}}\mathbf{\Phi}\tilde{\mathbf{f}})^{\dagger}
  \bar{\mathbf{f}}^H\mathbf{\Phi}^H\tilde{\mathbf{G}}^H\tilde{\mathbf{G}}\mathbf{\Phi}\tilde{\mathbf{f}}$ & $(M+N)MN$ \\\hline

  $C_{13}$ & $(\tilde{\mathbf{f}}^H\mathbf{\Phi}^H\tilde{\mathbf{G}}^H\bar{\mathbf{G}}\mathbf{\Phi}\bar{\mathbf{f}})^{\dagger}
  \tilde{\mathbf{f}}^H\mathbf{\Phi}^H\tilde{\mathbf{G}}^H\bar{\mathbf{G}}\mathbf{\Phi}\bar{\mathbf{f}}$ & $M^3N$ &

  $C_{14}$ & $(\tilde{\mathbf{f}}^H\mathbf{\Phi}^H\tilde{\mathbf{G}}^H\bar{\mathbf{G}}\mathbf{\Phi}\tilde{\mathbf{f}})^{\dagger}
  \tilde{\mathbf{f}}^H\mathbf{\Phi}^H\tilde{\mathbf{G}}^H\bar{\mathbf{G}}\mathbf{\Phi}\tilde{\mathbf{f}}$ & $M^2N$ \\\hline

  $C_{15}$ & $(\tilde{\mathbf{f}}^H\mathbf{\Phi}^H\tilde{\mathbf{G}}^H\tilde{\mathbf{G}}\mathbf{\Phi}\bar{\mathbf{f}})^{\dagger}
  \tilde{\mathbf{f}}^H\mathbf{\Phi}^H\tilde{\mathbf{G}}^H\tilde{\mathbf{G}}\mathbf{\Phi}\bar{\mathbf{f}}$ & $(M+N)MN$ &

  $C_{16}$ & $(\tilde{\mathbf{f}}^H\mathbf{\Phi}^H\tilde{\mathbf{G}}^H\tilde{\mathbf{G}}\mathbf{\Phi}\tilde{\mathbf{f}})^{\dagger}
  \tilde{\mathbf{f}}^H\mathbf{\Phi}^H\tilde{\mathbf{G}}^H\tilde{\mathbf{G}}\mathbf{\Phi}\tilde{\mathbf{f}}$ & $(M+N+MN)MN$ \\\hline

  $C_{17}$ & $(\bar{\mathbf{f}}^H\mathbf{\Phi}^H\bar{\mathbf{G}}^H\bar{\mathbf{G}}\mathbf{\Phi}\bar{\mathbf{f}})^{\dagger}
  \tilde{\mathbf{f}}^H\mathbf{\Phi}^H\bar{\mathbf{G}}^H\bar{\mathbf{G}}\mathbf{\Phi}\tilde{\mathbf{f}}$ & $M^3N^2$ &

  $C_{18}$ & $(\bar{\mathbf{f}}^H\mathbf{\Phi}^H\bar{\mathbf{G}}^H\bar{\mathbf{G}}\mathbf{\Phi}\bar{\mathbf{f}})^{\dagger}
  \bar{\mathbf{f}}^H\mathbf{\Phi}^H\tilde{\mathbf{G}}^H\tilde{\mathbf{G}}\mathbf{\Phi}\bar{\mathbf{f}}$ & $M^3N^2$ \\\hline

  $C_{19}$ & $(\bar{\mathbf{f}}^H\mathbf{\Phi}^H\bar{\mathbf{G}}^H\bar{\mathbf{G}}\mathbf{\Phi}\bar{\mathbf{f}})^{\dagger}
  \tilde{\mathbf{f}}^H\mathbf{\Phi}^H\tilde{\mathbf{G}}^H\tilde{\mathbf{G}}\mathbf{\Phi}\tilde{\mathbf{f}}$ & $M^3N^2$ &

  $C_{20}$ & $(\tilde{\mathbf{f}}^H\mathbf{\Phi}^H\bar{\mathbf{G}}^H\bar{\mathbf{G}}\mathbf{\Phi}\tilde{\mathbf{f}})^{\dagger}
  \bar{\mathbf{f}}^H\mathbf{\Phi}^H\tilde{\mathbf{G}}^H\tilde{\mathbf{G}}\mathbf{\Phi}\bar{\mathbf{f}}$ & $M^2N^2$ \\\hline

  $C_{21}$ & $(\tilde{\mathbf{f}}^H\mathbf{\Phi}^H\bar{\mathbf{G}}^H\bar{\mathbf{G}}\mathbf{\Phi}\tilde{\mathbf{f}})^{\dagger}
  \tilde{\mathbf{f}}^H\mathbf{\Phi}^H\tilde{\mathbf{G}}^H\tilde{\mathbf{G}}\mathbf{\Phi}\tilde{\mathbf{f}}$ & $M^2N^2$ &

  $C_{22}$ & $(\bar{\mathbf{f}}^H\mathbf{\Phi}^H\tilde{\mathbf{G}}^H\tilde{\mathbf{G}}\mathbf{\Phi}\bar{\mathbf{f}})^{\dagger}
  \tilde{\mathbf{f}}^H\mathbf{\Phi}^H\tilde{\mathbf{G}}^H\tilde{\mathbf{G}}\mathbf{\Phi}\tilde{\mathbf{f}}$ & $M^2N^2$ \\\hline

  $C_{23}$ & $(\bar{\mathbf{f}}^H\mathbf{\Phi}^H\bar{\mathbf{G}}^H\bar{\mathbf{G}}\mathbf{\Phi}\tilde{\mathbf{f}})^{\dagger}
  \bar{\mathbf{f}}^H\mathbf{\Phi}^H\tilde{\mathbf{G}}^H\tilde{\mathbf{G}}\mathbf{\Phi}\tilde{\mathbf{f}}$ & $M^2N^2$ &

  $C_{24}$ & $(\tilde{\mathbf{f}}^H\mathbf{\Phi}^H\bar{\mathbf{G}}^H\bar{\mathbf{G}}\mathbf{\Phi}\bar{\mathbf{f}})^{\dagger}
  \tilde{\mathbf{f}}^H\mathbf{\Phi}^H\tilde{\mathbf{G}}^H\tilde{\mathbf{G}}\mathbf{\Phi}\bar{\mathbf{f}}$ & $M^2N^2$ \\\hline

  $C_{25}$ & $(\bar{\mathbf{f}}^H\mathbf{\Phi}^H\tilde{\mathbf{G}}^H\bar{\mathbf{G}}\mathbf{\Phi}\bar{\mathbf{f}})^{\dagger}
  \tilde{\mathbf{f}}^H\mathbf{\Phi}^H\tilde{\mathbf{G}}^H\bar{\mathbf{G}}\mathbf{\Phi}\tilde{\mathbf{f}}$ & $M^2N$ &

  --&--&--\\\hline

\end{tabular}
}
\end{table*}

Most results in Table \ref{tab:componentexpectations} can be directly derived with Lemma \ref{lem:rdfRay} and the cyclic property of the trace of a product, i.e.,
\begin{equation}\label{eq:cyclicproperty}
\Tr[\mathbf{ABC}]=\Tr[\mathbf{BCA}]=\Tr[\mathbf{CAB}],
\end{equation}
where $\mathbf{A}$, $\mathbf{B}$, $\mathbf{C}$ are three arbitrary matrices.
In particular, to derive the expectations of $C_{6}$ and $C_{16}$, we also requires the following formula:
\begin{equation}\label{eq:varm1m2}
\mathbb{E}[(x-\mathbb{E}[x])^2]=\mathbb{E}[x^2]-\mathbb{E}[x]^2,
\end{equation}
where $x$ is a real random variable with finite second order moment. We take $C_{16}$ as a example to show the details. With Lemma \ref{lem:rdfRay}, $\tilde{\mathbf{f}}^H\mathbf{\Phi}^H\tilde{\mathbf{G}}^H\tilde{\mathbf{G}}\mathbf{\Phi}\tilde{\mathbf{f}}$ converges to a real Gaussian distribution of mean $MN$ and variance $(M+N)MN$. Then, with \eqref{eq:varm1m2}, we have
\begin{align}\label{eq:exp16}
&\mathbb{E}[(\tilde{\mathbf{f}}^H\mathbf{\Phi}^H\tilde{\mathbf{G}}^H\tilde{\mathbf{G}}\mathbf{\Phi}\tilde{\mathbf{f}})^{\dagger}
\tilde{\mathbf{f}}^H\mathbf{\Phi}^H\tilde{\mathbf{G}}^H\tilde{\mathbf{G}}\mathbf{\Phi}\tilde{\mathbf{f}}]
=\mathbb{E}[(\tilde{\mathbf{f}}^H\mathbf{\Phi}^H\tilde{\mathbf{G}}^H\tilde{\mathbf{G}}\mathbf{\Phi}\tilde{\mathbf{f}})^{2}]\nonumber\\
&=(MN)^2+(M+N)MN
=(M+N+MN)MN.
\end{align}


Finally, we take the ignored coefficients, i.e., $\bar{\beta}_f$, $\tilde{\beta}_f$, $\bar{\beta}_G$, $\tilde{\beta}_G$, into consideration to obtain $\mathbb{E}[r^2]$. Denoting $\beta_i$ the coefficient of $C_{i}(i=1, \cdots, 25)$ in $r^2$, $\mathbb{E}[r^2]$ can be written as
\begin{equation}\label{eq:Er2sum}
\mathbb{E}[r^2]=\sum_{i=1}^{16}\beta_i\mathbb{E}[C_{i}]+2\sum_{i=17}^{25}\beta_i\mathbb{E}[C_{i}],
\end{equation}
and the result is exactly
\begin{align}\label{eq:Er2}
\mathbb{E}[r^2]&=M^4N^2\bar{\beta}_f\bar{\beta}_G+
4M^3N^2\bar{\beta}_f\tilde{\beta}_f\bar{\beta}_G+
(2M^3N+2M^3N^2)\bar{\beta}_f\bar{\beta}_G\tilde{\beta}_G+
(M^2N^2)\bar{\beta}_f^2\tilde{\beta}_G^2\nonumber\\
&\quad+(M+N+MN)MN\tilde{\beta}_f^2\tilde{\beta}_G^2
+(2M^3N^2+6M^2N^2+2M^3N+4M^2N)\bar{\beta}_f\tilde{\beta}_f\bar{\beta}_G\tilde{\beta}_G\nonumber\\
&\quad+2M^2N^2\tilde{\beta}_f^2\bar{\beta}_G^2+
(2M^2N^2+2M^2N)\tilde{\beta}_f^2\bar{\beta}_G\tilde{\beta}_G.
\end{align}
The variance of $r$ can then be calculated by
\begin{align}\label{eq:vrRiproof}
v_r^{\rm Ri} = \mathbb{E}[(r-\mathbb{E}[r])^2]=\mathbb{E}[r^2]-\mathbb{E}[r]^2,
\end{align}
where $\mathbb{E}[r^2]$, $\mathbb{E}[r]$ are defined in \eqref{eq:mrRiproof} and \eqref{eq:Er2}, respectively.
\bibliographystyle{IEEEtran}
\bibliography{IEEEabrv, ref_RIS4SS}

\end{document}